\providecommand{\U}[1]{\protect\rule{.1in}{.1in}}
\newtheorem{Theorem}{Theorem}
\newtheorem{Corollary}{Corollary}
\newtheorem{Example}{Example}
\newtheorem{Lemma}{Lemma}
\begin{document}

\title{Fine Asymptotics for Universal One-to-One Compression of Parametric Sources}
\author{Nematollah Iri and Oliver Kosut
\thanks{N. Iri and O. Kosut are with the School of Electrical, Computer and Energy Engineering, Arizona State University (Email:  niri1@asu.edu , okosut@asu.edu).   } \thanks{This paper was presented in part at the International Symposia on Information Theory in 2016 \cite{nemat2}.}\thanks{This material is based upon work supported by the National Science Foundation under Grant No. CCF-1422358.}}

\maketitle

\begin{abstract}
Universal source coding at short blocklengths is considered for an exponential family of distributions. The \emph{Type Size} code has previously been shown to be optimal up to the third-order rate for universal compression of all memoryless sources over finite alphabets. The Type Size code assigns sequences ordered based on their type class sizes to binary strings ordered lexicographically. To generalize this type class approach for parametric sources, a natural scheme is to define two sequences to be in the same type class if and only if they are equiprobable under any model in the parametric class. This natural approach, however, is shown to be suboptimal. A variation of the Type Size code is introduced, where type classes are defined based on neighborhoods of minimal sufficient statistics. Asymptotics of the overflow rate of this variation are derived and a converse result establishes its optimality up to the third-order term. These results are derived for parametric families of $i.i.d.$ sources as well as Markov sources.
\end{abstract}

\section{Introduction}
\IEEEPARstart{I}{n} the traditional source coding doctrine, performance of algorithms are characterized in the limit of large blocklengths. In some modern applications, however, data is continuously generated and updated, making them highly delay-sensitive. Therefore, it is vital to characterize the overheads associated with operation in the short blocklength regime.

To evaluate the performance of source coding for blocklengths at which the law of large numbers does not apply, we need a more refined metric than expected length. Thus, we use $\epsilon$-coding rate, the minimum rate such that the corresponding overflow probability is less than $\epsilon$. Fundamental limits of $\epsilon$-coding rate for fixed-to-variable lossless data compression in the non-universal setup are derived in \cite{verdulossless}, both for $i.i.d.$ as well as Markov sources. In most applications, however, the statistics of the source are unknown or arduous to estimate, especially at short blocklengths, where we have constraints on the available data for the inference task. In the universal setup, a class of models is given, however the true model in the class that generates the data is unknown. From an algorithmic angle, the aim of universal source coding is to propose a compression algorithm in which the encoding process is ignorant of the underlying unknown parameters, yet achieving the performance criteria.

Analysis of the finite blocklength behavior as well as fine asymptotics of universal source coding have been considered in \cite{oliver,oliver2,oliverArxiv,tanAsyEs} for the class of $i.i.d.$ sources, and in \cite{nemat} for the class of Markov sources. Similar to the aforementioned works, the universal source coding scheme in this paper compresses the whole file, so we relax the prefix condition \cite{szpan}. Imposing the prefix free condition, the $\epsilon$-coding rate of the Two Stage code \cite{oliver2,oliverArxiv} and that of the Bayes code \cite{saito,saitoISIT} are also considered in the literature.

The Type Size code (TS code) is introduced in \cite{oliver} for compression of the class of \emph{all} stationary memoryless sources, in which sequences are encoded in increasing order of type class size. It is shown that the resulting third-order term is $\frac{|\mathcal{X}|-3}{2}\log{n}$ bits, where $|\mathcal{X}|$ is the alphabet size. Its optimality is shown in \cite{oliver2}. Subsequently, a converse bound is derived in \cite{fekri} for one-to-one average minimax (and maximin) redundancy of memoryless sources, which consequently shows that the TS code is optimal up to $o(\log n)$ for universal one-to-one compression of \emph{all} memoryless sources, considering expected length as the performance metric \cite{fekri}. However, an achievable scheme for universal one-to-one compression of parametric sources with more \emph{structure} is not provided. Departing from average case analysis, we consider $\epsilon$-coding rate as the performance metric and provide an achievable scheme for compressing exponential families of distributions as the parametric class. Moreover, we provide a converse result, showing that our proposed scheme is optimal up to the third-order coding rate.

Type classes in \cite{oliver,nemat,fekri} are based on the empirical probability mass function (EPMF). In particular, two sequences are in the same (elementary) type class if they have the same EPMF. Elementary type classes do not exploit the inherited structure in the model class. To generalize the notion of a type to richer model classes, we define the \emph{point} type class as the set of sequences equiprobable under any model in the class. The size of the point type class structure is analyzed in \cite{merhav}. This natural characterization of type classes is based on the philosophy that the sequences with the same probability (under any model in the class) are ``\emph{indistinguishable}''. Such a philosophy has been employed before in the relevant applications, e.g. the universal simulation \cite{merhav} and the universal random number generation \cite{gadiel} problems. Perhaps surprisingly, we show that this natural approach is suboptimal for the universal source coding problem. In this paper, we characterize the structure of the type classes in a new fashion for the sake of optimally compressing exponential families of distributions. We refer to this new approach as \emph{quantized} types. We divide the convex hull of the set of minimal sufficient statistics into cuboids. Two sequences are in the same quantized type class if their minimal sufficient statistics belong to the same cuboid. Therefore, we show that \emph{approximate} indistinguishability leads to optimality for the source coding problem.

We consider fixed-to-variable length codes for a $d$-dimensional exponential family of distributions over a finite alphabet $\mathcal{X}$. For ease of exposition, we first assume, data generated by the unknown true model in this family is independent and identically distributed ($i.i.d.$). We subsequently extend the results to Markov data generation mechanisms. We provide performance guarantees for the Type Size code for these model classes. Using the Type Size code, we show that the minimal number of bits required to compress a length-$n$ sequence with probability $1-\epsilon$ is at most
\begin{equation}
\label{introResEq}
nH+\sigma\sqrt{n}Q^{-1}\left(\epsilon\right)+\left(\frac{d}{2}-1\right)\log{n}+\mathcal{O}\left(1\right)
\end{equation}
where $H$ and $\sigma^2$ are the entropy and varentropy of the underlying source respectively, $Q(\cdot)$ is the tail of the standard normal distribution and $d$ is the dimension of the model class. The first two terms in (\ref{introResEq}) are the same as the non-universal case \cite{verdulossless}, while the third-order $\log{n}$ term represents the cost of universality; for comparison, in the non-universal case the third-order term is $-\frac{1}{2}\log{n}$ \cite{verdulossless}. Precise bounds on the fourth-order $\mathcal{O}(1)$ term is beyond the scope of this paper. However, analyzing the fourth-order term is considered in the literature for the related source coding problems. For example, it is shown in \cite{szpan2008} that the fourth-order term is either a constant or has fluctuating behavior for average codelength of a binary memoryless source.

The rest of the paper is organized as follows. We introduce the exponential family, the finite-length lossless source coding problem and related definitions in Section \ref{sec::prelim}. In Section \ref{sec::TSC}, we describe quantized type classes and the variation of the TS code used in this paper. In Section \ref{sec::mainThm}, we present the main theorem of the paper, which characterizes the performance of the TS code using quantized type classes up to third order. We present preliminary results including a lemma bounding the size of a type class in Section \ref{sec::preRes}. We provide the proof of main theorem in Section \ref{sec::proofMain}. Extensions to the Markov case is considered in Section \ref{sec::ParMrk}. We show the suboptimality of the approach based on point type classes in Section \ref{sec::AltApprch}. We conclude in Section \ref{sec::conclusion}. A number of proofs are given in the appendices.

\section{Problem Statement}
\label{sec::prelim}
Let $\Theta$ be a compact subset of $\mathbb{R}^d$ with non-empty interior. Probability distributions in an exponential family can be expressed in the form \cite{merhav}
\begin{equation}
\label{pThetaEq}
p_{\theta}(x)=2^{\left\langle\theta,\boldsymbol{\tau}(x)\right\rangle - \psi(\theta)}
\end{equation}
where $\theta\in\Theta$ is the $d$-dimensional parameter vector, $\boldsymbol{\tau}(x): \mathcal{X}\rightarrow \mathbb{R}^d$ --- the crux of our parametric approach --- is the vector of sufficient statistics and $\psi(\theta)$ is the normalizing factor. Let the model class $\mathcal{P}=\left\{p_{\theta},\theta\in\Theta\right\}$, be the exponential family of distributions over the finite alphabet $\mathcal{X}=\left\{1,\cdots,|\mathcal{X}|\right\}$, parameterized by $\theta\in\Theta\subset \mathbb{R}^d$, where $d$ is the degrees of freedom in the minimal description of $p_{\theta}\in\mathcal{P}$ in the sense that no smaller dimensional family can capture the same model class. The degrees of freedom turns out to characterize the richness of the model class in our context. Compactness of $\Theta$ implies existence of a constant upper bound $\wp$ on the norm of the parameter vectors, namely $\|\theta\|\leq \wp$ for all $\theta\in\Theta$. We denote the (unknown) true model in force as $p_{\theta^*}$. $\mathbb{P}_{\theta}$, $\mathbb{E}_{\theta}$ and $\mathbb{V}_{\theta}$ denote probability, expectation and variance with respect to $p_{\theta}$, respectively. All logarithms are in base 2. Instead of introducing different indices for every new constant $C_1,C_2,...$, the same letter $C$ is used to denote different constants whose precise values are irrelevant.

From (\ref{pThetaEq}), the probability of a sequence $x^n$ drawn $i.i.d.$ from a model $p_{\theta}$ in the exponential family takes the form \cite{merhav}
\begin{align}
p_{\theta}(x^n) &=\prod_{i=1}^{n}p_{\theta}(x_i) \nonumber \\
                &= \prod_{i=1}^{n}2^{\big\langle\theta,{\boldsymbol{\tau}}(x_i)\big\rangle-\psi(\theta)} \nonumber \\
                &=2^{\left\{n\big[\big\langle\theta,\boldsymbol{\tau}(x^n)\big\rangle-\psi(\theta)\big]\right\}}\label{pNdimEq}
\end{align}
where
\begin{equation}
\label{tauXnDef}
\boldsymbol{\tau}(x^n)=\frac{\sum_{i=1}^{n}{\boldsymbol{\tau}(x_i)}}{n} \in \mathbb{R}^d
\end{equation}
is a minimal sufficient statistic \cite{merhav}. Note that $\boldsymbol{\tau}(x)$ and $\boldsymbol{\tau}(x^n)$ are distinguished based upon their arguments.

We consider a fixed-to-variable code that encodes an $n$-length sequence from the parametric source to a variable-length bit string via a coding function
\begin{equation*}
\phi:\mathcal{X}^n\rightarrow \{0,1\}^*=\{\emptyset,0,1,00,01,10,11,000,\cdots\}.
\end{equation*}
We do not make the assumption that the code is prefix-free. Let $l(\phi(x^n))$ be the number of bits in the compressed binary string when $x^n$ is the source sequence. We gauge the performance of algorithms through the $\epsilon$-coding rate at blocklength $n$ given by
\begin{equation*}
R_n(\epsilon,\phi, p_{\theta^*}):=\min\left\{\frac{k}{n}: \mathbb{P}_{\theta^*}\Big[l(\phi(X^n))\geq k\Big]\leq \epsilon\right\}.
\end{equation*}

\section{Type Size Code}
\label{sec::TSC}
For the class of all memoryless sources over a finite alphabet $\mathcal{X}$, the fixed-to-variable TS code is introduced in \cite{oliver}, which sorts sequences based on the size of the elementary type class from smallest to largest and then encodes sequences to variable-length bit-strings in this order. More precisely, define the support of a sequence as the set of observed symbols in it. The output of the encoder consists of a header that encodes the support of the sequence and a body that maps sequences to binary strings based on the size of their type class, among all sequences with the support set indicated in the header. That is, if two sequences $x^n$ and $y^n$ have the same support and $|T_{x^n}|\leq|T_{y^n}|$, then $l\left(\phi(x^n)\right)\leq l\left(\phi(y^n)\right)$, where $T_{x^n}$ is the type class of $x^n$.

We borrow the spirit of the TS code, yet our approach for parametric sources departs from that of \cite{oliver} in two ways
\begin{enumerate}
  \item Rather than defining type classes based on the EPMFs, we use quantized type classes, which are based on the neighborhoods of the minimal sufficient statistics.
  \item We omit the header encoding the support of the observed sequence. This header is unnecessary given the assumption that $\Theta$ is compact, because under this assumption, for any distribution in $\mathcal{P}$, each letter $x\in\mathcal{X}$ occurs with some probability bounded away from zero. Thus, all letters are likely to be observed for even moderate blocklengths.
\end{enumerate}
We first define quantized type classes for the purpose of compressing the exponential family. We cover the convex hull of the set of minimal sufficient statistics $\mathcal{T}=\text{conv}\left\{\boldsymbol{\tau}(x^n): x^n\in\mathcal{X}^n\right\}$, into $d$-dimensional cubic grids --- cuboids --- of side length $\frac{s}{n}$, where $s>0$ is a constant. The union of such disjoint cuboids should cover $\mathcal{T}$. The position of these cuboids is arbitrary, however once we cover the space, the covering is fixed throughout. We represent each $d$-dimensional cuboid by its geometrical \emph{center}. Denote $G(\boldsymbol{\tau}_0)$ as the cuboid with center $\boldsymbol{\tau}_0$, more precisely
\begin{equation}
\label{cuboidEq}
G(\boldsymbol{\tau}_0):= \left\{\boldsymbol{z}+\boldsymbol{\tau}_0 \in \mathbb{R}^d: -\frac{s}{2n}<z_i\leq \frac{s}{2n} \mbox{ for } 1\leq i \leq d \right\}
\end{equation}
where  $z_i$ is the $i$-th component of the $d$-dimensional vector $\boldsymbol{z}$.
Let $\boldsymbol{\tau}_c(x^n)$ be the center of the cuboid that contains $\boldsymbol{\tau}(x^n)$. Let us denote $\mathcal{T}_c$ as the set of cuboid centers, i.e., $\mathcal{T}_c=\left\{\boldsymbol{\tau}_c(x^n):x^n\in\mathcal{X}^n\right\}$.

We then define the quantized type class of $x^n$ as
\begin{equation}
\label{typeClassDefEq}
T_{x^n}:=\left\{y^n\in\mathcal{X}^n: \boldsymbol{\tau}(y^n)\in G\left(\boldsymbol{\tau}_c(x^n)\right)\right\}
\end{equation}
the set of all sequences $y^n$ with minimal sufficient statistic belonging to the very same cuboid containing the minimal sufficient statistic of $x^n$ (See Figure \ref{fig::TClassPar}).
\begin{figure}
\centering
\captionsetup{justification=centering}
\begin{tikzpicture}
\draw[step=2cm,color=gray] (0,0) grid (6,6);
\node at (3,4.3) {$\frac{s}{n}$};
\node at (2.5,4.2) {$\longleftarrow$};
\node at (3.5,4.2) {$\longrightarrow$};
\node at (6.2,6.2) {$\mathcal{T}$};
\node at (2.3,2.2) {$\bullet$};
\node at (2.5,2.5) {$\boldsymbol{\tau(x^n)}$};
\node at (3,3) { $\bullet$ } ;
\node at (3.2,3.3) {$\boldsymbol{\tau_c(x^n)}$};
\draw[step=2cm,color=red] (2,2) grid (4,4);
\node at (3,3.7) {{$\color{red}{G(\boldsymbol{\tau_c(x^n)})}$}};
\end{tikzpicture}
\caption{Type class structure for the exponential families}
\label{fig::TClassPar}
\end{figure}
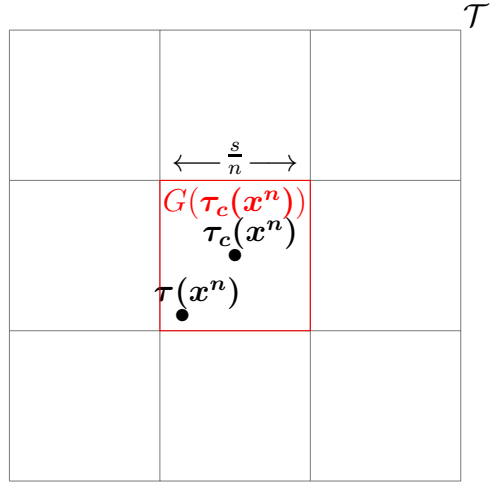

Since quantized type classes are represented by the cuboids and consequently the cuboid centers, we may interchangeably use $T_{\boldsymbol{\tau}_0}$ as the type class with corresponding cuboid center $\boldsymbol{\tau}_0$. Hence, $T_{\boldsymbol{\tau}_c(x^n)}$ is the same as $T_{x^n}$.

Two sequences within the given type class are indistinguishable from the coding perspective. The sequence indistinguishability introduced in this paper is reminiscent of the Balasubramanian's model indistinguishability \cite{bala}. In contrast to the sequence indistinguishability approach where the space of minimal sufficient statistics is partitioned into cuboids, in a model indistinguishability approach one may partition the source space. Asymptotics of the model indistinguishability approach is derived in \cite{JormaStCom}, where the maximum likelihood estimate is quantized to some precision by being the center of a cuboid. However, in their setup, the quantized code has the same logarithmic term as the maximum likelihood code with no quantization (See also \cite{JormaSNML}). For parametric TS code, the type class structure in \cite{merhav}, corresponds to the point type approach, wherein no quantization is done; i.e. $s=0$. In this limit, the size of the type class in \cite{merhav} depends on the dimension $d'$ of the derived lattice space \cite[Eq.A3]{merhav} rather than the model parameter dimension $d$. We return to this issue in Section \ref{sec::AltApprch}, wherein we show that using point types, the TS code achieves a third-order rate of $\left(\frac{d'}{2}-1\right)\log{n}$, which is not tight enough for our purposes due to the fact that $d'$ is in general larger than $d$.

As a direct consequence of our TS code construction, we have the following finite blocklength achievable bound; it constitutes a modification of Theorem 3 in \cite{oliver}.
\begin{Theorem}\cite{oliver}\label{finiteBlckThm}
For the TS code
\begin{equation}
R_n(\epsilon,\phi,p_{\theta^*})= \frac{1}{n}\left\lceil\log{M(\epsilon)}\right\rceil
\end{equation}
where
\begin{equation}
\label{mEq}
M(\epsilon)=\inf_{\gamma:\mathbb{P}_{\theta^*}\left(\frac{1}{n}\log{|T_{\boldsymbol{\tau}_c(X^n)}|}>\gamma\right)\leq \epsilon}\sum_{\substack{\boldsymbol{\tau}_c\in\mathcal{T}_c: \\ \frac{1}{n}\log{|T_{\boldsymbol{\tau}_c}|}\leq \gamma }}{{|T_{\boldsymbol{\tau}_c}|}}.
\end{equation}
\end{Theorem}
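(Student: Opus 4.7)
The plan is to translate the overflow constraint on $l(\phi(X^n))$ into a constraint on the rank of $X^n$ within the TS ordering, and from there into a threshold on the normalized log type class size. Since $\phi$ maps sequences bijectively onto the enumeration $\emptyset, 0, 1, 00, 01, 10, 11, 000,\ldots$ of $\{0,1\}^*$, the $k$-th binary string (with $k=1\mapsto\emptyset$) has length $\lfloor\log_2 k\rfloor$. Letting $\mathrm{rank}(x^n)$ denote the position of $x^n$ in the TS order, we have $l(\phi(x^n))=\lfloor\log_2\mathrm{rank}(x^n)\rfloor$, so the event $\{l(\phi(X^n))\geq k\}$ coincides with $\{\mathrm{rank}(X^n)\geq 2^k\}$, and computing $R_n$ reduces to finding the smallest integer $k$ with $\mathbb{P}_{\theta^*}[\mathrm{rank}(X^n)\geq 2^k]\leq\epsilon$.

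Next, I would exploit the defining property of the TS code---sequences ordered by non-decreasing type class size---to re-parameterize via a threshold $\gamma$ on $\frac{1}{n}\log|T_{\boldsymbol{\tau}_c(x^n)}|$. By construction, every sequence $x^n$ whose quantized type class satisfies $\frac{1}{n}\log|T_{\boldsymbol{\tau}_c(x^n)}|\leq \gamma$ receives a rank in $\{1,\ldots,M(\gamma)\}$, where
\begin{equation*}
M(\gamma) = \sum_{\substack{\boldsymbol{\tau}_c\in\mathcal{T}_c:\\\frac{1}{n}\log|T_{\boldsymbol{\tau}_c}|\leq\gamma}} |T_{\boldsymbol{\tau}_c}|
\end{equation*}
is the sum appearing in (\ref{mEq}). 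Consequently
\begin{equation*}
\mathbb{P}_{\theta^*}\bigl[\mathrm{rank}(X^n)>M(\gamma)\bigr]\leq \mathbb{P}_{\theta^*}\!\left[\frac{1}{n}\log|T_{\boldsymbol{\tau}_c(X^n)}|>\gamma\right],
\end{equation*}
so any $\gamma$ feasible in the definition of $M(\epsilon)$ yields $\mathbb{P}_{\theta^*}[l(\phi(X^n))\geq k]\leq\epsilon$ for every integer $k$ with $2^k\geq M(\gamma)+1$. Taking infimum over feasible $\gamma$---which ranges over finitely many effective values since $|T_{\boldsymbol{\tau}_c}|$ is integer valued---and recognising the infimum as $M(\epsilon)$ produces the achievable rate $R_n(\epsilon,\phi,p_{\theta^*})\leq \lceil\log_2 M(\epsilon)\rceil/n$. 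The matching converse direction follows by reversing the implications: if $k<\lceil\log_2 M(\epsilon)\rceil$, then the first $2^k-1$ sequences in the TS ordering are confined to type classes whose cumulative size is strictly less than $M(\epsilon)$, which by the infimum characterization of $M(\epsilon)$ forces the tail mass to exceed $\epsilon$.

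The main obstacle is the careful bookkeeping at the boundary: the threshold rank $2^k$ generally falls strictly inside some type class, so I would need to verify that the intra-class ordering within that boundary class can be chosen (or is immaterial to the formula) so that the ceiling in $\lceil\log_2 M(\epsilon)\rceil$ correctly captures the tight integer transition. Since the statement is a direct adaptation of Theorem~3 of \cite{oliver}, the cleanest route would be to import their bit-counting argument essentially verbatim, with the sole substantive change being the replacement of elementary type classes by the quantized classes $T_{\boldsymbol{\tau}_c}$ defined in Section~\ref{sec::TSC}.
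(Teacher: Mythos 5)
The paper does not actually prove this statement: it is imported from Theorem 3 of \cite{oliver} as a ``direct consequence'' of the code construction, with quantized classes substituted for elementary ones, so your rank-counting reconstruction is the intended route in spirit. Your achievability half is essentially complete: $l(\phi(x^n))=\lfloor\log_2\mathrm{rank}(x^n)\rfloor$, every sequence with $\frac{1}{n}\log|T_{\boldsymbol{\tau}_c(x^n)}|\leq\gamma$ has rank at most $M(\gamma)$, and feasibility of $\gamma$ gives $\mathbb{P}_{\theta^*}\big[l(\phi(X^n))\geq k\big]\leq\epsilon$ once $2^k\geq M(\gamma)+1$. (Strictly, this yields $k=\lceil\log_2(M(\epsilon)+1)\rceil$, which agrees with $\lceil\log_2 M(\epsilon)\rceil$ except when $M(\epsilon)$ is a power of two; that hair is worth stating explicitly since the theorem is an exact finite-blocklength identity.)

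The genuine gap is in your converse sentence, and it is precisely the boundary issue you flag and then defer. From $k<\lceil\log_2 M(\epsilon)\rceil$ you infer that the first $2^k-1$ sequences lie in classes of cumulative size strictly less than $M(\epsilon)$ and that the infimum characterization of $M(\epsilon)$ ``forces the tail mass to exceed $\epsilon$.'' That does not follow as written: the cut at rank $2^k-1$ generally bisects a type class, so the covered set consists of all classes up to some threshold $\gamma'$ with $M(\gamma')<M(\epsilon)$ (for which infeasibility indeed gives $\mathbb{P}_{\theta^*}[\mathrm{rank}(X^n)>M(\gamma')]>\epsilon$) \emph{plus} a partial segment of the next class, and the constraint in (\ref{mEq}) only controls overflow events cut at whole-class boundaries --- it says nothing about the probability of that partial segment, so you cannot conclude $\mathbb{P}_{\theta^*}[\mathrm{rank}(X^n)\geq 2^k]>\epsilon$ without bounding it. The point is not cosmetic here, because unlike elementary type classes, sequences inside a quantized class $T_{\boldsymbol{\tau}_c}$ are \emph{not} exactly equiprobable, so the mass of the partial segment depends on the arbitrary within-class ordering; closing the equality therefore requires either controlling the probability of a single quantized class or fixing an ordering convention under which the boundary class is immaterial, rather than importing the bit-counting of \cite{oliver} verbatim. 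Note also that the paper's achievability proof of Theorem \ref{mainThm} only uses the $\leq$ direction, but the point-type converse in Theorem \ref{latticeTSCThm} does invoke the full formula, so the converse half you left open is not dispensable.
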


\section{Main Result}
Let $H(p_{\theta})=\mathbb{E}_{\theta}\left(\log{\frac{1}{p_{\theta}(X)}}\right)$ and $\sigma^2(p_{\theta})=\mathbb{V}_{\theta}\left(\log{\frac{1}{p_{\theta}(X)}}\right)$ be the entropy and the varentropy of $p_{\theta}$.
\label{sec::mainThm}
The following theorem exactly characterizes achievable $\epsilon$-rates up to third-order term, as well as asserting that this rate is achievable by the TS code.
\begin{Theorem}
\label{mainThm}
For any stationary memoryless exponential family of distributions parameterized by $\Theta$,
\begin{equation}
\inf_{\phi}\sup_{\theta\in\Theta}\left[R_n(\epsilon,\phi,{p_{\theta}})-H(p_{\theta})-\frac{\sigma(p_{\theta})}{\sqrt{n}}Q^{-1}(\epsilon)\right]=\left(\frac{d}{2}-1\right)\frac{\log{n}}{n}+\mathcal{O}\left(\frac{1}{n}\right) \label{mainEq}
\end{equation}
where the infimum is achieved by the TS code using quantized types.
\end{Theorem}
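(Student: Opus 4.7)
I would apply Theorem~\ref{finiteBlckThm} to the TS code with quantized types and estimate the combinatorial quantity $M(\epsilon)$ sharply. The starting point is a type-size lemma (promised in Section~\ref{sec::preRes}) of the form $\log|T_{\boldsymbol{\tau}_c(x^n)}| = nH(p_{\hat\theta(x^n)}) - \tfrac{d}{2}\log n + \mathcal{O}(1)$, where $\hat\theta(x^n)$ is the ML parameter defined by $\nabla\psi(\hat\theta)=\boldsymbol{\tau}(x^n)$. Writing $\hat H(x^n):=H(p_{\hat\theta(x^n)})$, a second-order Taylor expansion of $\psi$ around $\hat\theta$ combined with (\ref{pNdimEq}) gives $\hat H(x^n) = -\tfrac{1}{n}\log p_{\theta^*}(x^n) + \mathcal{O}(\|\hat\theta-\theta^*\|^2)$; the correction is $\mathcal{O}_p(1/n)$ since $\sqrt n(\hat\theta-\theta^*)$ is tight by asymptotic normality of the sufficient statistic.

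\textbf{Threshold and Laplace estimate.} A Berry--Esseen bound for the i.i.d.\ sum $-\log p_{\theta^*}(X^n)=-\sum_{i=1}^n\log p_{\theta^*}(X_i)$, uniform in $\theta^*\in\Theta$, converts the tail condition $\mathbb{P}_{\theta^*}[\tfrac{1}{n}\log|T|>\gamma]\leq\epsilon$ into the choice $\gamma=H(p_{\theta^*})+\tfrac{\sigma(p_{\theta^*})}{\sqrt{n}}Q^{-1}(\epsilon)-\tfrac{d}{2n}\log n+\mathcal{O}(n^{-1/2})$. Setting $\gamma':=\gamma+\tfrac{d}{2n}\log n$, the sum $M(\epsilon)$ equals the number of $x^n$ with $\hat H(x^n)\leq\gamma'$. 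Since $|T|\propto 2^{n\hat H(T)}/n^{d/2}$ grows exponentially in $\hat H(T)$, the sum is dominated by cuboids whose centers lie in a thin shell around the level set $\{\hat H=\gamma'\}$: each cuboid has side $s/n$ and the level set is a $(d-1)$-dimensional submanifold of $\mathcal{T}$, so only $\Theta(n^{d-1})$ cuboids contribute effectively within the critical $\sim\!1/n$ shell thickness. Summing geometric contributions from successively thinner shells yields $\log M(\epsilon)=n\gamma'+(\tfrac{d}{2}-1)\log n+\mathcal{O}(1)$; dividing by $n$ is the RHS of (\ref{mainEq}).

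\textbf{Plan for the converse.} For any code $\phi$ the set $\mathcal{S}_k=\{x^n:l(\phi(x^n))\leq k\}$ has $|\mathcal{S}_k|\leq 2^{k+1}-1$, and $R_n(\epsilon,\phi,p_\theta)\leq k/n$ forces $\mathbb{P}_\theta(\mathcal{S}_k)\geq 1-\epsilon$. Because the sup over $\theta\in\Theta$ forces the \emph{same} $\mathcal{S}_k$ to work for every $\theta$, I would adapt the packing argument of~\cite{oliver2}: as $\theta$ varies the $\epsilon$-optimal sets under $p_\theta$ shift across different type classes, and the minimal set covering mass $1-\epsilon$ under every $p_\theta$ simultaneously is, up to a constant factor, the TS-ordered prefix of types identified in the achievability. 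This yields $|\mathcal{S}_k|\geq M(\epsilon)/C$ for an absolute constant $C$, matching the achievability bound up to the $\mathcal{O}(1/n)$ term.

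\textbf{Main obstacle.} The hardest piece is making the Laplace estimate rigorous with a uniform $\mathcal{O}(1)$ error across $\theta\in\Theta$: the cuboid discretization of side $s/n$ creates lattice-boundary effects, and the curvature of the level sets $\{\hat H=\mathrm{const}\}$ must be controlled uniformly, together with the Fisher information and its derivatives near $\partial\Theta$. These uniformity issues propagate into both the Berry--Esseen step (which needs a third-moment bound uniform in $\theta$) and the converse packing argument (where shifting $\theta$ cannot leave the compact set $\Theta$), so a single uniform compactness argument on $\Theta$ is the glue that makes the whole plan work.
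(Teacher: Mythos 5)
Your achievability plan is essentially the paper's: a type-class-size estimate of the form $\log|T_{x^n}| = -\log p_{\hat{\theta}_c(x^n)}(x^n) - \tfrac{d}{2}\log n + \mathcal{O}(1)$ (Lemma \ref{TypeClSizeLem} together with Lemma \ref{apprxLemm}), a normal-approximation lemma to set the threshold $\gamma$, and a shell-by-shell count showing only $\Theta(n^{d-1})$ cuboids contribute per $1/n$-shell, which the paper makes rigorous through the Lipschitz estimates of Lemmas \ref{fIsLipschitz} and \ref{rhoIsLipschitz}. One caveat: you propose to run Berry--Esseen on $-\log p_{\theta^*}(X^n)$ and transfer to the plug-in quantity via the quadratic remainder; since $n\|\hat{\theta}-\theta^*\|^2$ is $\mathcal{O}_p(1)$ rather than vanishing, keeping the CDF error uniformly $\mathcal{O}(1/\sqrt{n})$ requires a tail bound on that remainder --- this is exactly what Lemma \ref{maxLikeBerr} (proved via a result of \cite{nemat} on maxima of smooth functions of the empirical sufficient statistic) packages, so the achievability half is sound modulo that technical point.

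The converse is where your proposal has a genuine gap. The claim that any set $\mathcal{S}_k$ with $\mathbb{P}_\theta(\mathcal{S}_k)\geq 1-\epsilon$ for all $\theta$ must satisfy $|\mathcal{S}_k|\geq M(\epsilon)/C$ is not an adaptation of \cite{oliver2} --- the result used there (and here, in (\ref{epsilon2tauEq})) is a one-shot mixture/information-spectrum converse, not a packing statement --- and as written it is essentially the theorem itself: quantifying how many sequences any simultaneously $\epsilon$-good set must contain is the entire difficulty, and arbitrary codes are not unions of type classes, so ``the $\epsilon$-optimal sets shift across type classes'' does not yield the bound. Moreover, your sketch never confronts the fact that the benchmark $H(p_\theta)+\sigma(p_\theta)Q^{-1}(\epsilon)/\sqrt{n}$ inside the supremum varies with $\theta$, so a single $\mathcal{S}_k$ working for every $\theta$ does not by itself force a large set relative to each $\theta$'s own target. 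The paper resolves both issues simultaneously: it fixes the level set $\Theta_0=\{\theta: J(\theta)=J(\theta_0)\}$ of $J(\theta)=nH(p_\theta)+\sqrt{n}\,\sigma(p_\theta)Q^{-1}(\epsilon)$, shows via the regular value theorem (Theorem \ref{thetaZeroDim}) that $\Theta_0$ is a $(d-1)$-dimensional manifold, forms the uniform mixture $\overline{p}$ over $\Theta_0$, applies $\epsilon+2^{-\gamma}\geq\inf_{\theta\in\Theta_0}\mathbb{P}_\theta\left(\iota_{\overline{p}}(X^n)\geq k+\gamma\right)$, and evaluates $\iota_{\overline{p}}(x^n)=-\log p_{\hat{\theta}}(x^n)+\tfrac{d-1}{2}\log n+\mathcal{O}(1)$ by Laplace approximation on the manifold (Theorem \ref{laplaceTheorem}); taking $\gamma=\tfrac{1}{2}\log n$ and invoking Lemma \ref{maxLikeBerr} then gives exactly the $\left(\tfrac{d}{2}-1\right)\log n$ term. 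The constant centering on $\Theta_0$ and its dimension $d-1$ are precisely where the third-order constant comes from; your packing route supplies a substitute for neither, so the converse half of your plan does not go through as stated.
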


\begin{Example}
For the class of all $i.i.d.$ distributions $d=|\mathcal{X}|-1$, and Theorem \ref{mainThm} reduces to the result in \cite{oliver}.
\end{Example}

\section{Auxiliary Results}
\label{sec::preRes}
Define
\begin{equation}
\label{thetaHatEquation}
\hat{\theta}\left(\boldsymbol{\tau}\right)	=\underset{\theta\in\Theta}{\arg\max} \left(\langle\theta,\boldsymbol{\tau}\rangle-\psi(\theta)\right).
\end{equation}
Note that since the Hessian matrix of $\psi(\theta)$, $\boldsymbol{\nabla}^2\left(\psi(\theta)\right)=\text{Cov}_{\theta}\left(\boldsymbol{\tau}(X)\right)$ is positive definite, the log-likelihood function is strictly concave and hence the maximum likelihood $\hat{\theta}(\boldsymbol{\tau})$ is unique.

For notational convenience, we may omit the dependencies on $\boldsymbol{\tau}$ and $\boldsymbol{\tau}_c$ in $\hat{\theta}\left(\boldsymbol{\tau}(x^n)\right)$ and $\hat{\theta}\left(\boldsymbol{\tau}_c(x^n)\right)$, and simply denote them by $\hat{\theta}(x^n)$ and $\hat{\theta}_c(x^n)$, respectively.

The next lemma provides tight upper and lower bounds on the type class size. Beside its exclusive bearing, it is a main component of the achievability proof.
\begin{Lemma}[Type Class Size]
\label{TypeClSizeLem}
Let $\kappa=\wp\frac{\sqrt{d}}{2}$. For large enough $n$, the size of the type class of $x^n$ is bounded as
\begin{equation*}
r(x^n)-2\kappa s+C' \leq \log{|T_{x^n}|} \leq r(x^n)+2\kappa s+C
\end{equation*}
where
\begin{equation*}
r(x^n)=-\log{p_{\hat{\theta}_c(x^n)}(x^n)}-\frac{d}{2}\log{n}+d\log{s}
\end{equation*}
is the common part of the upper and lower bounds and $C,C'$ are constants independent of $n$.
\end{Lemma}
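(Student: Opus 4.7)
The plan is to connect $|T_{x^n}|$ to the probability $\mathbb{P}_{\hat\theta_c(x^n)}(T_{x^n})$, since under the MLE $\hat\theta_c$ the sequences in the type class are all nearly equiprobable, and then invoke a local central limit theorem to estimate this probability.

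First, I would show that for any $y^n\in T_{x^n}$ the likelihood $p_{\hat\theta_c}(y^n)$ is within a factor $2^{2\kappa s}$ of $p_{\hat\theta_c}(x^n)$. By definition both $\boldsymbol{\tau}(x^n)$ and $\boldsymbol{\tau}(y^n)$ lie in the cuboid $G(\boldsymbol{\tau}_c)$, so each component of $\boldsymbol{\tau}(y^n)-\boldsymbol{\tau}_c(x^n)$ lies in $(-s/(2n),s/(2n)]$, giving $\|\boldsymbol{\tau}(y^n)-\boldsymbol{\tau}_c(x^n)\|\le s\sqrt d/(2n)$. Combining the parametric form (\ref{pNdimEq}) with Cauchy–Schwarz and $\|\hat\theta_c\|\le\wp$ yields
\begin{equation*}
\bigl|n\langle\hat\theta_c,\boldsymbol{\tau}(y^n)-\boldsymbol{\tau}_c(x^n)\rangle\bigr|\le \wp\,\frac{s\sqrt d}{2}=\kappa s,
\end{equation*}
so $p_{\hat\theta_c}(y^n)=2^{n\langle\hat\theta_c,\boldsymbol{\tau}_c\rangle-n\psi(\hat\theta_c)\pm\kappa s}$, and applying the same to $x^n$ gives $2^{-2\kappa s}\le p_{\hat\theta_c}(y^n)/p_{\hat\theta_c}(x^n)\le 2^{2\kappa s}$.

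Second, summing over $T_{x^n}$ turns this pointwise comparison into
\begin{equation*}
|T_{x^n}|\,p_{\hat\theta_c}(x^n)\,2^{-2\kappa s}\;\le\;\mathbb{P}_{\hat\theta_c}(T_{x^n})\;\le\;|T_{x^n}|\,p_{\hat\theta_c}(x^n)\,2^{2\kappa s},
\end{equation*}
which on taking logarithms gives
\begin{equation*}
\log|T_{x^n}|=-\log p_{\hat\theta_c}(x^n)+\log\mathbb{P}_{\hat\theta_c}(T_{x^n})\pm 2\kappa s.
\end{equation*}
Thus matching the claimed bounds reduces to showing $\log\mathbb{P}_{\hat\theta_c}(T_{x^n})=-\tfrac d2\log n+d\log s+\mathcal O(1)$.

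Third, I would apply a lattice local CLT. Since $\hat\theta_c$ maximizes $\langle\theta,\boldsymbol{\tau}_c\rangle-\psi(\theta)$, stationarity gives $\nabla\psi(\hat\theta_c)=\mathbb{E}_{\hat\theta_c}[\boldsymbol{\tau}(Y)]=\boldsymbol{\tau}_c$, so the event $T_{x^n}$ is precisely that the sample mean $\boldsymbol{\tau}(Y^n)$ of the i.i.d.\ bounded vectors $\boldsymbol{\tau}(Y_i)$ lies in a cuboid of side $s/n$ centered at its own mean. In the $S_n=\sum\boldsymbol{\tau}(Y_i)$ coordinates this is a cuboid of side $s$ centered at $n\boldsymbol{\tau}_c$. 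The lattice local CLT gives, for each lattice point $\boldsymbol{s}$ in the cuboid, $\mathbb{P}(S_n=\boldsymbol{s})=|L|(2\pi n)^{-d/2}(\det\Sigma(\hat\theta_c))^{-1/2}(1+o(1))$ where $|L|$ is the covolume of the lattice generated by $\{\boldsymbol{\tau}(x)-\boldsymbol{\tau}(x')\}$ and $\Sigma(\hat\theta_c)=\nabla^2\psi(\hat\theta_c)$. The number of lattice points in the cuboid is $s^d/|L|+\mathcal O(s^{d-1})$, so the total probability is $s^d(2\pi n)^{-d/2}(\det\Sigma)^{-1/2}(1+o(1))$, which is exactly $-\tfrac d2\log n+d\log s+\mathcal O(1)$.

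The main obstacle is step three: making the local CLT rigorous and uniform in $\theta\in\Theta$. I need the covariance $\Sigma(\hat\theta_c)$ to be bounded away from singular and the error in the local CLT to be uniformly small, which requires that $\hat\theta_c$ lie in a compact subset of the interior of $\Theta$ where $\Sigma$ is uniformly positive definite. This in turn uses compactness of $\Theta$ together with a careful treatment of cuboids whose centers lie near the boundary of $\mathcal T$, for which one can absorb the finitely many exceptional cases into the constants $C,C'$ for $n$ large enough. The other steps are essentially algebraic consequences of the exponential-family form and the geometry of the cuboid.
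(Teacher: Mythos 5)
Your steps one and two are exactly the paper's argument: the pointwise comparison $2^{-2\kappa s}\le p_{\hat\theta_c}(y^n)/p_{\hat\theta_c}(x^n)\le 2^{2\kappa s}$ over the cuboid and the resulting sandwich of $|T_{x^n}|$ between $\mathbb{P}_{\hat\theta_c}\{\boldsymbol{\tau}(X^n)\in G(\boldsymbol{\tau}_c)\}$ divided by the max/min sequence probability are the paper's (\ref{TypeMainEq}), (\ref{eqTwo}), (\ref{eqOne}), and your observation that $\nabla\psi(\hat\theta_c)=\boldsymbol{\tau}_c$ so the cuboid is centered at the mean is the paper's $\boldsymbol{\mu}_c=\boldsymbol{\tau}_c$ step. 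The genuine gap is in step three. You invoke a \emph{lattice} local CLT pointwise, with atoms of probability $|L|(2\pi n)^{-d/2}(\det\Sigma)^{-1/2}(1+o(1))$ where $|L|$ is the covolume of the lattice generated by $\{\boldsymbol{\tau}(x)-\boldsymbol{\tau}(x')\}$, and then count lattice points in the cuboid. But for a general exponential family over a finite alphabet the values $\boldsymbol{\tau}(x)$ need not generate a full-rank lattice in $\mathbb{R}^d$: the components can involve incommensurable numbers, in which case the group they generate is dense in some directions, there is no covolume $|L|$, and the atoms of $S_n$ have probability of order $n^{-d'/2}$ with $d'>d$ (this is precisely the phenomenon driving Section \ref{sec::AltApprch} and Lemma \ref{PointTypeClasLemma}, where the point-type dimension $d'$ exceeds $d$). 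So the pointwise statement you rely on is simply false outside the genuinely lattice case, and the lemma must cover exactly those non-lattice families.

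The paper avoids this by never localizing to atoms: it applies Stone's local limit theorem for general (possibly non-lattice) multidimensional distributions \cite[Corollary 1]{stone} directly to the probability of the cube, obtaining $\mathbb{P}_{\hat\theta_c}\{n\boldsymbol{\tau}(X^n)\in nG(\boldsymbol{\tau}_c)\}=\frac{s^d}{(2\pi n)^{d/2}|\boldsymbol{\Sigma}|^{1/2}}+o\left(n^{-\frac{d}{2}}\right)$ (the Gaussian factor being $1$ since the cube is centered at the mean), which yields (\ref{logPthetaEq}) and hence the $-\frac{d}{2}\log n+d\log s+\mathcal{O}(1)$ behavior you need. Your final answer for the cube probability happens to match, but the argument as written does not establish it; replacing the pointwise lattice LLT plus lattice-point count by a set-level LLT of Stone's type (valid for non-lattice distributions, with the attendant care that the main term dominates the $o(n^{-d/2})$ error for fixed $s>0$) closes the gap. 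Your remaining concerns about uniform positive definiteness of $\Sigma$ via compactness of $\Theta$ and about boundary cuboids are legitimate but secondary, and are absorbed into the constants exactly as you suggest.
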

\begin{proof}
For notational convenience, when it is clear from the context, we may suppress the arguments in $\boldsymbol{\tau}_c(x^n)$ and $G(\boldsymbol{\tau}_c(x^n))$ and denote them simply as $\boldsymbol{\tau}_c$ and $G(\boldsymbol{\tau}_c)$.

Motivated by \cite[Eq. A2]{merhav}, we bound $|T_{x^n}|$ as follows:
\begin{equation}
\displaystyle
\frac{{\mathbb{P}}_{\hat{\theta}_c(x^n)}\left\{\boldsymbol{\tau}(X^n)\in G\left(\boldsymbol{\tau}_c(x^n)\right)\right\}}{{\displaystyle\max_{\substack{y^n:\\ \boldsymbol{\tau}(y^n)\in G\left(\boldsymbol{\tau}_c(x^n)\right)}}}{\mathbb{P}}_{\hat{\theta}_c(x^n)}(y^n)} \leq |T_{x^n}|  \leq \frac{{\mathbb{P}}_{\hat{\theta}_c(x^n)}\left\{\boldsymbol{\tau}(X^n)\in G\left(\boldsymbol{\tau}_c(x^n)\right)\right\}}{{\displaystyle\min_{\substack{y^n:\\ \boldsymbol{\tau}(y^n)\in G\left(\boldsymbol{\tau}_c(x^n)\right)}}{\mathbb{P}}_{\hat{\theta}_c(x^n)}(y^n)}}. \label{TypeMainEq}
\end{equation}
Let \begin{math}
nG(\boldsymbol{\tau}_c)=\left\{n\textbf{z}:\textbf{z}\in G(\boldsymbol{\tau}_c)\right\}.
\end{math}
It is clear that
\begin{equation*}
\mathbb{P}_{\hat{\theta}_c(x^n)}\left\{\boldsymbol{\tau}(X^n)\in G(\boldsymbol{\tau}_c)\right\}=\mathbb{P}_{\hat{\theta}_c(x^n)}\left\{n\boldsymbol{\tau}(X^n)\in nG(\boldsymbol{\tau}_c)\right\}.
\end{equation*}
Exploiting the result in \cite[Corollary 1]{stone}, we have
\begin{equation}
\mathbb{P}_{\hat{\theta}_c(x^n)}\left\{n\boldsymbol{\tau}(X^n)\in nG(\boldsymbol{\tau}_c)\right\} =\frac{s^d}{\left(2\pi n\right)^{\frac{d}{2}}|\boldsymbol{\Sigma}|^{\frac{1}{2}}}e^{-\frac{\left(n\boldsymbol{\tau}_c-n\boldsymbol{\mu}_c\right)\cdot \boldsymbol{\Sigma}^{-1}\cdot \left(n\boldsymbol{\tau}_c-n\boldsymbol{\mu}_c\right)}{2n}} +o\left(n^{-\frac{d}{2}}\right)\label{stoneEq}
\end{equation}
where $\boldsymbol{\mu}_c$ and $\boldsymbol{\Sigma}$ are the mean and the covariance (resp.) of $\boldsymbol{\tau}(X)$ under $\hat{\theta}_c(x^n)$. To proceed, we show that $\boldsymbol{\mu}_c=\boldsymbol{\tau}_c$. We have
\begin{align*}
\hat{\theta}_c(x^n)&=\underset{\theta\in\Theta}{\arg\min}\: \Big(D(p_{\hat{\theta}_c(x^n)}\|p_{\theta})+H(p_{\hat{\theta}_c(x^n)})\Big) \nonumber \\
							&=\underset{\theta\in\Theta}{\arg\max}\: \mathbb{E}_{\hat{\theta}_c(x^n)}\Big(\log{p_{\theta}(X)}\Big) \nonumber \\
							&=\underset{\theta\in\Theta}{\arg\max}\: \mathbb{E}_{\hat{\theta}_c(x^n)}\Big(\langle\theta,\boldsymbol{\tau}(X)\rangle-\psi(\theta)\Big) \nonumber \\
							&=\underset{\theta\in\Theta}{\arg\max}\: \langle\theta,\boldsymbol{\mu}_c\rangle-\psi(\theta).
\end{align*}
That is, $\hat{\theta}_c(x^n)$ is the maximum likelihood estimate for $\boldsymbol{\mu}_c$ and (by definition (\ref{thetaHatEquation})) $\boldsymbol{\tau}_c$. However, in order to be the maximum likelihood estimate, it must be that the derivative of the log-likelihood function is 0, hence $\nabla \psi(\hat{\theta}_c(x^n))=\boldsymbol{\mu}_c$ and $\nabla\psi(\hat{\theta}_c(x^n))=\boldsymbol{\tau}_c$. Therefore $\boldsymbol{\mu}_c$ and $\boldsymbol{\tau}_c$ are equal. Due to (\ref{stoneEq}) and $\boldsymbol{\mu}_c=\boldsymbol{\tau}_c$, there exist constants $C,C'$ such that, for large enough $n$,
\begin{equation}
d\log s -\frac{d}{2}\log n +C'\leq \log p_{\hat{\theta}_c(x^n)}\{\boldsymbol{\tau}(X^n)\in G(\boldsymbol{\tau}_c)\} \leq  d\log s -\frac{d}{2}\log n +C.\label{logPthetaEq}
\end{equation}
On the other hand
\begin{equation*}
\log p_{\hat{\theta}_c(x^n)}(x^n) =n \left[\langle\hat{\theta}_c(x^n),\boldsymbol{\tau}(x^n)\rangle-\psi\left(\hat{\theta}_c(x^n)\right)\right].
\end{equation*}
Therefore
\begin{equation}
\label{eqTwo}
\max_{\substack{y^n:\\ \boldsymbol{\tau}(y^n)\in G(\boldsymbol{\tau}_c(x^n))}}\log{p_{\hat{\theta}_c(x^n)}(y^n)}\leq \log p_{\hat{\theta}_c(x^n)}(x^n)+2\kappa s
\end{equation}
and
\begin{equation}
\label{eqOne}
\min_{\substack{y^n:\\ \boldsymbol{\tau}(y^n)\in G(\boldsymbol{\tau}_c(x^n))}}\log{p_{\hat{\theta}_c(x^n)}(y^n)}\geq \log p_{\hat{\theta}_c(x^n)}(x^n)-2\kappa s
\end{equation}
where we used  $\|\hat{\theta}_c(x^n)\|\leq \wp$ and the fact that if $\boldsymbol{\tau}(x^n)$ and $\boldsymbol{\tau}(y^n)$ belong to the same cuboid, then $\|\boldsymbol{\tau}(x^n)-\boldsymbol{\tau}(y^n)\|<\frac{s\sqrt{d}}{n}$. Plugging (\ref{logPthetaEq},\ref{eqTwo},\ref{eqOne}) in (\ref{TypeMainEq}), the lemma follows.
\end{proof}
\begin{Corollary}
\label{TypeCorollary}
For large enough $n$, the size of the type class of $x^n$ with corresponding cuboid center $\boldsymbol{\tau}_c$ is bounded as
\begin{equation*}
nf(\boldsymbol{\tau}_c)-6\kappa s -C''\leq \log{|T_{\boldsymbol{\tau}_c}|} \leq nf(\boldsymbol{\tau}_c)
\end{equation*}
where, $C''=C-C'$ and
\begin{equation}
f(\boldsymbol{\tau})=-\langle\hat{\theta}(\boldsymbol{\tau}),\boldsymbol{\tau}\rangle + \psi\left(\hat{\theta}(\boldsymbol{\tau})\right)-\frac{d}{2n}\log{n}+\frac{d\log{s}}{n} +\frac{3\kappa s}{n}+\frac{C}{n}.\label{fEqUpp2}
\end{equation}
\end{Corollary}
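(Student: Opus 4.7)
The plan is to derive the corollary as a direct bookkeeping step from Lemma \ref{TypeClSizeLem}, by rewriting $\log p_{\hat{\theta}_c(x^n)}(x^n)$ in terms of the cuboid center $\boldsymbol{\tau}_c$ rather than the actual minimal sufficient statistic $\boldsymbol{\tau}(x^n)$. Since $\hat{\theta}_c(x^n)$ is by definition $\hat{\theta}(\boldsymbol{\tau}_c(x^n))$, both bounds already depend on $x^n$ only through $\boldsymbol{\tau}_c$ except through the inner product $\langle \hat{\theta}(\boldsymbol{\tau}_c), \boldsymbol{\tau}(x^n)\rangle$ that appears inside the log-probability. Replacing $\boldsymbol{\tau}(x^n)$ by $\boldsymbol{\tau}_c$ in that inner product is therefore the only nontrivial step.

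Concretely, I would first write
\begin{equation*}
-\log p_{\hat{\theta}_c(x^n)}(x^n) = n\bigl[-\langle \hat{\theta}(\boldsymbol{\tau}_c),\boldsymbol{\tau}(x^n)\rangle + \psi(\hat{\theta}(\boldsymbol{\tau}_c))\bigr],
\end{equation*}
using the exponential-family form in (\ref{pNdimEq}), and rewrite $r(x^n)$ from Lemma \ref{TypeClSizeLem} as
\begin{equation*}
r(x^n) = n\bigl[-\langle \hat{\theta}(\boldsymbol{\tau}_c),\boldsymbol{\tau}_c\rangle + \psi(\hat{\theta}(\boldsymbol{\tau}_c))\bigr] - \tfrac{d}{2}\log n + d\log s + n\langle \hat{\theta}(\boldsymbol{\tau}_c),\boldsymbol{\tau}_c - \boldsymbol{\tau}(x^n)\rangle.
\end{equation*}
The residual inner product is bounded via Cauchy--Schwarz: since $\boldsymbol{\tau}(x^n)\in G(\boldsymbol{\tau}_c)$ each of the $d$ coordinates of $\boldsymbol{\tau}_c - \boldsymbol{\tau}(x^n)$ lies in $(-s/(2n),s/(2n)]$, so $\|\boldsymbol{\tau}_c - \boldsymbol{\tau}(x^n)\|\leq s\sqrt{d}/(2n)$, and combining this with $\|\hat{\theta}(\boldsymbol{\tau}_c)\|\leq \wp$ from the compactness of $\Theta$ yields
\begin{equation*}
\bigl|\, n\langle \hat{\theta}(\boldsymbol{\tau}_c),\boldsymbol{\tau}_c-\boldsymbol{\tau}(x^n)\rangle \,\bigr| \leq \wp\cdot\tfrac{s\sqrt{d}}{2} = \kappa s.
\end{equation*}

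Plugging this into the two inequalities from Lemma \ref{TypeClSizeLem} then gives
\begin{equation*}
n\bigl[-\langle \hat{\theta}(\boldsymbol{\tau}_c),\boldsymbol{\tau}_c\rangle + \psi(\hat{\theta}(\boldsymbol{\tau}_c))\bigr] - \tfrac{d}{2}\log n + d\log s - 3\kappa s + C' \leq \log|T_{x^n}| \leq n\bigl[-\langle \hat{\theta}(\boldsymbol{\tau}_c),\boldsymbol{\tau}_c\rangle + \psi(\hat{\theta}(\boldsymbol{\tau}_c))\bigr] - \tfrac{d}{2}\log n + d\log s + 3\kappa s + C.
\end{equation*}
Identifying the right-hand side with $nf(\boldsymbol{\tau}_c)$ as defined in (\ref{fEqUpp2}) gives the upper bound exactly, and the left-hand side differs from $nf(\boldsymbol{\tau}_c)$ by $-6\kappa s - (C-C') = -6\kappa s - C''$, yielding the claimed lower bound.

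There is no real obstacle here: everything is mechanical once one notices that the only $x^n$-dependence in $r(x^n)$ beyond $\boldsymbol{\tau}_c$ is linear in $\boldsymbol{\tau}(x^n)$ through a bounded parameter vector, so a one-step Cauchy--Schwarz estimate on a cuboid of diameter $\mathcal{O}(1/n)$ converts the bounds of Lemma \ref{TypeClSizeLem} into bounds expressed purely in terms of the center. The only mild care needed is to track the sign of the $\kappa s$ slack when combining with the $\pm 2\kappa s$ slack already present in Lemma \ref{TypeClSizeLem}, which accounts for the factor of $3\kappa s$ built into $f$ and the resulting $6\kappa s$ gap in the corollary.
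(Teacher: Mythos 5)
Your proposal is correct and is exactly the intended derivation: the paper states the corollary without proof as an immediate consequence of Lemma \ref{TypeClSizeLem}, obtained precisely by rewriting $\log p_{\hat{\theta}_c(x^n)}(x^n)$ via the exponential-family form and absorbing the $\kappa s$ shift from replacing $\boldsymbol{\tau}(x^n)$ by the cuboid center $\boldsymbol{\tau}_c$ (half-diagonal bound plus $\|\hat{\theta}(\boldsymbol{\tau}_c)\|\leq\wp$). Your constant bookkeeping ($3\kappa s$ in $f$, total gap $6\kappa s + C''$ with $C''=C-C'$) matches the statement exactly.
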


We appeal to the following normal approximation result in order to bound the CDF of the type class size (in the achievability proof) and further CDF of the mixture distribution (in the converse proof) with that of the normal distribution.
\begin{Lemma}[Asymptotic Normality of Information]\label{maxLikeBerr}
Fix a positive constant $\alpha$. For a stationary memoryless source, there exists a finite positive constant $A$, such that for all $n\geq 1$ and $z$ such that $|z|\leq \alpha$,
\begin{equation}
\left|\mathbb{P}_{\theta^*}\left\{\frac{-\log{p_{\hat{\theta}(X^n)}(X^n)}-nH}{\sqrt{n}\sigma}>z\right\}-Q(z)\right|\leq \frac{A}{\sqrt{n}}
\end{equation}
where $H:=H(p_{\theta^*})$ and $\sigma^2:=\sigma^2(p_{\theta^*})$, are the entropy and varentropy of the true model $p_{\theta^*}$, respectively.
\end{Lemma}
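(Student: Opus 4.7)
The plan is to use the exponential-family structure to decompose the MLE self-information into the true-parameter self-information plus a small KL-divergence correction, and then reduce to the classical Berry--Esseen theorem for the i.i.d. sum, absorbing the correction as a perturbation of order $1/\sqrt{n}$.

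First I would establish the algebraic identity: using (\ref{pNdimEq}) together with the relation $\nabla\psi(\hat{\theta}(X^n)) = \boldsymbol{\tau}(X^n)$ that was derived inside the proof of Lemma~\ref{TypeClSizeLem}, a short direct calculation gives
\begin{equation*}
-\log p_{\hat{\theta}(X^n)}(X^n) = -\log p_{\theta^*}(X^n) - n\,D\bigl(p_{\hat{\theta}(X^n)}\,\|\,p_{\theta^*}\bigr),
\end{equation*}
where $D$ takes the Bregman form $D(p_{\theta_1}\|p_{\theta_2}) = \langle\theta_1-\theta_2,\nabla\psi(\theta_1)\rangle - \psi(\theta_1) + \psi(\theta_2) \geq 0$. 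Set $W_n := (-\log p_{\theta^*}(X^n) - nH)/(\sqrt{n}\sigma)$ and $V_n := n\,D(p_{\hat{\theta}(X^n)}\|p_{\theta^*})/(\sqrt{n}\sigma) \geq 0$, so that the probability of interest is $\mathbb{P}_{\theta^*}\{W_n - V_n > z\}$.

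Next, since $\mathcal{X}$ is finite and $\Theta$ is compact, the summands $\log(1/p_{\theta^*}(X_i))$ are bounded i.i.d. random variables with mean $H$, variance $\sigma^2$, and finite third absolute moment. Classical univariate Berry--Esseen therefore yields a constant $B$ with $|\mathbb{P}_{\theta^*}\{W_n > z\} - Q(z)| \leq B/\sqrt{n}$ uniformly in $z$ and $n$. For the correction, I would Taylor-expand $D(\cdot\,\|\,p_{\theta^*})$ around $\theta^*$: since $\psi$ is smooth on compact $\Theta$ with $\nabla^2\psi=\text{Cov}_{\theta}(\boldsymbol{\tau}(X))$ bounded, one obtains $D(p_{\hat{\theta}(X^n)}\|p_{\theta^*}) \leq C\|\hat{\theta}(X^n) - \theta^*\|^2$. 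The MLE is a locally Lipschitz function of the bounded-alphabet sample mean $\boldsymbol{\tau}(X^n)$ via $\hat{\theta}=(\nabla\psi)^{-1}(\boldsymbol{\tau})$, so Hoeffding's inequality transfers to a sub-Gaussian concentration $\mathbb{P}_{\theta^*}\{\|\hat{\theta}(X^n)-\theta^*\| > t\} \leq C_1 e^{-C_2 n t^2}$. In particular $nD$ has sub-exponential tails and $\mathbb{E}[nD]=O(1)$, whence $\mathbb{E}[V_n]=O(1/\sqrt{n})$.

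Finally, I would combine via anti-concentration. Since $V_n\geq 0$, we immediately have $\mathbb{P}\{W_n-V_n>z\}\leq \mathbb{P}\{W_n>z\}\leq Q(z)+B/\sqrt{n}$. In the other direction, for any $\delta>0$,
\begin{align*}
\mathbb{P}\{W_n>z\}-\mathbb{P}\{W_n-V_n>z\} &= \mathbb{P}\{z<W_n\leq z+V_n\} \\
&\leq \mathbb{P}\{z<W_n\leq z+\delta\}+\mathbb{P}\{V_n>\delta\} \\
&\leq \frac{\delta}{\sqrt{2\pi}}+\frac{2B}{\sqrt{n}}+\mathbb{P}\{V_n>\delta\},
\end{align*}
where the last inequality uses Berry--Esseen applied to the endpoints together with $|Q'(w)|\leq 1/\sqrt{2\pi}$. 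The main obstacle is choosing $\delta$ to drive the total bound to $A/\sqrt{n}$: a crude constant-$c$ choice $\delta=c/\sqrt{n}$ combined with Markov only gives $O(1)$ for the tail term, and a Chernoff-type tail for $nD$ alone yields only $O(\log n/\sqrt{n})$. The finer step, which recovers the stated $1/\sqrt{n}$ rate, is to integrate the local-CDF bound on $W_n$ against the distribution of $V_n$ and exploit $\mathbb{E}[V_n]=O(1/\sqrt{n})$ directly; the smoothness of $Q$ then keeps the perturbation contribution at the same $O(1/\sqrt{n})$ order as the Berry--Esseen error for $W_n$, completing the proof.
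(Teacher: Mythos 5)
Your decomposition $-\log p_{\hat{\theta}(X^n)}(X^n) = -\log p_{\theta^*}(X^n) - nD(p_{\hat{\theta}(X^n)}\|p_{\theta^*})$ is correct (it is the exact likelihood-ratio identity, and $V_n\geq 0$ even without the interior first-order condition), the classical Berry--Esseen bound for $W_n$ is fine, and the sub-exponential tail plus $\mathbb{E}[nD]=\mathcal{O}(1)$ are also fine. The genuine gap is the final ``finer step.'' Integrating the anti-concentration bound $\mathbb{P}\{z<W_n\leq z+v\}\leq v/\sqrt{2\pi}+2B/\sqrt{n}$ against the law of $V_n$ to conclude $\mathbb{P}\{z<W_n\leq z+V_n\}\leq \mathbb{E}[V_n]/\sqrt{2\pi}+2B/\sqrt{n}$ is only legitimate if $V_n$ were independent of $W_n$ (or if you had a \emph{conditional} anti-concentration bound for $W_n$ given $V_n$, uniform in the conditioning value). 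Here $W_n$ and $V_n$ are deterministic functions of the same sample and are strongly dependent: $V_n$ is large exactly on atypical samples, where the conditional distribution of $W_n$ is shifted, so the unconditional Berry--Esseen estimate cannot be applied after conditioning on $V_n$. As you yourself note, the only rigorous versions of your argument (split at a threshold $\delta$, Chernoff tail for $nD$) give $\mathcal{O}(\log n/\sqrt{n})$, not $\mathcal{O}(1/\sqrt{n})$; closing this last factor requires a dedicated tool for nonlinear statistics, e.g.\ a randomized concentration inequality in the style of Chen--Shao, or a Berry--Esseen theorem for smooth functions of i.i.d.\ vector sample means. As stated, ``the smoothness of $Q$ then keeps the perturbation contribution at the same order'' is an assertion, not a proof.

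For comparison, the paper avoids the perturbation issue entirely: it writes $-\frac{1}{n}\log p_{\hat{\theta}(X^n)}(X^n)=e\bigl(\frac{1}{n}\sum_{i=1}^n \boldsymbol{U}_i\bigr)$ with $e(\boldsymbol{\tau})=-\max_{\theta}\bigl(\langle\theta,\boldsymbol{\tau}\rangle-\psi(\theta)+\langle\theta,\nabla\psi(\theta^*)\rangle\bigr)$, $\boldsymbol{U}_i=\boldsymbol{\tau}(X_i)-\boldsymbol{\mu}$ zero-mean i.i.d.\ vectors, checks $e(\mathbf{0})=H$, and then invokes Proposition 1 of \cite{nemat}, a Berry--Esseen-type theorem for smooth functions of sums of i.i.d.\ random vectors, which delivers the uniform $A/\sqrt{n}$ rate in one stroke (the gradient of $e$ at $\mathbf{0}$ is $-\theta^*$, so the linearization has variance $\sigma^2$ and the quadratic remainder is what your $V_n$ captures). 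If you want to keep your route, the cleanest repair is to either quote such a function-of-sample-means Berry--Esseen result yourself, or control $\mathbb{P}\{z<W_n\leq z+V_n\}$ via a randomized concentration inequality that explicitly accommodates the dependence between $W_n$ and $V_n$.
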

\begin{proof}
See Appendix \ref{app::maxLikeBerr}
\end{proof}

The following lemma provides a guarantee in approximation of $p_{\hat{\theta}(x^n)}(x^n)$ with $p_{\hat{\theta}_{c}(x^n)}(x^n)$, which allows us to use the Lemma \ref{maxLikeBerr} in the achievability proof.
\begin{Lemma}[Maximum Likelihood Approximation]
\label{apprxLemm}
Let $\kappa$ be defined as in Lemma \ref{TypeClSizeLem}. We have
\begin{equation*}
\log{p_{\hat{\theta}(x^n)}(x^n)}-\log{p_{\hat{\theta}_c(x^n)}(x^n)}\leq 2\kappa s.
\end{equation*}
\end{Lemma}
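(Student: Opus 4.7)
The plan is to compare the two log-likelihoods by exploiting the two optimality conditions that define $\hat{\theta}(x^n)$ and $\hat{\theta}_c(x^n)$ as maximum-likelihood estimates, and then apply Cauchy--Schwarz using the bounded-parameter and small-cuboid geometry.

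First, I would write each log-likelihood explicitly via \eqref{pNdimEq}, so that
\begin{equation*}
\log p_{\hat{\theta}(x^n)}(x^n)-\log p_{\hat{\theta}_c(x^n)}(x^n)=n\big[\langle\hat{\theta}(x^n)-\hat{\theta}_c(x^n),\boldsymbol{\tau}(x^n)\rangle-(\psi(\hat{\theta}(x^n))-\psi(\hat{\theta}_c(x^n)))\big].
\end{equation*}
The next step is to eliminate the $\psi$-difference using the definition \eqref{thetaHatEquation} of $\hat{\theta}_c(x^n)$ as the maximizer at $\boldsymbol{\tau}_c(x^n)$: plugging in the competitor $\hat{\theta}(x^n)$ gives
\begin{equation*}
\psi(\hat{\theta}(x^n))-\psi(\hat{\theta}_c(x^n))\geq \langle\hat{\theta}(x^n)-\hat{\theta}_c(x^n),\boldsymbol{\tau}_c(x^n)\rangle.
\end{equation*}
Substituting this lower bound on the subtracted $\psi$-difference into the previous display yields the key inequality
\begin{equation*}
\log p_{\hat{\theta}(x^n)}(x^n)-\log p_{\hat{\theta}_c(x^n)}(x^n)\leq n\langle\hat{\theta}(x^n)-\hat{\theta}_c(x^n),\boldsymbol{\tau}(x^n)-\boldsymbol{\tau}_c(x^n)\rangle.
\end{equation*}

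Finally, I would apply Cauchy--Schwarz. The parameter factor is bounded by the triangle inequality together with the compactness assumption $\|\theta\|\leq\wp$, giving $\|\hat{\theta}(x^n)-\hat{\theta}_c(x^n)\|\leq 2\wp$. For the sufficient-statistic factor, since $\boldsymbol{\tau}(x^n)$ and $\boldsymbol{\tau}_c(x^n)$ lie in the same cuboid of side $s/n$ (with $\boldsymbol{\tau}_c(x^n)$ at its center), each coordinate differs by at most $s/(2n)$, so $\|\boldsymbol{\tau}(x^n)-\boldsymbol{\tau}_c(x^n)\|\leq \frac{s\sqrt{d}}{2n}$. Multiplying through by $n$ gives the bound $n\cdot 2\wp\cdot\frac{s\sqrt{d}}{2n}=\wp s\sqrt{d}=2\kappa s$, completing the proof.

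There is no real obstacle here; the only subtlety is recognizing that one should invoke the optimality condition at $\boldsymbol{\tau}_c(x^n)$ (not at $\boldsymbol{\tau}(x^n)$) so that the $\psi$-difference cancels against an inner product evaluated at $\boldsymbol{\tau}_c(x^n)$, leaving exactly the difference $\boldsymbol{\tau}(x^n)-\boldsymbol{\tau}_c(x^n)$ in the remaining inner product. This is the cleanest route; attempting instead a Taylor expansion of $\psi$ around $\hat{\theta}_c(x^n)$ would work but would require controlling the Hessian and would not give the sharp geometric constant $2\kappa s$.
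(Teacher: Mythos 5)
Your proof is correct and in substance it is the paper's own argument: both rest on the defining optimality of $\hat{\theta}_c(x^n)$ at $\boldsymbol{\tau}_c(x^n)$, the compactness bound $\|\theta\|\leq\wp$, and the cuboid half-diagonal $\|\boldsymbol{\tau}(x^n)-\boldsymbol{\tau}_c(x^n)\|\leq\frac{s\sqrt{d}}{2n}$. The only cosmetic difference is that you collapse the estimate into a single Cauchy--Schwarz on $n\langle\hat{\theta}(x^n)-\hat{\theta}_c(x^n),\boldsymbol{\tau}(x^n)-\boldsymbol{\tau}_c(x^n)\rangle$ with factor $2\wp$, whereas the paper reaches the same $2\kappa s$ as two separate $\kappa s$ contributions via a max-difference inequality.
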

\begin{proof}
See Appendix \ref{app::apprxLemm}.
\end{proof}

We need the following machinery lemmas for the achievability proof.
\begin{Lemma}
\label{fIsLipschitz}
There exists a Lipschitz constant $K_0$ independent of $n$, so that for any minimal sufficient statistics $\boldsymbol{\tau}_1$ and $\boldsymbol{\tau}_2$,
\begin{equation}
|f(\boldsymbol{\tau}_1)-f(\boldsymbol{\tau}_2)|\leq K_0 \|\boldsymbol{\tau}_1-\boldsymbol{\tau}_2\|.
\end{equation}
\end{Lemma}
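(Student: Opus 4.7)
The plan is to discard the parts of $f(\boldsymbol{\tau})$ that are constant in $\boldsymbol{\tau}$ (the $\frac{d}{2n}\log n$, $\frac{d\log s}{n}$, $\frac{3\kappa s}{n}$, $\frac{C}{n}$ terms), since additive constants do not affect the Lipschitz modulus, and to concentrate on showing that the map
\begin{equation*}
g(\boldsymbol{\tau}):=-\langle\hat{\theta}(\boldsymbol{\tau}),\boldsymbol{\tau}\rangle+\psi\bigl(\hat{\theta}(\boldsymbol{\tau})\bigr)
\end{equation*}
is Lipschitz on the set of admissible minimal sufficient statistics. The key observation is that $-g$ is, by definition~(\ref{thetaHatEquation}), a supremum of affine functions of $\boldsymbol{\tau}$:
\begin{equation*}
-g(\boldsymbol{\tau})=\sup_{\theta\in\Theta}\bigl(\langle\theta,\boldsymbol{\tau}\rangle-\psi(\theta)\bigr),
\end{equation*}
i.e., it is (up to a sign) the Legendre--Fenchel conjugate of $\psi$ restricted to $\Theta$. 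Suprema of affine functions are convex, and when the slopes $\theta$ range over a bounded set, the resulting conjugate is globally Lipschitz with modulus equal to the slope bound. This is precisely what is available here, since compactness of $\Theta$ gives $\|\theta\|\le \wp$ for every $\theta\in\Theta$.

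Concretely, I would carry out the standard envelope-type comparison. Fix $\boldsymbol{\tau}_1,\boldsymbol{\tau}_2$. Because $\hat{\theta}(\boldsymbol{\tau}_1)$ attains the supremum defining $-g(\boldsymbol{\tau}_1)$ but is only a feasible (not necessarily optimal) point for $-g(\boldsymbol{\tau}_2)$,
\begin{equation*}
-g(\boldsymbol{\tau}_2)\;\ge\;\langle\hat{\theta}(\boldsymbol{\tau}_1),\boldsymbol{\tau}_2\rangle-\psi\bigl(\hat{\theta}(\boldsymbol{\tau}_1)\bigr),
\end{equation*}
which after subtracting the equality $-g(\boldsymbol{\tau}_1)=\langle\hat{\theta}(\boldsymbol{\tau}_1),\boldsymbol{\tau}_1\rangle-\psi(\hat{\theta}(\boldsymbol{\tau}_1))$ yields $g(\boldsymbol{\tau}_1)-g(\boldsymbol{\tau}_2)\le \langle\hat{\theta}(\boldsymbol{\tau}_1),\boldsymbol{\tau}_1-\boldsymbol{\tau}_2\rangle$. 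Swapping the roles of $\boldsymbol{\tau}_1$ and $\boldsymbol{\tau}_2$ produces a matching lower bound involving $\hat{\theta}(\boldsymbol{\tau}_2)$. Applying Cauchy--Schwarz and the uniform bound $\|\hat{\theta}(\cdot)\|\le \wp$ to both sides gives
\begin{equation*}
|g(\boldsymbol{\tau}_1)-g(\boldsymbol{\tau}_2)|\;\le\;\wp\,\|\boldsymbol{\tau}_1-\boldsymbol{\tau}_2\|,
\end{equation*}
so that one may take $K_0=\wp$, independent of $n$.

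There is essentially no hard step here; the only point that needs a careful word is that the argument works uniformly in $n$ because the uniform slope bound $\wp$ comes from the compactness of $\Theta$ alone and does not depend on the blocklength or on any property of the specific sequence whose sufficient statistic $\boldsymbol{\tau}$ is being evaluated. The additive $\frac{1}{n}$ terms in $f$ disappear when one forms the difference $f(\boldsymbol{\tau}_1)-f(\boldsymbol{\tau}_2)$, so the same constant $K_0=\wp$ serves for $f$ itself. If one prefers an analytic (rather than envelope-based) justification, an alternative is to note that $-g$ is convex and finite on an open set containing the convex hull of minimal sufficient statistics, hence differentiable almost everywhere with $\nabla(-g)(\boldsymbol{\tau})=\hat{\theta}(\boldsymbol{\tau})$, and apply the mean-value inequality with $\|\hat{\theta}(\boldsymbol{\tau})\|\le \wp$; this gives the same constant.
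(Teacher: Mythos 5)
Your proof is correct, and it reaches the same constant ($K_0=\wp$) as the paper, but by a genuinely different route. The paper's Appendix \ref{app:fLipschitz} differentiates $l(\boldsymbol{\tau})=\max_\theta\left(\langle\theta,\boldsymbol{\tau}\rangle-\psi(\theta)\right)$: it computes $\nabla l(\boldsymbol{\tau})$ by the chain rule, uses $\nabla\psi\bigl(\hat{\theta}(\boldsymbol{\tau})\bigr)=\mathbb{E}_{\hat{\theta}(\boldsymbol{\tau})}\left(\boldsymbol{\tau}(X)\right)=\boldsymbol{\tau}$ to cancel the implicit-differentiation terms, obtains the gradient identity $\nabla l(\boldsymbol{\tau})=\hat{\theta}(\boldsymbol{\tau})$, and then invokes $\|\hat{\theta}(\boldsymbol{\tau})\|\leq\wp$. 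You instead exploit that $-g$ is a supremum of affine functions with slopes confined to the compact set $\Theta$, and run the standard two-sided envelope (Legendre--Fenchel) comparison plus Cauchy--Schwarz, with no differentiation at all. What your route buys: it does not require smoothness of $\boldsymbol{\tau}\mapsto\hat{\theta}(\boldsymbol{\tau})$ nor the interior stationarity condition $\nabla\psi(\hat{\theta})=\boldsymbol{\tau}$, so it remains valid even when the maximizer sits on the boundary of $\Theta$, and it is arguably the cleaner argument for Lipschitz continuity alone. What the paper's route buys: the explicit identity $\nabla l(\boldsymbol{\tau})=\hat{\theta}(\boldsymbol{\tau})$, which is reused later (in the proof of Lemma \ref{rhoIsLipschitz} and again in Appendix \ref{sec::rhoZeroAppx}), so deriving it here is not wasted effort. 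One small bookkeeping slip in your write-up: subtracting the optimality equality at $\boldsymbol{\tau}_1$ from the feasibility inequality at $\boldsymbol{\tau}_2$ gives $g(\boldsymbol{\tau}_2)-g(\boldsymbol{\tau}_1)\leq\langle\hat{\theta}(\boldsymbol{\tau}_1),\boldsymbol{\tau}_1-\boldsymbol{\tau}_2\rangle$, not $g(\boldsymbol{\tau}_1)-g(\boldsymbol{\tau}_2)\leq\langle\hat{\theta}(\boldsymbol{\tau}_1),\boldsymbol{\tau}_1-\boldsymbol{\tau}_2\rangle$; since you symmetrize and apply Cauchy--Schwarz to both one-sided bounds anyway, the final two-sided estimate $|f(\boldsymbol{\tau}_1)-f(\boldsymbol{\tau}_2)|\leq\wp\|\boldsymbol{\tau}_1-\boldsymbol{\tau}_2\|$ is unaffected. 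Your observation that the $\mathcal{O}\!\left(\frac{\log n}{n}\right)$ additive terms in $f$ cancel in the difference is also exactly right.
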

\begin{proof}
See Appendix \ref{app:fLipschitz}.
\end{proof}

Let $\omega=\frac{\log{|\mathcal{X}|}-H}{5}$. Without loss of generality, we may assume that the true model is non-uniform distribution, otherwise TS code (like any other rational code) is obviously optimal. Therefore, $\omega>0$. Let $0\leq \lambda< H+\omega$, and $\rho(\lambda)=\text{Vol}\left\{\boldsymbol{\tau}:f(\boldsymbol{\tau})\leq \lambda\right\}$ be the volume of the sub-level sets.
\begin{Lemma}
\label{rhoIsLipschitz}
There exists a Lipschitz constant $K_1$ so that for all $0\leq a,b< H+\omega$,
\begin{equation*}
|\rho(a)-\rho(b)|\leq K_1|a-b|.
\end{equation*}
\end{Lemma}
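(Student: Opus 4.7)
The plan is to apply the co-area formula after establishing a uniform lower bound on $|\nabla f(\boldsymbol{\tau})|$ throughout the relevant sublevel set $\{\boldsymbol{\tau} : f(\boldsymbol{\tau}) \le H+\omega\}$.

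First I would compute $\nabla_{\boldsymbol{\tau}} f$. The last four summands of $f$ in \eqref{fEqUpp2} are constant in $\boldsymbol{\tau}$ and contribute nothing. For the first two, the first-order optimality condition $\nabla_{\theta}\psi(\hat{\theta}(\boldsymbol{\tau})) = \boldsymbol{\tau}$ characterizing the maximum-likelihood map \eqref{thetaHatEquation} causes the implicit-derivative terms (through $\hat\theta(\boldsymbol{\tau})$) to cancel, giving the clean identity $\nabla_{\boldsymbol{\tau}} f(\boldsymbol{\tau}) = -\hat{\theta}(\boldsymbol{\tau})$.

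Second, I would show that $\|\hat{\theta}(\boldsymbol{\tau})\| \ge \delta$ for some fixed $\delta>0$ uniformly on $\{f \le H+\omega\}$. Note that the first two terms of $f$ coincide exactly with $H(p_{\hat{\theta}(\boldsymbol{\tau})})$, since at $\theta=\hat{\theta}(\boldsymbol{\tau})$ one has $\nabla\psi(\hat\theta)=\boldsymbol{\tau}$, so $-\langle\hat\theta,\boldsymbol{\tau}\rangle+\psi(\hat\theta) = -\langle\theta,\nabla\psi(\theta)\rangle+\psi(\theta)|_{\hat\theta} = H(p_{\hat\theta})$. A short calculation shows $\nabla_{\theta}H(p_\theta)= -\nabla^2\psi(\theta)\theta$, which (using that $\nabla^2\psi$ is positive definite) vanishes only at $\theta=0$, the parameter corresponding to the uniform distribution with entropy $\log|\mathcal{X}|$. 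Since $\omega$ is defined precisely so that $H+\omega < \log|\mathcal{X}|$, continuity of $\theta\mapsto H(p_\theta)$ on compact $\Theta$ forces $\|\theta\|\ge\delta>0$ on $\{H(p_\theta)\le H+\omega\}$. The additive $\mathcal{O}(\log n/n)$ corrections in $f$ shrink this threshold only by a negligible amount for large $n$, so $|\nabla f(\boldsymbol{\tau})|=\|\hat\theta(\boldsymbol{\tau})\|\ge\delta$ throughout $\{f\le H+\omega\}$.

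Third, I would apply the co-area formula,
\begin{equation*}
\rho(b)-\rho(a) \;=\; \int_{\{a<f\le b\}} d\boldsymbol{\tau} \;=\; \int_a^b \int_{\{f=t\}} \frac{d\mathcal{H}^{d-1}}{|\nabla f|}\, dt \;\le\; \frac{1}{\delta}\int_a^b \mathcal{H}^{d-1}\bigl(\{f=t\}\bigr)\, dt,
\end{equation*}
and the main obstacle will be to bound the surface areas $\mathcal{H}^{d-1}(\{f=t\})$ uniformly for $t\in[0,H+\omega]$. I would handle this by a standard finite-cover argument based on the implicit function theorem: because $|\nabla f|\ge\delta>0$ and $\{f\le H+\omega\}\subset\mathcal{T}$ is compact, each point of this set has a neighborhood in which $f$ can be used as one smooth coordinate and the Jacobian of the coordinate change is bounded above and below. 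Finitely many such neighborhoods cover the compact sublevel set, and in each, the surface area of the level set is uniformly bounded. Summing yields a constant $M$ with $\mathcal{H}^{d-1}(\{f=t\})\le M$ for $t\in[0,H+\omega]$, so that $K_1 := M/\delta$ serves as the desired Lipschitz constant.
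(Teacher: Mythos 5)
Your proposal is correct in substance but reaches the Lipschitz bound by a genuinely different route than the paper. The two arguments share the same skeleton at the start: the identity $\nabla f(\boldsymbol{\tau})=-\hat{\theta}(\boldsymbol{\tau})$ (which the paper establishes in the proof of Lemma \ref{fIsLipschitz}) and a lower bound on $\|\hat{\theta}(\boldsymbol{\tau})\|$ on the relevant region coming from the margin $H+\omega<\log|\mathcal{X}|$; for the latter you invoke a soft compactness/continuity argument on $\{\theta: H(p_\theta)\le H+\omega\}$, whereas the paper extracts an explicit constant $\omega/T_{\max}$ by comparing $f(\boldsymbol{\tau}_1)$ with $f(\boldsymbol{\tau}_u)$ via Lipschitzness of $\psi$ and Cauchy--Schwarz --- both are fine. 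The real divergence is in how the ``area'' of the level sets is controlled. You use the co-area formula and then a finite-cover/implicit-function-theorem argument to bound $\mathcal{H}^{d-1}(\{f=t\})$ uniformly; the paper instead observes that the superlevel sets $\mathcal{K}^c=\{\boldsymbol{\tau}:f(\boldsymbol{\tau})>\lambda\}$ are convex (they are sublevel sets of the convex function $l(\boldsymbol{\tau})=\max_\theta(\langle\theta,\boldsymbol{\tau}\rangle-\psi(\theta))$), bounds $|\frac{d}{d\lambda}\rho(\lambda)|$ by $\frac{(1+C_f)S(\mathcal{K}^c)}{\|\nabla f\|}$ through an outer-parallel-body (Minkowski sum) estimate, and then gets the uniform surface-area bound for free from monotonicity of surface area among nested convex bodies, $S(\mathcal{K}^c)\le S(\mathcal{T})<\infty$. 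The paper's convexity route is shorter and avoids any local coordinates or level-set regularity discussion; your route is more general, since it never uses convexity of $l$ and would apply to any $C^1$ function with nonvanishing gradient on a compact sublevel set, at the price of heavier machinery and a non-explicit constant. One point you should make explicit if you carry out the covering argument: the constant $M$ bounding $\mathcal{H}^{d-1}(\{f=t\})$ must be independent of $n$; this does hold because the $n$-dependent terms in $f$ are additive constants in $\boldsymbol{\tau}$, so $\nabla f$ and the family of level sets (up to an $o(1)$ relabeling of levels) do not depend on $n$, but as written your proposal does not flag it.
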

\begin{proof}
See Appendix \ref{app:hLipschitz}.
\end{proof}

For our converse proof, we will need the regular value theorem \cite[Prop. 2.3.2]{balasko} from manifold theory (see also  \cite[Theorem 9]{robbin}), stated as follows.
\begin{Theorem}
\label{thetaZeroDim}
Let $M$ and $N$ be smooth manifolds of dimensions $m_1,m_2$ with $m_1\geq m_2$. Let $\eta_0:M\longrightarrow N$ and $b\in N$ be such that for any $a\in\eta_0^{-1}(b)$, the Jacobian matrix of $\eta_0$ at $a$ is a surjective map from $M$ to $N$. Then, $\eta_0^{-1}(b)$ is a $(m_1-m_2)$-dimensional manifold.
\end{Theorem}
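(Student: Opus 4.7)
The plan is to construct smooth $(m_1-m_2)$-dimensional charts for $\eta_0^{-1}(b)$ around each of its points via the implicit function theorem (IFT); this is the standard proof pattern for the regular value / submersion theorem, and I would follow it almost verbatim since $M$ and $N$ are arbitrary abstract smooth manifolds and no exponential-family structure enters.

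First I would fix an arbitrary $a\in\eta_0^{-1}(b)$ and choose smooth coordinate charts $(U,x)$ around $a$ in $M$ and $(V,y)$ around $b$ in $N$, arranged so that $x(a)=0$, $y(b)=0$, and $\eta_0(U)\subset V$. In these coordinates $\eta_0$ becomes a smooth map $\widetilde{\eta}_0:x(U)\subset\mathbb{R}^{m_1}\to\mathbb{R}^{m_2}$, and the surjectivity hypothesis on the Jacobian at $a$ becomes rank $m_2$ of $D\widetilde{\eta}_0(0)$. After permuting the $x_j$ I may assume the leading $m_2\times m_2$ block of the Jacobian is invertible. I would then apply the IFT to the equation $\widetilde{\eta}_0(x_1,\ldots,x_{m_1})=0$: it yields an open neighborhood $W\subset\mathbb{R}^{m_1-m_2}$ of the origin and a smooth map $g:W\to\mathbb{R}^{m_2}$ such that, on a possibly smaller neighborhood of $a$, the level set $\widetilde{\eta}_0^{-1}(0)$ coincides exactly with the graph $\{(g(u),u):u\in W\}$. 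Composing the parameterization $u\mapsto(g(u),u)$ with the chart $x^{-1}$ gives a smooth parameterization of $\eta_0^{-1}(b)\cap U$ by an open subset of $\mathbb{R}^{m_1-m_2}$, whose inverse (projection onto the last $m_1-m_2$ ambient coordinates) serves as a smooth chart on $\eta_0^{-1}(b)$ of the desired dimension.

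Finally I would let $a$ range over all of $\eta_0^{-1}(b)$, producing an atlas. Any two overlapping charts obtained in this way have transition maps that are compositions of smooth ambient coordinate changes on $M$ with smooth implicit solutions supplied by the IFT, so the transitions are automatically smooth. The subspace topology on $\eta_0^{-1}(b)$ inherited from $M$ is Hausdorff and second countable, so the atlas defines a genuine smooth manifold structure on $\eta_0^{-1}(b)$ of dimension $m_1-m_2$. The only step requiring any real care is the normal-form reduction in the chart calculation: one must verify that the surjectivity hypothesis, stated intrinsically on tangent spaces, survives the passage to coordinates and the coordinate permutation, which follows from the fact that the rank of a linear map is invariant under change of basis. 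Everything else is bookkeeping around the IFT.
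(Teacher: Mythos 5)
Your argument is the standard and correct proof of the regular value theorem via the implicit function theorem, and there is nothing to compare it against within the paper: Theorem~\ref{thetaZeroDim} is quoted there without proof, cited to external references on manifold theory, and those sources establish it by essentially the same implicit-function-theorem reduction you describe. The only cosmetic points worth noting are that the surjectivity hypothesis is properly a statement about the differential $T_aM\to T_bN$ (which your rank-invariance remark already handles) and that the case $\eta_0^{-1}(b)=\emptyset$ is vacuous; otherwise your chart construction, smooth transitions, and inherited Hausdorff/second-countable topology complete the proof.
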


We have the following Laplace's approximation theorem for the integral of manifolds. We refer the reader to \cite[Chap. 9, Th. 3]{wong} for a detailed proof. In the converse proof, we use the Laplace's approximation to bound the self information of the mixture distribution.
\begin{Theorem}[Laplace's Approximation]\label{laplaceTheorem}\cite{oliver2}
Let $D$ be a $\tilde{d}-$dimensional differentiable manifold embedded in $\mathbb{R}^m$ and $\eta_1(\cdot)$ and $\eta_2(\cdot)$ be functions that are infinitely differentiable on $D$. Let
\begin{equation}
\label{lapint}
Z(n)=\int_{D}{\eta_2(x)e^{-n\eta_1(x)}dx}
\end{equation}
Assume that: (i) the integral in (\ref{lapint}) converges absolutely for all $n\geq n_0$; (ii) there exists a point $x^*$ in the interior of $D$ such that for every $\epsilon>0$, $\xi(\epsilon)>0$ where
\begin{equation*}
\xi(\epsilon)=\inf\left\{\eta_1(x)-\eta_1(x^*): x\in D \mbox{ and } |x-x^*|\geq \epsilon\right\}
\end{equation*}
and (iii) the Hessian matrix $\mathcal{E}=\left(\frac{\partial^2 \eta_1(x)}{\partial{x}_i\partial x_j}\right) \Big| _{x=x^*}$ is positive definite. Let $F\in \mathbb{R}^{m\times \tilde{d}}$ be an orthonormal basis for the tangent space to $D$ at $x^*$. Then
\begin{equation*}
Z(n)=e^{-n\eta_1(x^*)}\left(\frac{2\pi}{n}\right)^{\frac{\tilde{d}}{2}}\eta_2(x^*)\left|F^T\mathcal{E}F\right|^{-\frac{1}{2}}\left(1+\mathcal{O}\left(\frac{1}{n}\right)\right)
\end{equation*}
\end{Theorem}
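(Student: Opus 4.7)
The plan is to localize the integral near $x^*$, straighten $D$ with a chart whose Jacobian at the origin is $F$, Taylor expand to second order, and apply the classical Gaussian integral on $\mathbb{R}^{\tilde d}$. Conditions (ii) and (i) immediately carry out the localization: for any fixed $\epsilon>0$ and $n\geq n_0$, using $\eta_1(x)\geq \eta_1(x^*)+\xi(\epsilon)$ on $D\setminus B_\epsilon(x^*)$ and then absorbing the residual weight $e^{-n_0\eta_1}|\eta_2|$ into the absolutely convergent integral from (i),
\begin{equation*}
\left| \int_{D\setminus B_\epsilon(x^*)} \eta_2(x) e^{-n\eta_1(x)} dx \right| = O\!\left( e^{-n(\eta_1(x^*)+\xi(\epsilon)/2)} \right),
\end{equation*}
which is exponentially smaller than the target order $e^{-n\eta_1(x^*)} n^{-\tilde d/2}$. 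So it suffices to analyze $D\cap B_\epsilon(x^*)$ for any conveniently small $\epsilon$.

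On this neighborhood, since the columns of $F\in\mathbb{R}^{m\times\tilde d}$ are an orthonormal basis of $T_{x^*}D$, I would use the local graph parameterization $\phi(u)=x^*+Fu+N(u)$ for $u$ in a neighborhood of $0\in\mathbb{R}^{\tilde d}$, where $N(u)$ is smooth, lies in the normal bundle, and satisfies $N(u)=O(|u|^2)$. Because $D\phi(0)=F$ has orthonormal columns, the induced Riemannian volume on $D$ pulls back to $(1+O(|u|^2))\,du$. Condition (ii) makes $x^*$ an interior minimum of $\eta_1|_D$, so the tangential gradient $F^T\nabla\eta_1(x^*)$ vanishes and the full Taylor expansion reads
\begin{equation*}
\eta_1(\phi(u)) = \eta_1(x^*) + \tfrac{1}{2} u^T F^T \mathcal{E} F u + O(|u|^3),
\end{equation*}
with $F^T\mathcal{E}F$ positive definite by (iii). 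Smoothness of $\eta_2$ gives $\eta_2(\phi(u))=\eta_2(x^*)+O(|u|)$. Substituting, rescaling $u=v/\sqrt{n}$, and evaluating the resulting Gaussian integral $\int_{\mathbb{R}^{\tilde d}} e^{-\frac{1}{2} v^T F^T\mathcal{E}F v}\,dv = (2\pi)^{\tilde d/2}|F^T\mathcal{E}F|^{-1/2}$ reproduces the stated asymptotic.

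The hard part is the simultaneous bookkeeping of three error sources after the $\sqrt{n}$-rescaling: the cubic Taylor remainder in $\eta_1$, the linear variation of $\eta_2$, and the embedding correction $N(u)=O(|u|^2)$, which enters both as a higher-order perturbation inside $\eta_1$ (only after the tangential gradient has been shown to vanish) and as the second-order correction to the volume form. Keeping the relative error at $O(1/n)$ rather than $O(1/\sqrt{n})$ rests on the fact that odd-degree monomials integrate to zero against the centered Gaussian, so the first surviving correction is of even order; this combinatorial cancellation is precisely the step handled carefully in \cite[Chap.~9, Th.~3]{wong}, which is why the statement is invoked rather than reproved here.
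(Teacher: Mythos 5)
Your overall route --- localize via condition (ii) and absolute convergence, straighten $D$ by a graph chart over the tangent plane, Taylor-expand, and finish with the Gaussian integral, deferring the sharp $\mathcal{O}(1/n)$ bookkeeping to \cite[Chap.~9, Th.~3]{wong} --- is exactly the route the paper implicitly relies on: the paper itself gives no proof of this theorem, citing \cite{oliver2} and \cite{wong}. However, your sketch has one genuine gap, and it sits precisely in the part that Wong does \emph{not} cover, namely the reduction from the embedded manifold to $\mathbb{R}^{\tilde d}$. In the chart $\phi(u)=x^*+Fu+N(u)$ with $N(u)$ normal and $O(|u|^2)$, interiority of $x^*$ together with (ii) only forces the \emph{tangential} gradient $F^T\nabla\eta_1(x^*)$ to vanish. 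The cross term $\langle\nabla\eta_1(x^*),N(u)\rangle$ is therefore $O(|u|^2)$ --- the same order as the Hessian term, not higher order as your parenthetical claims --- and the correct expansion is $\eta_1(\phi(u))=\eta_1(x^*)+\tfrac12 u^T\bigl(F^T\mathcal{E}F+\mathcal{B}\bigr)u+O(|u|^3)$, where $\mathcal{B}$ is the second fundamental form of $D$ at $x^*$ contracted with the normal component of $\nabla\eta_1(x^*)$. Your stated expansion, and hence the constant $|F^T\mathcal{E}F|^{-1/2}$, follows only under an additional hypothesis you use silently: that the normal component of $\nabla\eta_1(x^*)$ vanishes (e.g.\ $x^*$ is a critical point of $\eta_1$ in the ambient $\mathbb{R}^m$) or that $D$ is affine near $x^*$.

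To see that the step genuinely fails rather than merely lacking detail, take $D=S^1\subset\mathbb{R}^2$, $\eta_1(x,y)=x+\tfrac{\alpha}{2}(x^2+y^2)$ with $\alpha>0$, $\eta_2\equiv 1$. All hypotheses (i)--(iii) hold with $x^*=(-1,0)$ and $\mathcal{E}=\alpha I$, so $F^T\mathcal{E}F=\alpha$; yet on $D$ one has $\eta_1=\cos t+\tfrac{\alpha}{2}$, whose intrinsic second derivative at the minimum is $1$, so the true Laplace constant is $1$, not $\alpha^{-1/2}$ --- your Gaussian step would give the wrong answer for every $\alpha\neq 1$. Note also that \cite[Chap.~9, Th.~3]{wong} treats full-dimensional domains, where an interior minimum kills the entire gradient and this issue cannot arise; so the manifold reduction is exactly the piece a proof here must supply, and it needs either the extra critical-point hypothesis or the curvature correction made explicit. (For the paper's converse this is harmless, since the constant is absorbed into $C$ in (\ref{converseAEQ}), but as a proof of the theorem as stated the expansion step does not go through.) The localization estimate and the $(1+O(|u|^2))\,du$ volume expansion in your sketch are fine.
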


\section{Proof of Theorem \ref{mainThm}}
\label{sec::proofMain}
\subsection{Achievability}
\label{subsec::Achiev}
In this subsection we bound the third-order coding rate of the quantized implementation of the TS code. We continue from the finite blocklength result in Theorem \ref{finiteBlckThm}, and evaluate its asymptotic performance.

For the constants $C$ and $A$ in Lemmas \ref{TypeClSizeLem} and \ref{maxLikeBerr}, let
\begin{equation}
\label{gammaEq}
\gamma = H+\frac{\sigma}{\sqrt{n}}Q^{-1}\Big(\epsilon-\frac{A}{\sqrt{n}}\Big)-\frac{d}{2n}\log{n} +\frac{d}{n}\log{s}+\frac{4\kappa s}{n}+\frac{C}{n}.
\end{equation}
Denote
\begin{align}
p_{\gamma}&:=\mathbb{P}_{\theta^*}\Big[\log{|T_{X^n}|}>n\gamma\Big] \label{pGammaFirst}\\
&=\mathbb{P}_{\theta^*}\Big[\log{|T_{\boldsymbol{\tau}_c(X^n)}|}>n\gamma\Big].\nonumber
\end{align}
We have
\begin{align}
p_{\gamma} &\leq \mathbb{P}_{\theta^*}\Big[-\log{{p_{\hat{\theta}_c(x^n)}(X^n)}}>nH +\sqrt{n}\sigma Q^{-1}\Big(\epsilon-\frac{A}{\sqrt{n}}\Big) +2\kappa s\Big] \label{aEq} \\		 &\leq\mathbb{P}_{\theta^*}\Big[\frac{-\log{{p_{\hat{\theta}(x^n)}(X^n)}}-nH}{\sqrt{n}\sigma}>Q^{-1}\Big(\epsilon-\frac{A}{\sqrt{n}}\Big)  \Big] \label{bEq} \\				
&\leq Q\Big(Q^{-1}\big(\epsilon-\frac{A}{\sqrt{n}}\big)\Big)+\frac{A}{\sqrt{n}} \label{dEq} \\
&=\epsilon \nonumber
\end{align}
where (\ref{aEq}) follows from Lemma \ref{TypeClSizeLem} and (\ref{gammaEq}), (\ref{bEq}) is from Lemma \ref{apprxLemm}, and (\ref{dEq}) is a consequence of Lemma \ref{maxLikeBerr}. Since for $\gamma$ in (\ref{gammaEq}), we have $p_{\gamma}\leq \epsilon$, therefore it satisfies the constraint of (\ref{mEq}). We can therefore, bound $M(\epsilon)$ defined in (\ref{mEq}), with this choice of $\gamma$. Fixing $\Delta=\frac{1}{n}$, we have
\begin{align}
M(\epsilon) &\leq \sum_{\substack{\boldsymbol{\tau}_c\in\mathcal{T}_c:\\ \frac{1}{n}\log{|T_{\boldsymbol{\tau}_c}|}\leq \gamma}}{|T_{\boldsymbol{\tau}_c}|} \nonumber \\
            &\leq \sum_{\substack{\boldsymbol{\tau}_c\in\mathcal{T}_c:\\ f(\boldsymbol{\tau}_c)-\frac{6\kappa s+C''}{n}\leq \gamma}}{2^{nf(\boldsymbol{\tau}_c)}} \label{useBoundLem} \\
            &= \sum_{i=0}^{\infty}\sum_{\substack{\boldsymbol{\tau}_c\in\mathcal{T}_c:\\ f(\boldsymbol{\tau}_c)\in\mathcal{A}_i }} {2^{nf(\boldsymbol{\tau}_c)}} \nonumber \\
						&\leq \sum_{i=0}^{\infty}\left|\left\{\boldsymbol{\tau}_c\in\mathcal{T}_c:f(\boldsymbol{\tau}_c)\in\mathcal{A}_i\right\}\right| \cdot  2^{n\gamma+6\kappa s+C''-ni\Delta} \label{mEpsilon}
\end{align}
where (\ref{useBoundLem}) follows from Corollary \ref{TypeCorollary} and $\mathcal{A}_i=\left(\gamma+\frac{6\kappa s+C''}{n}-(i+1)\Delta, \gamma+\frac{6\kappa s+C''}{n}-i\Delta\right]$. The rest of the proof is similar to \cite{oliver}, however we continue the proof for completeness. We have

\begin{align}
\left|\left\{\boldsymbol{\tau}_c\in\mathcal{T}_c:f(\boldsymbol{\tau}_c)\in\mathcal{A}_i\right\}\right|&=\sum_{\substack{\boldsymbol{\tau}_c\in\mathcal{T}_c:\\f(\boldsymbol{\tau}_c)\in\mathcal{A}_i}}\frac{\text{Vol}\left(G(\boldsymbol{\tau}_c)\right)}{\left(\frac{s}{n}\right)^d} \label{sumOne} \\
&= \frac{1}{\left(\frac{s}{n}\right)^d}{\text{Vol}\left(\bigcup_{\substack{\boldsymbol{\tau}_c\in\mathcal{T}_c:\\f(\boldsymbol{\tau}_c)\in\mathcal{A}_i}}G(\boldsymbol{\tau}_c)\right)} \label{aaEq2} \\
&\leq \frac{1}{\left(\frac{s}{n}\right)^d}{\text{Vol}\left(\bigcup_{\boldsymbol{\tau}\in\mathcal{T}:f(\boldsymbol{\tau})\in\mathcal{A}_i}G(\boldsymbol{\tau})\right)} \nonumber
\end{align}
where (\ref{sumOne}) results from $\text{Vol}\left(G(\boldsymbol{\tau}_c)\right)=\left(\frac{s}{n}\right)^d$, (\ref{aaEq2}) follows from disjointness of the cuboids. If $\boldsymbol{\tau}\in G(\boldsymbol{\tau}_c)$, then $\|\boldsymbol{\tau}-\boldsymbol{\tau}_c\|\leq\frac{s\sqrt{d}}{2n}$ and consequently by Lemma \ref{fIsLipschitz}
\begin{equation}
|f(\boldsymbol{\tau})-f(\boldsymbol{\tau}_c)|\leq K_0\cdot \frac{s\sqrt{d}}{2n}  := K_2\frac{s}{n} \label{ffLipDist}
\end{equation}
where $K_2=K_0\frac{\sqrt{d}}{2}$. Therefore, for $a=\gamma+\frac{6\kappa s+C''}{n}-(i+1)\Delta$,
\begin{align}
\left|\left\{\boldsymbol{\tau}_c\in\mathcal{T}_c:f(\boldsymbol{\tau}_c)\in \mathcal{A}_i\right\}\right|
&\leq \frac{1}{(\frac{s}{n})^d}\cdot \text{Vol}\Big(\bigcup_{ a<f(\boldsymbol{\tau})\leq a+\Delta}{G(\boldsymbol{\tau})}\Big) \nonumber \\
&\leq \frac{1}{(\frac{s}{n})^d}\text{Vol}\left(\left\{\boldsymbol{\tau}:f(\boldsymbol{\tau})\in \left(a-K_2\frac{s}{n},a+\Delta+K_2\frac{s}{n}\right]\right\}\right) \label{bistoshish} \\
&= \frac{1}{\left(\frac{s}{n}\right)^d}\left[\rho\left(a+\Delta+K_2\frac{s}{n}\right)-\rho\left(a-K_2\frac{s}{n}\right)\right] \label{breakPoint}
\end{align}
where (\ref{bistoshish}) is from (\ref{ffLipDist}). In order to continue from (\ref{breakPoint}), recall $\omega=\frac{\log{|\mathcal{X}|}-H}{5}$. Observe that by (\ref{gammaEq}) , $a+K_2\frac{s}{n}+\Delta\leq H+\frac{C}{\sqrt{n}}$, for a positive constant $C$. Since $\omega>0$, $H+\frac{C}{\sqrt{n}}< H+\omega$ for large enough $n$.  Similar argument shows that $0\leq a-K_2\frac{s}{n}<H+\omega$. Therefore boundary conditions of Lemma \ref{rhoIsLipschitz} are satisfied. Continuing from (\ref{breakPoint}) and using Lemma \ref{rhoIsLipschitz}, we then have
\begin{equation}
\left|\left\{\boldsymbol{\tau}_c\in\mathcal{T}_c:f(\boldsymbol{\tau}_c)\in \mathcal{A}_i\right\}\right|\leq \frac{K_1}{\left(\frac{s}{n}\right)^d}\cdot \left[\Delta+2K_2\frac{s}{n}\right]. \label{sizeTau}
\end{equation}
Applying (\ref{sizeTau}) to (\ref{mEpsilon}), we obtain
\begin{align*}
M(\epsilon) &\leq \sum_{i=0}^{\infty}{\frac{K_1}{(\frac{s}{n})^d}\cdot \left[\Delta+2K_2\frac{s}{n}\right]\cdot 2^{n\gamma+6\kappa s+C''-ni\Delta}} \nonumber \\
            &= \frac{n^d}{s^d}\cdot \left[\Delta+2K_2\frac{s}{n}\right]\cdot 2^{n\gamma+6\kappa s+C''}\cdot \frac{K_1}{1-2^{-n\Delta}}.
\end{align*}
From (\ref{gammaEq}) and since $s>0$ is a constant and $\Delta=\frac{1}{n}$, we obtain
\begin{equation*}
\log {M(\epsilon)}\leq nH+\sigma\sqrt{n}Q^{-1}(\epsilon)+\left(\frac{d}{2}-1\right)\log{n}+\mathcal{O}(1).
\end{equation*}
\subsection{Converse}
For a parameter vector $\theta\in\Theta$, define $J(\theta)=nH(p_{\theta})+\sigma(p_{\theta})\sqrt{n}Q^{-1}(\epsilon)$. We first rewrite the entropy function as follows:
\begin{align}
H(p_{\theta}) &= -\sum_{x\in\mathcal{X}}{p_{\theta}(x)\log{p_{\theta}}(x)} \nonumber \\
              &= -\sum_{x\in\mathcal{X}}{p_{\theta}(x)\left(\langle\theta,\boldsymbol{\tau}(x)\rangle-\psi(\theta)\right)} \label{defThetaP} \\
              &= -\langle\theta,\mathbb{E}_{\theta}(\boldsymbol{\tau}(x))\rangle+\psi(\theta) \nonumber \\
              &= -\langle\theta,\nabla\psi(\theta)\rangle+\psi(\theta) \label{fromFundEq}
\end{align}
where (\ref{defThetaP}) is from (\ref{pThetaEq}) and (\ref{fromFundEq}) is from $\mathbb{E}_{\theta}(\boldsymbol{\tau}(x))=\nabla\psi(\theta)$ \cite{jordan}. Taking derivative of (\ref{fromFundEq}) with respect to $\theta$, we obtain
\begin{equation}
\label{findZeros}
\nabla H(p_{\theta}) = -\theta\nabla^2\psi(\theta).
\end{equation}
Since $\nabla^2\psi(\theta) = \text{Cov}(\boldsymbol{\tau}(X))$ is positive definite, (\ref{findZeros}) vanishes only at the uniform distribution $\theta_{u}=(0,\cdots,0)$. Since $\Theta$ has nonempty interior, let $\theta_0$ be a point in the interior of $\Theta$ with $J(\theta_0)\neq J(\theta_u)$. Define
\begin{equation*}
\Theta_0:=\left\{\theta\in\Theta:J(\theta)=J(\theta_0)\right\}.
\end{equation*}
As $\theta_u\notin\Theta_0$, $\nabla H(p_{\theta})$ is nonzero for all parameters $\theta\in\Theta_0$. Therefore, for large enough $n$, $\nabla J(\theta)$ is also nonzero for all $\theta\in\Theta_0$. Hence, the Jacobian of $J(\cdot)$ at any point in the set $J^{-1}(J(\theta_0))$ is a surjective map from $\Theta_0$ to $\mathbb{R}$. Theorem \ref{thetaZeroDim} then implies that $\Theta_0$ is a $(d-1)$-dimensional manifold.

In order to prove the converse, it suffices to show that
\begin{equation*}
\sup_{\theta\in\Theta_0}R_n(\epsilon,\phi,p_{\theta})\geq \frac{J(\theta_0)}{n}+\left(\frac{d}{2}-1\right)\frac{\log{n}}{n}-\mathcal{O}\left(\frac{1}{n}\right).
\end{equation*}

Let $\overline{p}(x^n)$ be the mixture distribution with uniform prior among $n$-length $i.i.d.$ distributions with marginals parametrized by $\Theta_0$, i.e.
\begin{equation}
\label{mixtureEq}
\overline{p}(x^n)=\frac{1}{\text{Vol}(\Theta_0)}\int_{\theta\in\Theta_0}p_{\theta}(x^n)d\theta
\end{equation}
where $\text{Vol($\cdot$)}$ is the $d$-dimensional volume. For any $\gamma>0$, applying Theorem 3 in \cite{oliver2} gives
\begin{equation}
\label{epsilon2tauEq}
\epsilon+2^{-\gamma}\geq \inf_{\theta\in\Theta_0}\mathbb{P}_{\theta}\left(\iota_{\overline{p}}(X^n)\geq k+\gamma\right)
\end{equation}
where $\iota_{\overline{p}}(X^n):=-\log{\overline{p}(X^n)}$ is the self information of the mixture distribution. We then provide a lower bound for the self information. We may rewrite (\ref{mixtureEq}) as
\begin{align*}
\overline{p}(x^n)=\frac{1}{\text{Vol}(\Theta_0)}\int_{\theta\in\Theta_0}2^{-g(\theta)}d\theta
\end{align*}
where $g(\theta):=-\log{p_{\theta}(x^n)}$. Since $\Theta_0$ is a $(d-1)$-dimensional manifold, application of the Laplace's approximation of integrals (Theorem \ref{laplaceTheorem}) yields
\begin{equation}
\label{overlineEq}
\overline{p}(x^n)=\frac{1}{\text{Vol}(\Theta_0)}2^{-g(\hat{\theta})}\left(\frac{2\pi}{n}\right)^{\frac{d-1}{2}}\left|F^T\mathcal{E}F\right|^{-\frac{1}{2}}\left(1+\mathcal{O}\left(\frac{1}{n}\right)\right)
\end{equation}
where $\hat{\theta}:=\hat{\theta}(x^n)$ is the maximum likelihood estimate of $\theta$ for $x^n$. Continuing from (\ref{epsilon2tauEq}) for a constant $C>0$, we obtain
\begin{align}
&\epsilon+2^{-\gamma}\nonumber \\
                    &\geq \inf_{\theta\in\Theta_0}\mathbb{P}_{\theta}\left(\iota_{\overline{p}}(X^n)\geq k+\gamma \right) \nonumber \\
                    &\geq\inf_{\theta\in\Theta_0}\mathbb{P}_{\theta}\left(-\log{p_{\hat{\theta}}(X^n)}+\frac{d-1}{2}\log{n}+C\geq k+\gamma\right) \label{converseAEQ}\\
                    &=\inf_{\theta\in\Theta_0}\mathbb{P}_{\theta}\Bigg(\frac{-\log{p_{\hat{\theta}}(X^n)}-nH(p_{\theta})}{\sqrt{n}\sigma} \geq\frac{k+\gamma-\frac{d-1}{2}\log{n}-C-nH(p_{\theta})}{\sqrt{n}\sigma}\Bigg) \nonumber\\
                    &\geq Q\left(\frac{k+\gamma-\frac{d-1}{2}\log{n}-C-nH(p_{\theta})}{\sqrt{n}\sigma}\right) -\frac{A}{\sqrt{n}} \label{converseBerryEsseen}
\end{align}
where (\ref{converseAEQ}) is due to (\ref{overlineEq}) and the definition of $g(\cdot)$, while (\ref{converseBerryEsseen}) is from Lemma \ref{maxLikeBerr}.
Setting $\gamma=\frac{1}{2}\log{n}$ and rearranging gives
\begin{equation*}
\frac{k}{n}\geq \inf_{\theta\in\Theta_0}H(p_{\theta})+\frac{\sigma(p_{\theta})}{\sqrt{n}}Q^{-1}\left(\epsilon+\frac{A+1}{\sqrt{n}}\right)+\left(\frac{d}{2}-1\right)\frac{\log{n}}{n}+\frac{C}{n}.
\end{equation*}
Recalling that $H(p_{\theta})+\frac{\sigma(p_{\theta})}{\sqrt{n}}Q^{-1}(\epsilon)$ is fixed at $\frac{J(\theta_0)}{n}$ for all $\theta\in\Theta_0$ and that $\frac{k}{n}=\max_{\theta\in\Theta_0}R_n(\epsilon,\phi,p_{\theta})$, theorem follows.

\section{Parametric Markov Class}
\label{sec::ParMrk}
We now consider extensions to the class of parametric Markov models. Let $\mathcal{M}$ be the exponential family of first-order, stationary, irreducible and aperiodic Markov sources, parametrized by a $d$-dimensional parameter vector $\theta\in\Theta_{\mathcal{M}}\subset\mathbb{R}^d$. Transition probabilities of the distribution $p_{\theta}\in\mathcal{M}$ has the following exponential structure
\begin{equation}
\label{parMrkBaseEq}
p_{\theta}(x_{i}|x_{i-1})=2^{\langle \theta, \boldsymbol{\tau}(x_{i-1},x_{i})\rangle -\psi(\theta)}
\end{equation}
where $\boldsymbol{\tau}:\mathcal{X}\times\mathcal{X}\to \mathbb{R}$ is the vector of sufficient statistics.

Similar to $\cite{merhavVFvsFV}$, we assume that the initial source symbol $x_0$ is fixed and known to both the encoder and the decoder. From (\ref{parMrkBaseEq}), the probability of a sequence $x^n$ drawn according to the first-order Markov source $p_{\theta}\in\mathcal{M}$ in the exponential family takes the form
\begin{align*}
p_{\theta}(x^n)&=\prod_{i=1}^{n}{p_{\theta}\left(x_i|x_{i-1}\right)} \\
               &= \prod_{i=1}^{n}{2^{\left\langle\theta,\boldsymbol{\tau}(x_{i-1},x_i)\right\rangle-\psi(\theta)}} \\
               &=2^{n\left[\left\langle \theta,\boldsymbol{\tau}(x^n)\right\rangle - \psi\left(\theta\right)\right]}
\end{align*}
where $\boldsymbol{\tau}(x^n)=\frac{\sum_{i=1}^{n}\boldsymbol{\tau}(x_{i-1},x_i)}{n}\in\mathbb{R}^d$ is a minimal sufficient statistic. Through the same approach as in Section {\ref{sec::TSC}}, we partition the convex hull of the space of minimal sufficient statistics into cuboids of side length $\frac{s}{n}$ defined as in (\ref{cuboidEq}). We then characterize quantized type classes as in (\ref{typeClassDefEq}).

Let
\begin{equation}
\label{entropyRate}
H(p_{\theta})=\lim_{n\rightarrow\infty}\frac{1}{n}\mathbb{E}_{\theta}\left[\log{\frac{1}{p_{\theta}(X^n)}}\right]
\end{equation}
and
\begin{equation}
\label{varentropyRate}
\sigma^2(p_{\theta})=\lim_{n\rightarrow\infty}\frac{1}{n}\mathbb{V}_{\theta}\left[\log{\frac{1}{p_{\theta}(X^n)}}\right]
\end{equation}
be the entropy and the varentropy rate of the Markov process parametrized by $\theta$, respectively. The following theorem characterizes the fundamental limits of universal one-to-one compression of parametric Markov sources, as well as asserting that the TS code is optimal up to the third-order term.
\begin{Theorem}
\label{markovTheorem}
For any first-order, stationary, irreducible and aperiodic Markov exponential model class parametrized by $\Theta_{\mathcal{M}}$
\begin{equation*}
\inf_{\phi}\sup_{\theta\in\Theta_{\mathcal{M}}}\left[R_n(\epsilon,\phi,p_{\theta})-H(p_{\theta})-\frac{\sigma(p_{\theta})}{\sqrt{n}}Q^{-1}(\epsilon)\right]= \left(\frac{d}{2}-1\right)\frac{\log{n}}{n}+\mathcal{O}\left(\frac{1}{n}\right).\nonumber
\end{equation*}
where the infimum is achieved by the quantized type class implementation of the TS code.
\end{Theorem}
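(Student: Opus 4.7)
The plan is to reuse the entire architecture of the proof of Theorem \ref{mainThm}, replacing each i.i.d.-specific ingredient by its Markov counterpart. The quantized type construction of Section \ref{sec::TSC} and the finite blocklength bound of Theorem \ref{finiteBlckThm} are defined purely in terms of $\boldsymbol{\tau}(x^n)$ and do not depend on whether the source is memoryless, so they carry over verbatim. The only work is therefore to re-establish the auxiliary results of Section \ref{sec::preRes} in the Markov regime and then rerun the calculations of Section \ref{sec::proofMain}.

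First I would revisit the type-class size bound, Lemma \ref{TypeClSizeLem}. The sandwich (\ref{TypeMainEq}) is purely combinatorial, so the only Markov-specific step is the local CLT estimate (\ref{stoneEq}). Because the sufficient statistics $\boldsymbol{\tau}(x^n)=\frac{1}{n}\sum_i\boldsymbol{\tau}(x_{i-1},x_i)$ are averages of a functional of a uniformly ergodic Markov chain (guaranteed by irreducibility and aperiodicity together with compactness of $\Theta_{\mathcal{M}}$), a Markov local limit theorem of Nagaev/Herv\'e-P\`ene type gives the same asymptotic form as (\ref{stoneEq}) with $\boldsymbol{\Sigma}$ interpreted as the asymptotic covariance of the partial sums (including cross-time correlations). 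The argument that $\boldsymbol{\mu}_c=\boldsymbol{\tau}_c$ transfers because $\psi(\theta)$ is still convex with gradient equal to the stationary mean of $\boldsymbol{\tau}$, so $\nabla\psi(\hat\theta_c)=\boldsymbol{\tau}_c$ still holds by the MLE equation. Lemma \ref{apprxLemm} uses only $\|\theta\|\leq\wp$ and the cuboid diameter, and Lemmas \ref{fIsLipschitz} and \ref{rhoIsLipschitz} only use convex-analytic properties of $\psi$; all three transfer unchanged.

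Next I would establish the Markov analogue of Lemma \ref{maxLikeBerr}. This amounts to a Berry-Esseen bound for $-\log p_{\hat\theta(X^n)}(X^n)$ under a uniformly ergodic Markov chain, with $H$ and $\sigma^2$ replaced by the entropy and varentropy rates (\ref{entropyRate}), (\ref{varentropyRate}). Such a bound follows from standard martingale-difference Berry-Esseen arguments combined with a quadratic Taylor expansion of the log-likelihood around $\theta^*$, exactly mirroring the i.i.d. derivation in Appendix \ref{app::maxLikeBerr}. With the Markov Berry-Esseen in hand, the achievability computation of Section \ref{subsec::Achiev} is reproduced line by line: choose $\gamma$ as in (\ref{gammaEq}) with the Markov rates, verify $p_\gamma\leq\epsilon$ using the Markov Lemma \ref{maxLikeBerr}, split the sum in (\ref{mEpsilon}) according to sub-level sets of $f$, and use Lemmas \ref{fIsLipschitz} and \ref{rhoIsLipschitz} to geometrically sum the contribution, yielding $\log M(\epsilon)\leq nH+\sigma\sqrt{n}Q^{-1}(\epsilon)+(d/2-1)\log n+\mathcal{O}(1)$.

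For the converse, the structural argument is identical: compute $\nabla H(p_\theta)$ using $\nabla\psi(\theta)=\mathbb{E}_\theta\boldsymbol{\tau}(X_0,X_1)$ to see that $\nabla J(\theta)$ vanishes only at the uniform-transition distribution, choose $\theta_0$ in the interior avoiding that point, invoke Theorem \ref{thetaZeroDim} to conclude $\Theta_0$ is a $(d-1)$-dimensional manifold, form the mixture $\overline p$ as in (\ref{mixtureEq}), apply Laplace's approximation (Theorem \ref{laplaceTheorem}) to extract $\frac{d-1}{2}\log n$, and close with the Markov Berry-Esseen bound to convert the tail probability. The main technical obstacle is the first step: verifying a local CLT and a Berry-Esseen bound for Markov chains with sufficient uniformity in $\theta\in\Theta_{\mathcal{M}}$, in particular that the remainder terms in (\ref{stoneEq}) and in Lemma \ref{maxLikeBerr} can be taken uniform over the parameter set. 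Once that uniformity is established, every other estimate in Sections \ref{sec::preRes} and \ref{sec::proofMain} transfers mechanically, and the theorem follows.
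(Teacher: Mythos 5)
Your proposal is correct and follows essentially the same route as the paper: the paper likewise observes that the quantized-type machinery and all convex-analytic lemmas carry over, and that the only deviations are at the local limit theorem step (\ref{stoneEq}), replaced by a Markov local CLT (Theorem 8 of \cite{korshunov}, applied to the chain of pairs $Y_i=(X_{i-1},X_i)$), and at the normal-approximation steps (\ref{dEq}) and (\ref{converseBerryEsseen}), replaced by a Markov Berry--Esseen bound (Lemma \ref{berryLemma}), after which achievability and converse are rerun verbatim with the entropy and varentropy rates. Your choice of invoking Nagaev/Herv\'e--P\`ene-type local CLTs and martingale-difference Berry--Esseen arguments differs only in which off-the-shelf tools supply these two ingredients, not in the structure of the argument.
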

\begin{proof}
Let $Y_i=(X_{i-1},X_i)$ be a random vector defined by overlapping blocks of $\{X_n\}$. Since ${X_n}$ form a Markov chain, so does $\{Y_n\}$. The proof follows the same lines as those in the proof of the parametric $i.i.d.$ class $\mathcal{P}$, with $\boldsymbol{\tau}(Y_n)$ playing the role of $\boldsymbol{\tau}(X_n)$. The only deviations from the memoryless proof occur in lines (\ref{stoneEq}), (\ref{dEq}) and (\ref{converseBerryEsseen}). As a counterpart of the $i.i.d.$ ratio limit theorem of (\ref{stoneEq}) for a Markov sources, we may use Theorem 8 of \cite{korshunov}, which states that
\begin{equation*}
p_{\hat{\theta}_c(x^n)}\left\{n\boldsymbol{\tau}(Y^n)\in nG(\boldsymbol{\tau}_c))\right\}= \frac{s^d}{\left(2\pi n\right)^{\frac{d}{2}}|\boldsymbol{\Sigma}|^{\frac{1}{2}}}e^{-\frac{\left\langle\left(x-n\boldsymbol{\mu}\right)\boldsymbol{\Sigma}^{-1},x-n\boldsymbol{\mu}\right\rangle}{2n}}+o\left(n^{-\frac{d}{2}}\right)
\end{equation*}
where $\boldsymbol{\Sigma}$ and $\boldsymbol{\mu}$ are the covariance and mean of the stationary distribution of the Markov chain, respectively. Finally (\ref{dEq}) and (\ref{converseBerryEsseen}) can be derived from the Markov version of the normal approximation inequality stated below. The proof is the same as in Appendix \ref{app::maxLikeBerr}.

\begin{Lemma}[Asymptotic Normality of Information]\label{berryLemma}
Fix a positive constant $\alpha$. For a first-order, stationary, irreducible and aperiodic Markov source, there exists a finite positive constant $A'$ such that for all $n\geq 1$ and $z$ such that $|z|\leq \alpha$,
\begin{equation}
\left|\mathbb{P}_{\theta^*}\left\{\frac{-\log{p_{\hat{\theta}(X^n)}(X^n)}-nH}{\sqrt{n}\sigma}>z\right\}-Q(z)\right|\leq \frac{A'}{\sqrt{n}}
\end{equation}
where $H:=H(p_{\theta^*})$ and $\sigma^2:=\sigma^2(p_{\theta^*})$, are the entropy and varentropy rate of the true model, $p_{\theta^*}$, respectively.
\end{Lemma}

The rest of the proof is the same as the $i.i.d.$ case and we omit it due to similarity.
\end{proof}

\begin{Example}
For the class of all first-order stationary, irreducible and aperiodic Markov sources $d=|\mathcal{X}|\left(|\mathcal{X}|-1\right)$, and Theorem \ref{markovTheorem} reduces to the result in \cite{nemat}.
\end{Example}

\section{Type Size Code with Point Type Classes}
\label{sec::AltApprch}
In this section we analyze the performance of the point type class implementation of the TS code. For a sequence $x^n\in\mathcal{X}^n$, define the point type class containing $x^n$ as
\begin{equation}
\label{merhavTypeClass}
T_{x^n}=\left\{y^n\in\mathcal{X}^n: p_{\theta}(x^n)=p_{\theta}(y^n) \mbox{ for all } \theta\in\Theta\right\}
\end{equation}
the set of all $n$-length sequences $y^n\in\mathcal{X}^n$ equiprobable with $x^n$, simultaneously under all models in $\mathcal{P}$. Consequently, (\ref{pNdimEq}) enforces two sequences to be in the same type class if and only if their minimal sufficient statistics are equal. Hence, from a geometric perspective, point type classes correspond to zero sidelength $s=0$ in Figure \ref{fig::TClassPar}, i.e. type classes are points in the space of minimal sufficient statistics. We first review the derivation of the size of a point type class from \cite{merhav}. We then provide upper and lower bounds for the asymptotic rate of the TS code with point type class implementation, showing that the TS code performs strictly worse for $s=0$ in terms of third-order coding rate.

Let $\boldsymbol{\tau}(x)[j]$, $j=1,\cdots,d$, be the $j$-th component of the $d$-dimensional vector $\boldsymbol{\tau}(x)$. For any index $j=1,\cdots,d$, there exists a fixed real number $\beta[j][0]$ and $r_j$ pairwise incommensurable real numbers $\beta[j][t]$, $t=1,\cdots,r_j$, such that regardless of the observed sample $x\in\mathcal{X}$, $\boldsymbol{\tau}(x)[j]$ can be uniquely decomposed as \cite{merhav}
\begin{equation}
\label{decomEq}
\boldsymbol{\tau}(x)[j] = \beta[j][0]+\sum_{t=1}^{r_j}\beta[j][t]\tilde{L}(x)[j][t]
\end{equation}
where $\tilde{L}(x)[j][t]$, $t=1,\cdots,r_j$, are integers depending on the sample $x$ through $\boldsymbol{\tau}(x)[j]$. The decomposition (\ref{decomEq}) defines a unique one-to-one mapping between the real-valued $\boldsymbol{\tau}(x)[j]$ and $r_j$ integers $\tilde{L}(x)[j][t]$. Concatenating the corresponding unique integers $\tilde{L}(x)[\cdot][\cdot]$, each $d$-dimensional vector $\boldsymbol{\tau}(x)$ corresponds to a unique integer-valued vector $\tilde{\boldsymbol{L}}(x)\in\mathbb{Z}^{\sum_{j=1}^{d}r_j}$. For all $j=1\cdots d$, we may choose without loss of generality $ \beta[j][0]=\boldsymbol{\tau}(1)[j]$. With this choice we always have $\tilde{\boldsymbol{L}}(1)=(0,\cdots,0)^T$. Let $d'$, which is called the dimensionality of the type class in \cite{merhav}, be the rank of the matrix
${\mathbb{\tilde{L}}}=\begin{bmatrix}
         \tilde{\boldsymbol{L}}(2)-\tilde{\boldsymbol{L}}(1) &   \cdots & \tilde{\boldsymbol{L}}(|\mathcal{X}|)-\tilde{\boldsymbol{L}}(1)
        \end{bmatrix}$. Therefore, there are $d'$ linearly independent rows in $\tilde{\mathbb{L}}$. Let the indices of the linearly independent rows be $i_1,\cdots,i_{d'}$. For any $x\in\mathcal{X}$, define $d'$-dimensional vector $\boldsymbol{L}(x)$ as $\boldsymbol{L}(x)[j]=\tilde{\boldsymbol{L}}(x)[i_j]$ for $j=1\cdots d'$. Since the other rows are linear combination of the independent rows, we can denote this transformation as $\tilde{\mathbb{L}}=\boldsymbol{R}\mathbb{L}$, where $\boldsymbol{R}$ is a $\sum_{j=1}^{d}{r_j}\times d'$ matrix and $\mathbb{L}$ is a full-rank $d'\times(|\mathcal{X}|-1)$ dimensional matrix $\mathbb{L}=\begin{bmatrix}
         {\boldsymbol{L}}(2)-{\boldsymbol{L}}(1) &   \cdots & {\boldsymbol{L}}(|\mathcal{X}|)-{\boldsymbol{L}}(1)
        \end{bmatrix}$. Since $\tilde{\boldsymbol{L}}(1)=\boldsymbol{L}(1)=\boldsymbol{0}$, there is a one to one correspondence between $\boldsymbol{L}(x)$ and $\tilde{\boldsymbol{L}}(x)$ and consequently between $\boldsymbol{L}(x)$ and $\boldsymbol{\tau}(x)$.

Note that $d'\geq d$, and in many cases the inequality is strict. The main finding of this section is that $d'$ is the critical dimension for the behavior of the TS code under point type classes, rather than $d$. Since $d'$ may be larger than $d$, the performance of the TS code with point type classes may be strictly worse than that with quantized type classes.

Let $\boldsymbol{\mathbf{b}}$ be a $d\times 1$ column vector containing $\beta[j][0]$'s for $j=1,\cdots,d$ and $\boldsymbol{\mathbb{A}}$ is a $d\times \sum_{j=1}^{d}{r_j}$ block diagonal matrix containing $\beta[j][t]$'s in (\ref{decomEq}). For real-valued vector $\ell\in\mathbb{R}^{d'}$, let $\boldsymbol{\tau}(\ell)=\boldsymbol{\mathbf{b}}+\boldsymbol{\mathbb{A}}\boldsymbol{R}\ell$. For a constant $C>0$ to be defined later, define $f_0(\ell)$ as follows:
\begin{align}
f_0(\ell)&= -\frac{1}{n} \left(\left\langle \hat{\theta}(\boldsymbol{\tau}(\ell)),\boldsymbol{\tau}(\ell) \right\rangle - \psi\left(\hat{\theta}(\boldsymbol{\tau}(\ell))\right)\right) -\frac{d'}{2n}\log{2\pi n} +\frac{C}{n} \label{subnserf} \\
 &=-\frac{1}{n} \left(\left\langle \hat{\theta}(\boldsymbol{\mathbf{b}}+\boldsymbol{\mathbb{A}} \boldsymbol{R} \ell),\boldsymbol{\mathbf{b}}+\boldsymbol{\mathbb{A}}\boldsymbol{R}\ell \right\rangle - \psi\left(\hat{\theta}\left(\boldsymbol{\mathbf{b}}+\boldsymbol{\mathbb{A}}\boldsymbol{R}\ell\right)\right)\right) -\frac{d'}{2n}\log{2\pi n} +\frac{C}{n} \label{secondContReal}.
\end{align}

For a sequence $x^n$, define $\boldsymbol{L}(x^n)$ similar to (\ref{tauXnDef}) as
\begin{equation}
\label{LSumEq}
\boldsymbol{L}(x^n)=\frac{\sum_{i=1}^{n}\boldsymbol{L}(x_i)}{n}
\end{equation}
and let $\mathcal{L}=\left\{\boldsymbol{L}(x^n):x^n\in\mathcal{X}^n\right\}$ be the set of lattice points. Throughout, $\boldsymbol{L}\in\mathbb{Z}^{d'}$ denotes an integer-valued lattice point, while $\ell\in\mathbb{R}^{d'}$ denotes real-valued $d'$-dimensional vector.

The size of a point type class is derived in \cite{merhav}, which we reproduce it in Appendix \ref{app::pointProof} for completeness. Moreover, we show that the third-order term in their result is a constant to obtain the following lemma.
\begin{Lemma}
\label{PointTypeClasLemma}
For large enough $n$, the size of the point type class containing $x^n$ with $\boldsymbol{L}(x^n)=\boldsymbol{L}$, is bounded as
\begin{equation}
\label{typefNotBnd}
nf_0(\boldsymbol{L})-2C\leq \log{|T_{x^n}|} \leq  nf_0(\boldsymbol{L})
\end{equation}
where $C$ is the constant in (\ref{subnserf}, \ref{secondContReal}).
\end{Lemma}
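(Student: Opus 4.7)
The plan is to adapt the same ``probability of the type class divided by probability of a single sequence'' identity used in Lemma \ref{TypeClSizeLem}, but now applied to the exact point type class rather than a cuboid neighbourhood. Because every $y^n\in T_{x^n}$ satisfies $p_\theta(y^n)=p_\theta(x^n)$ for all $\theta$, and in particular for $\theta=\hat\theta(x^n)$, we get the exact identity
\begin{equation*}
|T_{x^n}| \;=\; \frac{\mathbb{P}_{\hat\theta(x^n)}\{\boldsymbol{L}(X^n)=\boldsymbol{L}\}}{p_{\hat\theta(x^n)}(x^n)},
\end{equation*}
so no max/min over the class is needed and the upper and lower bounds collapse to within a single additive constant.

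The denominator is immediate from (\ref{pNdimEq}), giving $\log p_{\hat\theta(x^n)}(x^n)=n[\langle\hat\theta(x^n),\boldsymbol{\tau}(\boldsymbol{L})\rangle-\psi(\hat\theta(x^n))]$, which is exactly the first bracketed expression in $f_0(\boldsymbol{L})$ (with sign). For the numerator, I would invoke a multidimensional local central limit theorem for sums of i.i.d.\ lattice-valued random vectors applied to $\sum_{i=1}^n \boldsymbol{L}(X_i)$, which by the construction of $\boldsymbol{L}(\cdot)$ from $d'$ linearly independent rows lives on a full-rank $d'$-dimensional lattice under $\hat\theta(x^n)$. This yields
\begin{equation*}
\mathbb{P}_{\hat\theta(x^n)}\{\boldsymbol{L}(X^n)=\boldsymbol{L}\} \;=\; \frac{1}{(2\pi n)^{d'/2}\,|\boldsymbol{\Sigma}_L|^{1/2}}\,e^{-\frac{n}{2}(\boldsymbol{L}-\boldsymbol{\mu}_L)^T\boldsymbol{\Sigma}_L^{-1}(\boldsymbol{L}-\boldsymbol{\mu}_L)}\bigl(1+o(1)\bigr),
\end{equation*}
where $\boldsymbol{\mu}_L,\boldsymbol{\Sigma}_L$ are the mean and covariance of $\boldsymbol{L}(X)$ under $\hat\theta(x^n)$. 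Exactly as in Lemma \ref{TypeClSizeLem}, the MLE first-order condition $\nabla\psi(\hat\theta(x^n))=\boldsymbol{\tau}(\boldsymbol{L})$, combined with the affine bijection $\boldsymbol{\tau}=\boldsymbol{\mathbf{b}}+\boldsymbol{\mathbb{A}}\boldsymbol{R}\ell$, pins $\boldsymbol{\mu}_L$ equal to the observed $\boldsymbol{L}$, so the quadratic exponent vanishes identically at the MLE.

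Combining the two pieces and taking logarithms collapses everything into $nf_0(\boldsymbol{L})$ up to the $-\tfrac{d'}{2n}\log(2\pi n)$ term already absorbed into (\ref{subnserf}), the constant $-\tfrac{1}{2}\log|\boldsymbol{\Sigma}_L|$, and the error $\log(1+o(1))$; I would choose $C$ in (\ref{subnserf}) so that these residuals lie in $[-2C,0]$ uniformly in $x^n$ for $n$ large. The main obstacle is the uniform applicability of the lattice local CLT: one needs (i) strict non-degeneracy of $\boldsymbol{\Sigma}_L$ uniformly over $\hat\theta(x^n)\in\Theta$, which follows from compactness of $\Theta$ together with the positive-definiteness of $\nabla^2\psi$ and the full-rank construction of $\mathbb{L}$; and (ii) a $o(1)$ error term that is uniform in the target lattice point $\boldsymbol{L}$, which is handled by the standard Fourier-analytic form of the local CLT (as used in [merhav]) since $\hat\theta(x^n)$ varies in a compact set. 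Once (i) and (ii) are in place, the residual terms are bounded by a fixed constant and the two displayed inequalities (\ref{typefNotBnd}) follow.
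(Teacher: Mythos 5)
Your plan follows essentially the same route as the paper's proof: the exact counting identity $|T_{x^n}|=\mathbb{P}_{\theta}\{\boldsymbol{L}(X^n)=\boldsymbol{L}\}/p_{\theta}(x^n)$, a lattice local limit theorem for the numerator with the MLE first-order condition annihilating the Gaussian exponent, and a uniform constant bound on $\log\det$ of the covariance of $\boldsymbol{L}(X)$ via compactness of $\Theta$ and full rank of $\mathbb{L}$ (the paper gets the upper bound from Hadamard's inequality and the lower bound from the matrix determinant lemma applied to $M(\theta)=\mathbb{L}(\boldsymbol{P}-\boldsymbol{p}\boldsymbol{p}^T)\mathbb{L}^T$). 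One small correction: the non-degeneracy of $\boldsymbol{\Sigma}_L$ does not follow from positive-definiteness of $\nabla^2\psi$ (that only controls the $d$-dimensional statistic $\boldsymbol{\tau}$, while $d'\geq d$), but from the letter probabilities being bounded away from zero on compact $\Theta$ together with the full-rank construction of $\mathbb{L}$ --- ingredients you already list.
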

\begin{proof}
See Appendix \ref{app::pointProof}.
\end{proof}
The following is our main theorem for this section, characterizing the exact performance of the TS code with point type classes up to third-order.
\begin{Theorem}
\label{latticeTSCThm}
Let $\phi_0$ be the point type class implementation of the TS code. The $\epsilon$-coding rate of $\phi_0$, for all $\theta\in\Theta$ is given by
\begin{equation}
R_n(\epsilon,\phi_0,{p_{\theta}})=H(p_{\theta})+\frac{\sigma(p_{\theta})}{\sqrt{n}}Q^{-1}(\epsilon) +\left(\frac{d'}{2}-1\right)\frac{\log{n}}{n}+\mathcal{O}\left(\frac{1}{n}\right). \label{mainEq}
\end{equation}
\end{Theorem}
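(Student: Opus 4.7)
The plan is to mirror the proof of Theorem~\ref{mainThm}, exchanging the quantized type structure for the point type structure throughout, with Lemma~\ref{PointTypeClasLemma} playing the role of Lemma~\ref{TypeClSizeLem} (and Corollary~\ref{TypeCorollary}) and $f_0$ replacing $f$. The critical structural change is that the cuboid covering of side length $s/n$ is replaced by the discrete lattice $\mathcal{L}\subset\mathbb{R}^{d'}$, whose points are spaced by $1/n$; since the relevant dimension is now $d'$ rather than $d$, the lattice point count (rather than a cuboid count) drives the third-order term and yields $(d'/2-1)\log n$.

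For the upper bound on $R_n$, I would begin from Theorem~\ref{finiteBlckThm} specialized to point type classes, so that $R_n=\frac{1}{n}\lceil\log M(\epsilon)\rceil$ with the sum in~(\ref{mEq}) ranging over $\boldsymbol{L}\in\mathcal{L}$. Choose $\gamma$ analogous to~(\ref{gammaEq}), but with lattice dimension $d'$ and no $s$-dependent terms --- concretely
\[
\gamma \;=\; H + \frac{\sigma}{\sqrt{n}}Q^{-1}\!\Big(\epsilon-\tfrac{A}{\sqrt{n}}\Big) - \frac{d'}{2n}\log n + \frac{C'}{n}
\]
for a suitable constant $C'$. The upper bound of Lemma~\ref{PointTypeClasLemma} converts $\log|T_{X^n}|>n\gamma$ into a lower bound on $-\log p_{\hat\theta(X^n)}(X^n)$, after which Lemma~\ref{maxLikeBerr} verifies $\mathbb{P}_{\theta^*}\{\frac{1}{n}\log|T_{X^n}|>\gamma\}\leq\epsilon$. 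Unlike in the quantized case, no analog of Lemma~\ref{apprxLemm} is needed here because point type classes employ the exact maximum likelihood estimate.

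Partitioning the feasible region into $f_0$-level-set slabs $\mathcal{A}_i$ of width $\Delta=1/n$ and applying the upper bound of Lemma~\ref{PointTypeClasLemma} slab by slab produces
\[
M(\epsilon)\;\leq\; \sum_{i=0}^{\infty}\bigl|\{\boldsymbol{L}\in\mathcal{L}:f_0(\boldsymbol{L})\in\mathcal{A}_i\}\bigr|\cdot 2^{n\gamma-ni\Delta+\mathcal{O}(1)}.
\]
Because $\mathcal{L}$ is a translate of $\tfrac{1}{n}\mathbb{Z}^{d'}$ restricted to a bounded region, it has density $n^{d'}$ per unit volume in its $d'$-dimensional ambient space; combined with an $f_0$-Lipschitz bound on the volume of sub-level sets (analogous to Lemmas~\ref{fIsLipschitz} and~\ref{rhoIsLipschitz}), each slab contributes $\mathcal{O}(n^{d'}\cdot\Delta)=\mathcal{O}(n^{d'-1})$ lattice points. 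Summing the geometric series with ratio $2^{-n\Delta}=1/2$ then yields $\log M(\epsilon)\leq n\gamma+(d'-1)\log n+\mathcal{O}(1)=nH+\sqrt{n}\sigma Q^{-1}(\epsilon)+(d'/2-1)\log n+\mathcal{O}(1)$, which is the desired upper bound.

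The matching lower bound on $M(\epsilon)$ is obtained by reversing every inequality in the preceding argument: the lower bound in Lemma~\ref{PointTypeClasLemma} combined with the two-sided bound of Lemma~\ref{maxLikeBerr} forces any feasible $\gamma$ to exceed $H+\frac{\sigma}{\sqrt{n}}Q^{-1}(\epsilon)-\frac{d'}{2n}\log n-\mathcal{O}(1/n)$, while the peak slab contains $\Omega(n^{d'-1})$ lattice points, each contributing at least $2^{nf_0(\boldsymbol{L})-2C}\geq 2^{n\gamma-\mathcal{O}(1)}$ to the sum. The principal technical obstacle is the two-sided, $\mathcal{O}(1/n)$-scale control on the slab lattice count near the peak: the volume Lipschitz bound must be extended to a matching \emph{lower} bound, which in turn requires the level surface $\{f_0=\gamma\}$ to be regular --- a condition that, as in Lemma~\ref{rhoIsLipschitz}, follows from non-uniformity of the true model via the implicit function theorem.
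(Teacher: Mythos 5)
Your achievability argument is essentially the paper's: the choice of $\gamma'$ with the $\frac{d'}{2n}\log(2\pi n)$ correction, the use of Lemma~\ref{PointTypeClasLemma} in place of the quantized type-size bounds (with no analog of Lemma~\ref{apprxLemm} needed), the slab decomposition with $\Delta=\frac1n$, and the count of at most $\mathcal{O}(n^{d'-1})$ lattice points per slab via disjoint balls of radius $\frac{1}{2n}$ all match the paper. The gap is in the converse, and it is precisely at the point you flag as the ``principal technical obstacle.'' Your count of $\Omega(n^{d'-1})$ lattice points in the peak slab rests on the claim that $\mathcal{L}$ is a translate of $\frac1n\mathbb{Z}^{d'}$ restricted to a bounded region, so that a slab of volume $\Omega(\Delta)$ automatically contains $\Omega(n^{d'}\Delta)$ lattice points. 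That claim is not justified and is not accurate as stated: $\mathcal{L}=\{\boldsymbol{L}(x^n):x^n\in\mathcal{X}^n\}$ consists only of the points $\frac1n\sum_x n_x\boldsymbol{L}(x)$ arising from empirical counts, i.e.\ the image of the set of types under $\mathbb{L}$, not all of $\frac1n\mathbb{Z}^{d'}$ in a box. For the upper bound only the separation $\|\boldsymbol{L}_1-\boldsymbol{L}_2\|\ge\frac1n$ matters, so this is harmless there; for the lower bound one must actually prove a covering property. The paper does this in Lemma~\ref{disConLem}: every $\ell$ in the convex hull $\mathfrak{L}$ is within $\frac{R}{n}$ of an achievable lattice point, shown constructively by rounding the convex-combination weights $a_i$ of $\ell$ to integer counts $n_i$ with $\sum_i n_i=n$ and $|n_i-na_i|<1$, and then comparing ball-coverings of the continuous slab and of the lattice slab. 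Without such a step, a slab of positive volume could in principle meet few (or no) points of $\mathcal{L}$, and your $\Omega(n^{d'-1})$ count does not follow.

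A second, smaller discrepancy: your proposed route to the slab-volume lower bound (regularity of the level surface $\{f_0=\gamma\}$ from non-uniformity of the true model, via the implicit function theorem) has the mechanism reversed relative to what is needed. A lower bound on $\frac{d}{d\lambda}\rho_0(\lambda)$ requires an \emph{upper} bound on $\|\nabla f_0\|$ (the paper uses $\|\nabla f_0(\ell)\|=\|\hat{\theta}(\ell)\|\le\wp$, automatic from compactness of $\Theta$) together with a positive constant lower bound on the surface area of the sub-level set $\mathcal{K}_0$; non-vanishing gradient from non-uniformity is what the paper uses for the opposite direction, the Lipschitz \emph{upper} bound in Lemma~\ref{rhoIsLipschitz}. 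The surface-area lower bound in Lemma~\ref{rho0LowerDrive} is itself a nontrivial argument: the paper lower-bounds $\mathrm{Vol}(\mathcal{K}_0)$ by exhibiting a ball of radius $\frac{\delta}{2K_5}$ around the minimizer of $\Upsilon(\ell)$ inside $\mathcal{K}_0$, under the additional assumption that the true model is not the minimum-entropy model in the class ($H>\min_{\theta}H(p_\theta)$). Your sketch omits both of these ingredients, so the converse as proposed does not close.
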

\begin{proof}
The achievability proof is similar to Section \ref{sec::proofMain}, hence we only highlight the differences. Again for simplicity, we denote $H=H(p_{\theta^*})$ and $\sigma=\sigma(p_{\theta^*})$ as the entropy and the varentropy of the underlying model $p_{\theta^*}$, respectively. Let
\begin{equation}
\label{gammaPrimeEq}
\gamma'= H+\frac{\sigma}{\sqrt{n}}Q^{-1}\left(\epsilon-\frac{A}{\sqrt{n}}\right)-\frac{d'}{2n}\log\left(2\pi n\right) +\frac{C}{n}.
\end{equation}
We now show that for this choice of $\gamma'$, $p_{\gamma'}\leq \epsilon$, where $p_{\gamma'}$ is defined as in (\ref{pGammaFirst}). We have
\begin{align}
p_{\gamma'}&=\mathbb{P}_{\theta^*}\left[\log{|T_{X^n}|>n\gamma'}\right] \nonumber \\
        &=\mathbb{P}_{\theta^*}\left[\frac{-\log{p_{\hat{\theta}}(x^n)}-nH}{\sigma\sqrt{n}}>Q^{-1}\left(\epsilon-\frac{A}{\sqrt{n}}\right)\right] \label{typeLemmas} \\
				&\leq Q\left(Q^{-1}\left(\epsilon-\frac{A}{\sqrt{n}}\right)\right)+\frac{A}{\sqrt{n}} \label{berApp} \\
				&= \epsilon \nonumber
\end{align}
where (\ref{typeLemmas}) follows from (\ref{typefNotBnd}, \ref{subnserf}, \ref{gammaPrimeEq}) by noticing that
\begin{equation*}
f_0(\boldsymbol{L})=-\frac{1}{n}\log{p_{\hat{\theta}(x^n)}(x^n)}-\frac{d'}{2n}\log{(2\pi n)}+\frac{C}{n}
\end{equation*}
for any $x^n$ with $\boldsymbol{L}(x^n)=\boldsymbol{L}$, and (\ref{berApp}) is an application of Lemma \ref{maxLikeBerr}.

Recall that there is a one-to-one correspondence between $T_{x^n}$ and $\boldsymbol{L}(x^n)$, hence we can denote $T_{x^n}$ as $T_{\boldsymbol{L}(x^n)}$. Furthermore, once $x^n$ is understood from the context, we simplify $T_{\boldsymbol{L}(x^n)}$ and rewrite it as $T_{\boldsymbol{L}}$. We can then reformulate the equation for $M(\epsilon)$ in (\ref{mEq}) for point type classes. We can achieve this, simply by replacing $\boldsymbol{\tau}_c(X^n)$ with $\boldsymbol{L}(x^n)$ as the representative of the type class.

We then bound $M(\epsilon)$ in (\ref{mEq}) with the choice of $\gamma'$ in (\ref{gammaPrimeEq}). Through the same approach as in Subsection \ref{subsec::Achiev}, one can show that
\begin{equation}
M(\epsilon)\leq  \sum_{i=0}^{\infty}\left|\left\{\boldsymbol{L}\in\mathcal{L}:f_0(\boldsymbol{L})\in\mathcal{A}'_i\right\}\right| \cdot  2^{\left\{n\gamma'+2C-ni\Delta\right\}} \label{newmEpsilon}
\end{equation}
where $\mathcal{A}'_i=\left(\gamma'+\frac{2C}{n}-(i+1)\Delta,\gamma'+\frac{2C}{n}-i\Delta\right]$ and $C$ is the constant in (\ref{typefNotBnd}). We now evaluate $\left|\left\{\boldsymbol{L}\in\mathcal{L}:f_0(\boldsymbol{L})\in\mathcal{A}'_i\right\}\right|$. Define a 2-norm ball of radius $r$ around a point $\ell_0\in\mathbb{R}^{d'}$ as
\begin{equation}
\label{ballsEq}
B_{r}(\ell_0)=\left\{\ell\in\mathbb{R}^{d'}:\|\ell-\ell_0\|<r\right\}.
\end{equation}
In the sequel we use $\boldsymbol{L}$ as the lattice points in $\mathcal{L}$, while we reserve the notation $\ell$ for points in the convex hull of $\mathcal{L}$ which we denote by $\mathfrak{L}=\text{conv}(\mathcal{L})$. Observe that for any two different points $\boldsymbol{L}_1,\boldsymbol{L}_2\in\mathcal{L}$, $\|\boldsymbol{L}_1-\boldsymbol{L}_2\|\geq \frac{1}{n}$, and  therefore, $B_{\frac{1}{2n}}(\boldsymbol{L}_1)$ and $B_{\frac{1}{2n}}(\boldsymbol{L}_2)$ are disjoint. Since the convex hull $\mathfrak{L}$ is a $d'$-dimensional space, there exists a constant $C>0$ (its precise value is $\frac{\pi^{\frac{d'}{2}}}{2^{d'}\Gamma(\frac{d'}{2}+1)}$ \cite{ren}) such that
\begin{equation}
\label{volumeEquation}
\text{Vol}\left(B_{\frac{1}{2n}}(\boldsymbol{L})\right)=\frac{C}{n^{d'}}.
\end{equation}
Therefore
\begin{align}
|\left\{\boldsymbol{L}\in\mathcal{L}:f_0(\boldsymbol{L})\in\mathcal{A}_i'\right\}|&=\sum_{\substack{\boldsymbol{L}\in\mathcal{L}\\f_0(\boldsymbol{L})\in\mathcal{A}_i'}}\frac{n^{d'}}{C}\text{Vol}\left(B_{\frac{1}{2n}}(\boldsymbol{L})\right) \nonumber \\
&=\frac{n^{d'}}{C}\text{Vol}\left(\bigcup_{\substack{\boldsymbol{L}\in\mathcal{L}\\f_0(\boldsymbol{L})\in\mathcal{A}'_i}}B_{\frac{1}{2n}}(\boldsymbol{L})\right) \label{ballDisjointness}\\
             &\leq \frac{n^{d'}}{C}\text{Vol}\left(\bigcup_{\substack{\ell\in\mathfrak{L}\\f_0(\ell)\in\mathcal{A}'_i}}B_{\frac{1}{2n}}(\boldsymbol{L})\right).
             \nonumber
\end{align}
where (\ref{ballDisjointness}) follows from disjointness of the balls. Proceeding as in Subsection \ref{subsec::Achiev}, it is straightforward to show that for a constant $C>0$
\begin{equation}
\label{numberOfLs}
|\left\{\boldsymbol{L}\in\mathcal{L}:f_0(\boldsymbol{L})\in\mathcal{A}'_i\right\}| \leq Cn^{d'-1}.
\end{equation}
The rest of the proof is similar to the Subsection \ref{subsec::Achiev}, which we omit due to similarity.

We now provide a converse for the performance of the Type Size code with point type classes. We can rewrite the corresponding finite blocklength result (\ref{mEq}) for point type classes as
\begin{equation}
\label{secondMEpsilon}
M(\epsilon)=\inf_{\gamma':p_{\gamma'}\leq \epsilon}v(\gamma'),
\end{equation}
where $p_{\gamma'}$ is defined as in (\ref{pGammaFirst}) and
\begin{equation}
v(\gamma')=\sum_{\substack{\boldsymbol{L}\in\mathcal{L}: \\ \frac{1}{n}\log{|T_{\boldsymbol{L}}|}\leq \gamma' }}{{|T_{\boldsymbol{L}}|}}.
\end{equation}
Notice that $v(\gamma')$ is non-decreasing function of $\gamma'$, while $p_{\gamma'}$ is non-increasing function of $\gamma'$. Therefore, if for some $\gamma'_0$, $p_{\gamma'_0}>\epsilon$, then one can conclude that
\begin{equation}
M(\epsilon)\geq v(\gamma_0'). \label{mEpsVRel}
\end{equation}
We then show that $p_{\gamma_0'}>\epsilon$ for the following choice of $\gamma_0'$
\begin{equation}
\label{gammaPrimeZeroEq}
\gamma'_0=H+\frac{\sigma}{\sqrt{n}}Q^{-1}\left(\epsilon+\frac{A+1}{\sqrt{n}}\right)-\frac{d'}{2n}\log{(2\pi n)}-\frac{C}{n}
\end{equation}
where $A$ is the constant in Lemma \ref{maxLikeBerr} and $C$ is the constant in (\ref{typefNotBnd}).
Indeed
\begin{align}
p_{\gamma'_0} &\geq\mathbb{P}_{\theta^*}\left[-\frac{1}{n}\log{p_{\hat{\theta}(X^n)}(X^n)}-\frac{d'}{2n}\log{(2\pi n)}-\frac{C}{n}>\gamma'_0\right] \label{firstLowLam} \\
&=\mathbb{P}_{\theta^*}\left[\frac{-\log{p_{\hat{\theta}(X^n)}(X^n)}-nH}{\sigma\sqrt{n}}>Q^{-1}\left(\epsilon+\frac{A+1}{\sqrt{n}}\right)\right] \label{secondLowLam} \\
&>\epsilon \label{thirdLowLam}
\end{align}
where (\ref{firstLowLam}) is from the type class size bound (\ref{typefNotBnd}) and the definition of $p_{\gamma'_0}$ in (\ref{pGammaFirst}), (\ref{secondLowLam}) is from the choice of $\gamma'_0$ in (\ref{gammaPrimeZeroEq}), and (\ref{thirdLowLam}) is a consequence of Lemma \ref{maxLikeBerr}. Continuing from (\ref{mEpsVRel}), we may write
\begin{align}
M(\epsilon)&\geq \sum_{\substack{\boldsymbol{L}\in\mathcal{L} \\ \frac{1}{n}\log{|T_{\boldsymbol{L}}|}\leq \gamma'_0}}{|T_{\boldsymbol{L}}|} \nonumber \\
           &\geq \sum_{\substack{\boldsymbol{L}\in\mathcal{L} \\ f_0(\boldsymbol{L})\leq \gamma'_0}}{2^{nf_0(\boldsymbol{L})-2C}} \label{mfarSec}
\end{align}
where (\ref{mfarSec}) exploits the bounds for the type class size (\ref{typefNotBnd}). For $\Delta=\frac{1}{n}$, (\ref{mfarSec}) can simply be lower bounded as follows by restricting the summation to $\boldsymbol{L}$ in $\mathcal{A}_0$, where $\mathcal{A}_0=\{\boldsymbol{L}\in\mathcal{L}: \gamma'_0-\Delta< f_0(\boldsymbol{L})\le \gamma'_0\}$
\begin{equation}
M(\epsilon)\geq \left|\mathcal{A}_0\right| \cdot  2^{n\gamma'_0-n\Delta-2C}. \label{newmEpsilon}
\end{equation}

We now provide a lower bound on $|\mathcal{A}_0|$. Let $\tilde{\mathcal{A}}_0=\{\ell\in\mathfrak{L}: \gamma'_0-\Delta< f_0(\ell)\le \gamma'_0\}$.
\begin{Lemma}
\label{disConLem}
There exists a constant $C$ such that
\begin{equation}
\label{disConEq}
\frac{\text{\emph{Vol}}\left(\bigcup_{\ell\in\tilde{\mathcal{A}}_0}B_{\frac{1}{2n}}(\ell)\right)}{\text{\emph{Vol}}\left(\bigcup_{\boldsymbol{L}\in\mathcal{A}_0}B_{\frac{1}{2n}}\left(\boldsymbol{L}\right)\right)}\leq C.
\end{equation}
\end{Lemma}
\begin{proof}
See Appendix \ref{app::disConLem}.
\end{proof}

We then have
\begin{align}
|\mathcal{A}_0|
            &=\sum_{\boldsymbol{L}\in\mathcal{A}_0}\frac{\text{Vol}\left(B_{\frac{1}{2n}}(\boldsymbol{L})\right)}{\text{Vol}\left(B_{\frac{1}{2n}}(\boldsymbol{L})\right)} \nonumber \\
            &= C n^{d'}\sum_{\boldsymbol{L}\in\mathcal{A}_0}\text{Vol}\left(B_{\frac{1}{2n}}\left(\boldsymbol{L}\right)\right)  \label{e1}\\
            &= C n^{d'}\text{Vol}\left(\bigcup_{\boldsymbol{L}\in\mathcal{A}_0}B_{\frac{1}{2n}}\left(\boldsymbol{L}\right)\right)  \label{e2}\\
            &\geq  n^{d'}\frac{\text{Vol}\left(\bigcup_{\ell\in\tilde{\mathcal{A}}_0}B_{\frac{1}{2n}}(\ell)\right)}{C} \label{e3}
\end{align}
where (\ref{e1}) follows from (\ref{volumeEquation}) (recall that the letter $C$ may denote different constants), (\ref{e2}) is due to disjointness of the balls, and (\ref{e3}) is a consequence of Lemma \ref{disConLem}. Define,
\begin{equation}
\rho_0(\lambda)=\text{Vol}\{\ell\in \mathfrak{L}:f_0(\ell)\leq \lambda\}.
\end{equation}
We need the following technical lemma, which we prove in Appendix \ref{sec::rhoZeroAppx}.
\begin{Lemma}
\label{rho0LowerDrive}
There exists a positive constant $K_4$, such that for all $\gamma'_0-\Delta\leq\lambda \leq \gamma'_0$ we have
\begin{equation*}
\left|\frac{d}{d\lambda}\rho_0(\lambda)\right|\geq K_4.
\end{equation*}
\end{Lemma}
Recalling the definition of $\tilde{\mathcal{A}}_0$, we may continue from (\ref{e3}) and write
\begin{align}
|\mathcal{A}_0| &\geq \frac{n^{d'}}{C}\text{Vol}\left(\cup_{\ell:\gamma_0'-\Delta<f_0(\ell)\leq\gamma_0'}B_{\frac{1}{2n}}(\ell)\right) \nonumber \\
                                                                 &\geq \frac{n^{d'}}{C}\text{Vol}\left(\{\ell:f_0(\ell)\in\left(\gamma_0'-\Delta,\gamma_0'\right]\}\right)  \label{eqqq1}\\
                                                                 &=\frac{n^{d'}}{C}\left(\rho_0(\gamma_0')-\rho_0(\gamma_0'-\Delta)\right) \label{eqqq2} \\
                                                                 &\geq \frac{n^{d'}}{C}K_4\Delta   \label{eqqq3}
\end{align}
where (\ref{eqqq1}) is by lower bounding the volume of the ball-covering of $\tilde{\mathcal{A}}_0$ by the volume of $\tilde{\mathcal{A}}_0$ itself, (\ref{eqqq2}) is from the definition of $\rho_0$ and (\ref{eqqq3}) is from Lemma \ref{rho0LowerDrive}.

Continuing from (\ref{newmEpsilon}), we have
\begin{align}
M(\epsilon) &\geq \frac{n^{d'}}{C}K_4\Delta \cdot  2^{n\gamma'_0-n\Delta-2C} \label{mepOne} \\
             &=Cn^{d'-1}2^{n\gamma'_0-2C-1} \label{mepthree}
\end{align}
where (\ref{mepOne}) is from (\ref{eqqq3}), and (\ref{mepthree}) is from $\Delta=\frac{1}{n}$. Replacing $\gamma'_0$ by (\ref{gammaPrimeZeroEq}), we obtain
\begin{equation*}
\log{M(\epsilon)}\geq nH+\sigma\sqrt{n}Q^{-1}(\epsilon)+\left(\frac{d'}{2}-1\right)\log{n}+\mathcal{O}(1).
\end{equation*}
\end{proof}

\section{Conclusion and Future Work}
\label{sec::conclusion}
We derived the fundamental limits for universal one-to-one coding of the $d$-dimensional memoryless as well as Markov exponential families of distributions. We proposed the quantized Type Size code, where type classes are associated with cuboids in the grid partitioning the space of minimal sufficient statistics. We showed that the quantized Type Size code achieves the optimal third-order term $\left(\frac{d}{2}-1\right)\log{n}$. Next, the naive point type class approach is considered, where two sequences are in the same type class if and only if they have the same probability under any distribution in the exponential family. In the point type class scenario, each point (rather than a cuboid) in the set of minimal sufficient statistics defines a type class. The third-order term of the point type class approach is shown to be exactly $(\frac{d'}{2}-1)\log{n}$, where $d'$ is the dimension of the lattice vector representation of the the sufficient statistic. Since $d'$ is in general larger than $d$, our findings reveal that the model class dimension $d$ --- rather than the lattice dimension $d'$ --- is the relevant dimension for optimal performance. This is a more intuitive result, because it is much easier to understand the role of $d$ as opposed to $d'$. Moreover, $d$ is a more robust parameter compared to $d'$; changing the model parameters infinitesimally (i.e. from rational to irrational) can change $d'$, but not $d$.

For a more general parametric family without any information on the minimal sufficient statistics, one may partition the parameter space into cuboids and define two sequences to be in the same type class if and only if their maximum likelihood estimates belong to the same cuboid. One interesting future direction of this work is analyzing performance of such approach. As this work does not consider computational complexity of implementing the compression algorithms, an alternative future direction is to consider the blocklength-storage-complexity tradeoff. Finally, the lossy version of this research is also an interesting possible future direction.

\begin{appendices}

\section{Proof of Lemma \ref{maxLikeBerr}: Asymptotic Normality of Information}\label{app::maxLikeBerr}
Define
\begin{equation}
\label{eeEqau}
e(\boldsymbol{\tau}) = -\max_{\theta} \Big(\langle\theta,\boldsymbol{\tau}\rangle-\psi(\theta)+\langle\theta,\nabla\psi(\theta^*)\rangle\Big).
\end{equation}
Fuethermore, denote $\boldsymbol{U}_i(x^n)=\boldsymbol{\tau}(x_i)-\boldsymbol{\mu}$ for $i=1,\cdots ,n$, where $\boldsymbol{\mu}=\mathbb{E}_{\theta^*}\left[\boldsymbol{\tau}(X)\right]$. Therefore, $\boldsymbol{U}_i(X^n)$'s are zero-mean with finite covariance. First, observe that
\begin{align}
\frac{1}{n}\log{p_{\hat{\theta}}(x^n)}&= -\max_{\theta}\langle \theta,\boldsymbol{\tau}(x^n)-\boldsymbol{\mu} \rangle -\psi(\theta) +\langle \theta,\boldsymbol{\mu}\rangle \label{diffSum}\\
                           &=e\left(\frac{1}{n}\sum_{i=1}^{n}{\boldsymbol{U}_i(x^n)}\right) \label{sumUI}
\end{align}
where (\ref{diffSum}) is from (\ref{thetaHatEquation}), and since $\boldsymbol{\mu}=\nabla\psi(\theta^*)$ \cite{jordan}, (\ref{sumUI}) follows from (\ref{eeEqau},\ref{tauXnDef}).
We then show that $e(\mathbf{0})=H$. Equating the derivative with respect to $\theta$ of the expression inside the parenthesis with zero, we find that $\theta^*$ is the maximizing parameter in (\ref{eeEqau}). Therefore
\begin{align}
e(\mathbf{0}) &= -\Big(-\psi(\theta^*)+\langle \theta^*,\nabla\psi(\theta^*) \rangle\Big) \label{thetaStarIsMax}\\
     &= -\Big(-\psi(\theta^*)+\langle \theta^*,\mathbb{E}_{\theta^*}\left(\boldsymbol{\tau}(X)\right) \rangle \Big)\label{ExoProp}\\
     &=-\mathbb{E}_{\theta^*}\Big(-\psi(\theta^*)+\langle \theta^*,\left(\boldsymbol{\tau}(X)\right)\rangle\Big) \nonumber\\
     &=-\mathbb{E}_{\theta^*}\Big(\log{p_{\theta^*}(X)}\Big) \label{useFirst} \\
     &=H\nonumber
\end{align}
where (\ref{thetaStarIsMax}) is from (\ref{eeEqau}), (\ref{ExoProp}) is an exponential family property \cite{jordan}, and (\ref{useFirst}) is from (\ref{pThetaEq}). Application of the Proposition 1 in \cite{nemat} completes the proof.

\section{Proof of Lemma \ref{apprxLemm}: Maximum Likelihood Approximation}
\label{app::apprxLemm}
We show that $\log p_{\hat{\theta}(x^n)}(x^n)$ is constant away from $\log p_{\hat{\theta}_c(x^n)}(x^n)$. Recall that
\begin{equation*}
\log{p_{\hat{\theta}(x^n)}(x^n)}=n\max_{\theta}\Big[\langle\theta,\boldsymbol{\tau}(x^n)\rangle-\psi(\theta)\Big].
\end{equation*}
For ease of notation, when it is clear from the context, we denote $\boldsymbol{\tau}_c(x^n)$ as $\boldsymbol{\tau}_c$, and similarly we remove the argument in $\hat{\theta}_c(x^n)$ and simply denote it as $\hat{\theta}_c$.
Since $\boldsymbol{\tau}(x^n)$ is in a cuboid of side length $\frac{s}{n}$ with center $\boldsymbol{\tau}_c$, we have $\|\boldsymbol{\tau}(x^n)-\boldsymbol{\tau}_c\|\leq \frac{s\sqrt{d}}{2n}$. We hence have
\begin{align}
\left|\left\langle\hat{\theta}_c,\boldsymbol{\tau}(x^n)\right\rangle - \left\langle\hat{\theta}_c,\boldsymbol{\tau}_c\right\rangle\right|&=\left|\left\langle \hat{\theta}_c, \boldsymbol{\tau}(x^n)-\boldsymbol{\tau}_c\right\rangle\right| \nonumber \\
&\leq \|\hat{\theta}_c\| \|\boldsymbol{\tau}(x^n)-\boldsymbol{\tau}_c\|  \nonumber \\
&\leq \wp\frac{s\sqrt{d}}{2n}  = \frac{\kappa s}{n} \label{tauProdVSTauCProd}
\end{align}
where (\ref{tauProdVSTauCProd}) exploits the fact that $\|\theta\|\leq\wp$, for all $\theta\in\Theta$, including $\hat{\theta}_c$. Therefore
\begin{align}
\log{p_{\hat{\theta}_c(x^n)}(x^n)}&=n\left[\left\langle\hat{\theta}_c,\boldsymbol{\tau}(x^n)\right\rangle-\psi\left(\hat{\theta}_c\right)\right] \nonumber \\
                                  &\geq n\left[\left\langle\hat{\theta}_c,\boldsymbol{\tau}_c(x^n)\right\rangle-\frac{\kappa s}{n}-\psi\left(\hat{\theta}_c\right)\right] \label{aaaEq}\\
														 &= n\max_{\theta}\Big[\left\langle\theta,\boldsymbol{\tau}_c(x^n)\right\rangle-\frac{\kappa s}{n}-\psi(\theta)\Big] \label{ThetaCResCor}
\end{align}
where (\ref{aaaEq}) follows from (\ref{tauProdVSTauCProd}) and (\ref{ThetaCResCor}) is from the definition of $\hat{\theta}_c$. Using the fact that for any two functions $g_1(\theta),g_2(\theta)$
\begin{equation}
\max_{\theta}g_1(\theta)-\max_{\theta}g_2(\theta)\leq \max_{\theta}\Big((g_1-g_2)(\theta)\Big) \label{maxDiff}
\end{equation}
we obtain
\begin{align}
\log{p_{\hat{\theta}(x^n)}(x^n)}-\log{p_{\hat{\theta}_c}(x^n)} &\leq n\max_{\theta}\Big[\left\langle\theta,\boldsymbol{\tau}(x^n)\right\rangle-\psi(\theta)\Big] -n\max_{\theta}\left[\left\langle\theta,\boldsymbol{\tau}_c(x^n)\right\rangle-\frac{\kappa s}{n}-\psi(\theta)\right]  \label{udePriorEq}\\
&\leq n\max_{\theta}\left[\left\langle\theta,\boldsymbol{\tau}(x^n)-\boldsymbol{\tau}_c\right\rangle+\frac{\kappa s}{n}\right] \label{fstLngEq}
\end{align}
where (\ref{udePriorEq}) exploits (\ref{ThetaCResCor}), and (\ref{fstLngEq}) is from (\ref{maxDiff}). Similar to (\ref{tauProdVSTauCProd}), one can show that $\langle\theta,\boldsymbol{\tau}(x^n)-\boldsymbol{\tau}_c\rangle\leq \frac{\kappa s}{n}$. Lemma then follows.

\section{Proof of Lemma \ref{fIsLipschitz}: Lipschitzness of $f(\cdot)$}
\label{app:fLipschitz}
Let
\begin{equation}
\label{lEq}
l(\boldsymbol{\tau})=\max_\theta \left(\left\langle\theta,\boldsymbol{\tau}\right\rangle-\psi(\theta)\right).
\end{equation}
Noticing that $\|\nabla f(\boldsymbol{\tau})\|=\|\nabla l(\boldsymbol{\tau})\|$, in order to show the Lipschitzness of $f(\boldsymbol{\tau})$ in (\ref{fEqUpp2}), it suffices to show that $l(\boldsymbol{\tau})$ is a Lipschitz function of $\boldsymbol{\tau}$. We first show that $\|\nabla l(\boldsymbol{\tau})\|=\|\hat{\theta}(\boldsymbol{\tau})\|$. Due to (\ref{thetaHatEquation})
\begin{equation*}
l(\boldsymbol{\tau})=\langle\hat{\theta}(\boldsymbol{\tau}),\boldsymbol{\tau}\rangle - \psi\left(\hat{\theta}(\boldsymbol{\tau})\right).
\end{equation*}
Hence, taking gradient with respect to $\boldsymbol{\tau}$
\begin{align}
\nabla l(\boldsymbol{\tau}) &=\left(\left(\nabla\hat{\theta}(\boldsymbol{\tau})\right) \boldsymbol{\tau}+\hat{\theta}(\boldsymbol{\tau})\right)-\nabla\hat{\theta}(\boldsymbol{\tau})\nabla_{\hat{\theta}}\psi\left(\hat{\theta}(\boldsymbol{\tau})\right) \nonumber \\
											&= \left(\left(\nabla\hat{\theta}(\boldsymbol{\tau})\right) \boldsymbol{\tau}+\hat{\theta}(\boldsymbol{\tau})\right)-\nabla\hat{\theta}(\boldsymbol{\tau})\mathbb{E}_{\hat{\theta}(\boldsymbol{\tau})}(\boldsymbol{\tau}(X)) \label{ssstackrela} \\
											&=\hat{\theta}(\boldsymbol{\tau}) \label{ssstackRellb}
\end{align}
where $(\ref{ssstackrela})$ follows from $\nabla_{\hat{\theta}}\psi\left(\hat{\theta}(\boldsymbol{\tau})\right)=\mathbb{E}_{\hat{\theta}(\boldsymbol{\tau})}\left(\boldsymbol{\tau}(X)\right)$ \cite{jordan}, and $(\ref{ssstackRellb})$ follows from $\mathbb{E}_{\hat{\theta}(\boldsymbol{\tau})}(\boldsymbol{\tau}(X))=\boldsymbol{\tau}$ (see the proof of Lemma \ref{TypeClSizeLem}). Lemma follows by recalling that $\|\hat{\theta}(\boldsymbol{\tau})\|\leq \wp$.

\section{Proof of Lemma \ref{rhoIsLipschitz}: Lipschitzness of $\rho(\cdot)$}
\label{app:hLipschitz}
Let $\mathcal{K}=\{\boldsymbol{\tau}\in\mathcal{T}:f(\boldsymbol{\tau})\leq \lambda\}$ and $\mathcal{K}^c=\mathcal{T}\backslash\mathcal{K}$.
We first show that $\mathcal{K}^c$ is a convex body. A sub-level set of $f(\cdot)$ is a sub-level set of $-l(\cdot)$ defined as in (\ref{lEq}). Therefore, it is enough to show that sub-level sets of $l(\cdot)$ (i.e. $\mathcal{K}^c$) are convex. Maximum of linear functions of $\boldsymbol{\tau}$ is a convex function, therefore $l(\cdot)$ defined in (\ref{lEq}) is a convex function of $\boldsymbol{\tau}$. Since the sub-level sets of a convex function are convex, $\mathcal{K}^c$ is a convex body.

In order to show that $\rho(\lambda)$ $\left(=\text{Vol}\left(\mathcal{K}\right)\right)$ is Lipschitz, we provide an upper bound for the absolute value of its derivative $|\frac{d}{d\lambda}\rho(\lambda)|$. Let us denote the surface area of a convex body $\mathcal{K}^c$ as \cite[Section 3.3]{hug}
\begin{equation}
\label{surfaceDef}
S(\mathcal{K}^c)=\lim_{\epsilon\rightarrow0}\frac{V^{(d)}\Big(\mathcal{K}^c+B(\epsilon)\Big)-V^{(d)}(\mathcal{K}^c)}{\epsilon}
\end{equation}
where $V^{(d)}(\cdot)$ is the $d$-dimensional volume, $B(\epsilon)$ is the $d$-dimensional unit ball and addition in $\mathcal{K}^c+B(\epsilon)$ is the Minkowski's sum \cite{hug}. Let us denote $\mathcal{K}^c_{\epsilon}=\{\boldsymbol{\tau}\in\mathcal{T}:f(\boldsymbol{\tau})> \lambda-\epsilon\}$. We have
\begin{align}
\frac{d}{d\lambda}\rho(\lambda) &= \lim_{\epsilon\rightarrow0}\frac{\rho(\lambda)-\rho(\lambda-\epsilon)}{\epsilon} \nonumber \\
               &=\lim_{\epsilon\rightarrow0}\frac{\left(\text{Vol}(\mathcal{T})-\rho(\lambda-\epsilon)\right)-\left(\text{Vol}(\mathcal{T})-\rho(\lambda)\right)}{\epsilon} \nonumber\\
               &= \lim_{\epsilon\rightarrow 0} \frac{\text{Vol}(\mathcal{K}^c_{\epsilon})-\text{Vol}(\mathcal{K}^c)}{\epsilon}. \label{hder}
\end{align}
Let us assume $\epsilon\rightarrow 0^+$; the case where $\epsilon\rightarrow 0^-$ is handled similarly. Let $\boldsymbol{\tau}_1\in\mathcal{K}^c_{\epsilon}$. From the Taylor series expansion of $f(\boldsymbol{\tau}_2)$ in the vicinity of $\boldsymbol{\tau}_1$ with distance at most $\|\boldsymbol{\tau}_2-\boldsymbol{\tau}_1\|\leq \sqrt{\epsilon}$, we obtain
\begin{equation}
f(\boldsymbol{\tau}_2)=f(\boldsymbol{\tau}_1)+ \langle \nabla f(\boldsymbol{\tau}_1),\boldsymbol{\tau}_2-\boldsymbol{\tau}_1\rangle+\Delta
\end{equation}
where $|\Delta|\leq C_f\|\boldsymbol{\tau}_1-\boldsymbol{\tau}_2\|^2$, for a constant $C_f$ independent of $n$. Let
\begin{equation}
\boldsymbol{\tau}_2=\boldsymbol{\tau}_1+\epsilon\frac{(1+C_f) \nabla f(\boldsymbol{\tau}_1)}{\|\nabla f(\boldsymbol{\tau}_1)\|^2}.
\end{equation}
With this choice of $\boldsymbol{\tau}_2$, we obtain

\begin{align}
f(\boldsymbol{\tau}_2)&=f(\boldsymbol{\tau}_1)+\epsilon (1+C_f)+\Delta \nonumber \\
					  &\geq f(\boldsymbol{\tau}_1)+\epsilon +\epsilon C_f- C_f\|\boldsymbol{\tau}_1-\boldsymbol{\tau}_2\|^2 \nonumber \\
                      &\geq f(\boldsymbol{\tau}_1)+\epsilon \label{epsilonDist} \\
					  &> \lambda-\epsilon+\epsilon \label{containRel} \\
                      &=\lambda \nonumber										
\end{align}
where (\ref{epsilonDist}) follows form $\|\boldsymbol{\tau}_2-\boldsymbol{\tau}_1\|\leq \sqrt{\epsilon}$, and (\ref{containRel}) is a consequence of $\boldsymbol{\tau}_1\in\mathcal{K}^c_{\epsilon}$. Hence $\boldsymbol{\tau}_2\in\mathcal{K}^c$. Since $\boldsymbol{\tau}_1\in\mathcal{K}^c_{\epsilon}$ was arbitrary, we have $\mathcal{K}^c_{\epsilon}\subset \mathcal{K}^c+B\left(\frac{\epsilon (1+C_f)}{\|\nabla f(\boldsymbol{\tau}_1)\|}\right)$. Therefore, one can upper bound (\ref{hder}) in terms of the surface area (\ref{surfaceDef}) as follows:
\begin{equation}
\label{hPrimeBound1}
\left|\frac{d}{d\lambda}\rho(\lambda)\right|\leq \frac{(1+C_f) S(\mathcal{K}^c)}{\|\nabla f(\boldsymbol{\tau}_1)\|} \:\:\:\mbox{ for all $\boldsymbol{\tau}_1\in\mathcal{K}_{\epsilon}$}.
\end{equation}
Since $\mathcal{K}^c,\mathcal{T}$ are convex bodies and $\mathcal{K}^c\subset \mathcal{T}$, consequently $S(\mathcal{K}^c)\leq S(\mathcal{T})$ \cite[Theorem 3.2.2]{hug}. Since $\mathcal{X}$ is finite, therefore $\mathcal{T}$ is a bounded set, which yields $S(\mathcal{K}^c)\leq S(\mathcal{T})<\infty$.

From the proof of Lemma \ref{fIsLipschitz} in Appendix \ref{app:fLipschitz}, we have $\|\nabla f(\boldsymbol{\tau}_1)\|=\|\hat{\theta}(\boldsymbol{\tau}_1)\|$. That translates (\ref{hPrimeBound1}) into
\begin{equation}
\label{hPrimeBound}
\left|\frac{d}{d\lambda}\rho(\lambda)\right|\leq \frac{(1+C_f) S(\mathcal{K}^c)}{\|\hat{\theta}(\boldsymbol{\tau}_1)\|} \:\:\:\mbox{ for all $\boldsymbol{\tau}_1\in\mathcal{K}_{\epsilon}$}.
\end{equation}
We finally show that $\|\hat{\theta}(\boldsymbol{\tau}_1)\|$ is bounded away from zero. Let $\boldsymbol{\tau}_u$ be such that $\hat{\theta}(\boldsymbol{\tau}_u)=(0,\cdots,0)$ (subscript $u$ stands for the uniform distribution.). Since $\omega=\frac{\log{|\mathcal{X}|}-H}{5}>0$ and $f(\boldsymbol{\tau}) = -\frac{1}{n}\log{p_{\hat{\theta}(\boldsymbol{\tau})}(x^n)}-\Theta\left(\frac{\log{n}}{n}\right)$, we have that
\begin{equation}
\label{fTauUEq}
f(\boldsymbol{\tau}_u)\geq \log{|\mathcal{X}|}-\omega, \mbox{ for large enough $n$}.
\end{equation}
From boundedness of $\mathcal{T}$, we have
\begin{equation*}
T_{\max}:=\max\left\{\|\boldsymbol{\tau}\|:\boldsymbol{\tau}\in \mathcal{T}\right\}<\infty.
\end{equation*}
Therefore $\left\|\nabla\psi\left(\hat{\theta}({\boldsymbol{\tau}})\right)\right\| = \left\|\mathbb{E}_{\hat{\theta}}(\boldsymbol{\tau}(X))\right\|\leq T_{\max}$ is bounded. Hence $\psi\left(\hat{\theta}({\boldsymbol{\tau}})\right)$ is a Lipschitz function of $\hat{\theta}({\boldsymbol{\tau}})$ with Lipschitz constant $T_{\max}$. Hence if $\left\|\hat{\theta}({\boldsymbol{\tau}}) - \hat{\theta}({\boldsymbol{\tau}_u})\right\|\leq \frac{\omega}{T_{\max}}$, then $\left|\psi(\hat{\theta}({\boldsymbol{\tau}}))-\psi(\hat{\theta}({\boldsymbol{\tau}_u}))\right|\leq \omega$ and furthermore by the Cauchy-Schwarz inequality $\left|\left\langle\hat{\theta}(\boldsymbol{\tau})-\hat{\theta}(\boldsymbol{\tau}_u),\boldsymbol{\tau}\right\rangle\right|\leq \omega$. Therefore, if $\left\|\hat{\theta}({\boldsymbol{\tau}}) - \hat{\theta}({\boldsymbol{\tau}_u})\right\|\leq \frac{\omega}{T_{\max}}$, then
\begin{align}
\left|f(\boldsymbol{\tau})-f(\boldsymbol{\tau}_u)\right| &\leq \left|\left\langle \hat{\theta}({\boldsymbol{\tau}}),\boldsymbol{\tau}\right\rangle - \left\langle \hat{\theta}(\boldsymbol{\tau}_u),\boldsymbol{\tau}_u\right\rangle\right| +\left|\psi\left(\hat{\theta}(\boldsymbol{\tau})\right)-\psi\left(\hat{\theta}(\boldsymbol{\tau}_u)\right)\right|\nonumber \\
&\leq 2\omega \label{intermediateContEq}
\end{align}
where (\ref{intermediateContEq}) follows from $\hat{\theta}(\boldsymbol{\tau}_u)=(0,\cdots,0)$, $|\langle\hat{\theta}(\boldsymbol{\tau})-\hat{\theta}(\boldsymbol{\tau}_u),\boldsymbol{\tau}\rangle|\leq \omega$. Finally, for large enough $n$ and for all $\boldsymbol{\tau}_1\in\mathcal{K}_{\epsilon}$, it holds that
\begin{align}
f(\boldsymbol{\tau}_1)&\leq \lambda+\epsilon \nonumber \\
                      &< (H+\omega)+\omega \label{stackFirst} \\
                      &= \log{|\mathcal{X}|}-3\omega. \label{lambdaLessthanEq}
\end{align}
where (\ref{stackFirst}) follows from $\lambda<H+\omega$ and the fact that since $\epsilon\rightarrow 0$, $\epsilon<\omega$ and (\ref{lambdaLessthanEq}) is from the definition of $\omega$. From (\ref{fTauUEq}) and (\ref{lambdaLessthanEq}), we have $|f(\boldsymbol{\tau}_1)-f(\boldsymbol{\tau}_u)|>2\omega$ for all $\boldsymbol{\tau}_1\in\mathcal{K}_{\epsilon}$. Hence by (\ref{intermediateContEq}), we must certainly have $\left\|\hat{\theta}({\boldsymbol{\tau}}_1) - \hat{\theta}\left({\boldsymbol{\tau}_u}\right)\right\|>\frac{\omega}{T_{\max}}$. On the other hand $\hat{\theta}({\boldsymbol{\tau}_u})=(0,\cdots,0)$, which entails that $\left\|\hat{\theta}({\boldsymbol{\tau}_1})\right\|> \frac{\omega}{T_{\max}}$. This yields a positive lower bound, independent of $n$, for the denominator in (\ref{hPrimeBound}).

\section{Proof of Lemma \ref{PointTypeClasLemma}: Point Type Class Size}
\label{app::pointProof}
The one-to-one mapping between $\boldsymbol{\tau}(x)$ and $\boldsymbol{L}(x)$, subsequently defines a one-to-one mapping between $\boldsymbol{\tau}(x^n)$ and $\boldsymbol{L}(x^n)$, which consequently defines a one-to-one correspondence between the point type class $T_{x^n}$ and $\boldsymbol{L}(x^n)$. Therefore, for any parameter value $\theta\in\Theta$, it holds that \cite{merhav}
\begin{equation}
\label{typeClassSizeApprch}
|T_{x^n}|=\frac{\mathbb{P}_{\theta}\{\boldsymbol{L}({X^n})=\boldsymbol{L}({x^n})\}}{p_{\theta}(x^n)}.
\end{equation}
Since $\boldsymbol{L}(x^n)$ can be written as a sum of integer (lattice) random vectors $\boldsymbol{L}(x_i)$ (Eq. (\ref{LSumEq})), exploiting the local limit theorem of \cite{borovkov} to bound the numerator in (\ref{typeClassSizeApprch}), yields \cite{merhav}
\begin{equation}
\log{|T_{x^n}|}=-\log{p_{\hat{\theta}(x^n)}(x^n)}-\frac{d'}{2}\log{2\pi n} -\frac{1}{2}\log{\det M\left[\hat{\theta}(x^n)\right]}+o(1)\label{weinbergerEq}
\end{equation}
where $\hat{\theta}(x^n)$ is the maximum likelihood estimate of $\theta$ for $x^n$ and $M[\theta]$ denotes the covariance matrix of the random vector $\boldsymbol{L}(X)$ where $X$ is drawn from $p_\theta$.

We show that absolute value of the third term in (\ref{weinbergerEq}), $\left|\frac{1}{2}\log{\det M\left[\hat{\theta}(x^n)\right]}\right|$, is upper bounded by a constant $C_M>0$ independent of $n$. Constant upper bound $C_{u}>0$, for $\det M\left[\hat{\theta}(x^n)\right]$ follows from Hadamard's inequality \cite[corollary 7.8.3]{horn}. For the lower bound, since $\det M[\theta]$ is a continuous function of $\theta$ over a compact domain $\Theta$, it attains a minimum at a point in the parameter space, say $\ddot{\theta}\in \Theta$. Let $\ddot{\boldsymbol{P}}$ be a diagonal $(|\mathcal{X}|-1)\times (|\mathcal{X}|-1)$ matrix with diagonal entries $\ddot{\boldsymbol{P}}_{ii}=\mathbb{P}_{\ddot{\theta}}\left(X=i+1\right)$ for $i\in\mathcal{X}$, and $\ddot{\boldsymbol{p}}$ be a column vector with $\ddot{\boldsymbol{p}}_i=\mathbb{P}_{\ddot{\theta}}\left(X=i+1\right)$ for $i\in\mathcal{X}$. We have
\begin{align}
M(\ddot{\theta})&= \mathbb{E}_{\ddot{\theta}}\left(\left[\boldsymbol{L}(X)\right]\left[\boldsymbol{L}(X)\right]^T\right)
                     - \mathbb{E}_{\ddot{\theta}}\left(\left[\boldsymbol{L}(X)\right]\right)\left(\mathbb{E}_{\ddot{\theta}}\left(\left[\boldsymbol{L}(X)\right]\right)\right)^T \nonumber \\
                &=\sum_{x\neq1}p_{\ddot{\theta}}(x)\boldsymbol{L}(x)\boldsymbol{L}(x)^T -\left(\sum_{x\neq1}p_{\ddot{\theta}}(x)\boldsymbol{L}(x)\right)\left(\sum_{x\neq1}p_{\ddot{\theta}}(x)\boldsymbol{L}(x)\right)^T \label{ZeroLone} \\
                &= \mathbb{L}\ddot{\boldsymbol{P}}\mathbb{L}^T- \left(\mathbb{L}\ddot{\boldsymbol{p}}\right)\left(\mathbb{L}\ddot{\boldsymbol{p}}\right)^T \nonumber \\
                &=\mathbb{L}(\ddot{\boldsymbol{P}}-\ddot{\boldsymbol{p}}\ddot{\boldsymbol{p}}^T)\mathbb{L}^T \label{MDecompL}
\end{align}
where (\ref{ZeroLone}) follows recalling that $\boldsymbol{L}(1)=\boldsymbol{0}$. We then show that $\left(\ddot{\boldsymbol{P}}-\ddot{\boldsymbol{p}}\ddot{\boldsymbol{p}}^T\right)$ is non-singular. Observe that
\begin{align}
\text{det}(\ddot{\boldsymbol{P}}-\ddot{\boldsymbol{p}}\ddot{\boldsymbol{p}}^T) &= (1-\boldsymbol{p}^T\ddot{\boldsymbol{P}}^{-1}\ddot{\boldsymbol{p}}) \text{det} \ddot{\boldsymbol{P}} \label{matDetLem} \\
                                                          &=\Big(1-(p_{\ddot{\theta}}(2)+\cdots+p_{\ddot{\theta}}(|\mathcal{X}|))\Big)\text{det} \ddot{\boldsymbol{P}} \nonumber \\
                                                          &=p_{\ddot{\theta}}(1) \text{det} \ddot{\boldsymbol{P}} \nonumber \\
                                                          &=p_{\ddot{\theta}}(1) p_{\ddot{\theta}}(2)\cdots p_{\ddot{\theta}}(|\mathcal{X}|) \nonumber \\
                                                          &\geq p_{\text{min}} ^{|\mathcal{X}|} \label{pminCmp}
\end{align}
where (\ref{matDetLem}) is from Matrix determinant Lemma \cite{harvill}, while existence of a constant $p_{\text{min}}$ in (\ref{pminCmp}) such that $p_{\ddot{\theta}}(x)\geq p_{\text{min}}\:\:\forall x\in\mathcal{X}$ follows from compactness of $\Theta$ and structure of the exponential family (\ref{pThetaEq}).
Since $\mathbb{L}$ is full rank and rank of a matrix is invariant under multiplication by a non-singular matrix, (\ref{MDecompL}) implies $\det M\left[\ddot{\theta}\right]>0$. Positivity of $\det M\left[\ddot{\theta}\right]$, in turn provides a positive constant lower bound $C_l$ for $\det M\left[\hat{\theta}(x^n)\right]$. Let $C_M=\frac{1}{2}\max\{|\log{C_{l}}|,|\log{C_{u}}|\}$ and $C=C_M+1$ be the constant in the lemma. Finally, lemma follows by noticing that
\begin{equation*}
\log{p_{\hat{\theta}(x^n)}(x^n)}= n\left[\left\langle \hat{\theta}(\boldsymbol{\tau}(\boldsymbol{L})),\boldsymbol{\tau}(\boldsymbol{L}) \right\rangle - \psi\left(\hat{\theta}(\boldsymbol{\tau}(\boldsymbol{L}))\right)\right]
\end{equation*}
for any $x^n$ with $\boldsymbol{L}(x^n)=\boldsymbol{L}$.

\section{Proof of Lemma \ref{disConLem}: Ratio of the Volumes}
\label{app::disConLem}
Similar to the Appendix \ref{app:fLipschitz}, one can show that $f_0(\ell)$ is a Lipschitz function of $\ell$. Therefore, for a Lipschitz constant $K_5>0$, we have $\|f_0(\ell)-f_0(\boldsymbol{L})\|\leq K_5\|\ell-\boldsymbol{L}\|$. Let $R:=\sum\limits_{i=1}^{|\mathcal{X}|}\|\boldsymbol{L}(i)\|$. We first show that
\begin{align}
\mathcal{A}_{R}&:=\left\{\ell\in\mathfrak{L}: \gamma'_0-\Delta+\frac{K_5R}{n}<f_0(\ell)\leq\gamma'_0-\frac{K_5R}{n}\right\} \subseteq \bigcup\limits_{\boldsymbol{L}\in\mathcal{A}_0}{B_{\frac{R}{n}}(\boldsymbol{L})}.\label{aaaDef}
\end{align}
For an arbitrary $\ell\in\mathcal{A}_R$, since $\mathcal{A}_R$ is a subset of the convex hull of $\mathcal{L}$, one can find real non-negative numbers $a_{i}, i=1,...,|\mathcal{X}|$ such that
\begin{equation}
\sum_{i=1}^{|\mathcal{X}|}{a_i}=1
\end{equation}
and
\begin{equation}
\ell=\sum_{i=1}^{|\mathcal{X}|}{a_i\boldsymbol{L}(i)} \label{ellFDefintiino} .
\end{equation}
For an index $j$, let $n_i=\lfloor na_i\rfloor$ for $i=1,...,j$ and $n_i=\lceil na_i\rceil$ for $i=j+1,...,|\mathcal{X}|$. We claim one can choose the index $0\leq j \leq |\mathcal{X}|$ ($j=0$ corresponds to $n_i=\lceil na_i\rceil$ for all $i$) such that $\sum_{i=1}^{|\mathcal{X}|}n_i=n$. Observe that for $j=0$, we have $\sum_{i=1}^{|\mathcal{X}|}n_i\geq n$, while for $j=|\mathcal{X}|$, $\sum_{i=1}^{|\mathcal{X}|}n_i\leq n$. Incrementing $j$ by one, decreases the integer $\sum_{i=1}^{|\mathcal{X}|}n_i$ by at most one. The claim then follows.

It is clear that $n_i$'s satisfy the following condition as well
\begin{equation}
\label{oneLeftLast}
|n_i-na_i|< 1, \:\:\forall i=1,...,|\mathcal{X}|.
\end{equation}
Let $x^n\in\mathcal{X}^n$ be any sequence with empirical probability mass function $\left\{\frac{n_i}{n}\right\}$. Observe that
\begin{equation*}
\boldsymbol{L}(x^n)=\frac{1}{n}\sum_{i=1}^{|\mathcal{X}|}n_i\boldsymbol{L}(i)\in\mathcal{L}.
\end{equation*}
Therefore one obtains
\begin{align}
\|\ell-\boldsymbol{L}(x^n)\|  &\leq \frac{1}{n}\sum_{i=1}^{|\mathcal{X}|}\left|n_i-na_i\right|\cdot \left\|\boldsymbol{L}(i)\right\| \label{cauchyEq} \\
                         &< \frac{1}{n}\sum_{i=1}^{|\mathcal{X}|}\left\|\boldsymbol{L}(i)\right\| \label{consEq} \\
                         &= \frac{R}{n} \label{yeksad}
\end{align}
where (\ref{cauchyEq}) follows from (\ref{ellFDefintiino}) and the Cauchy-Schwarz inequality, (\ref{consEq}) follows from (\ref{oneLeftLast}). Therefore $\ell\in B_{\frac{R}{n}}(\boldsymbol{L}(x^n))$. We then show that $\boldsymbol{L}(x^n)\in\mathcal{A}_0$. From (\ref{yeksad}) and the Lipschitzness of $f_0(\cdot)$ we have
\begin{equation}
\label{togewtherEq}
f_0(\ell)-\frac{K_5R}{n}\leq f_0\left(\boldsymbol{L}(x^n)\right)\leq f_0(\ell)+\frac{K_5R}{n}.
\end{equation}
From (\ref{togewtherEq},\ref{aaaDef}) and since $\ell\in\mathcal{A}_R$, we obtain
\begin{equation}
\gamma'_0-\Delta< f_0\left(\boldsymbol{L}(x^n)\right)\leq \gamma'_0
\end{equation}
which confirms $\boldsymbol{L}(x^n)\in\mathcal{A}_0$. Since for an arbitrary $\ell\in\mathcal{A}_R$, we are able to find $\boldsymbol{L}(x^n)\in\mathcal{A}_0$ within a distance of $\frac{R}{n}$, (\ref{aaaDef}) follows.

We continue by observing the following
\begin{align}
\text{Vol}\left(\bigcup_{\ell\in\mathcal{A}_R}B_{\frac{1}{2n}}(\ell)\right)&\leq \text{Vol} \left(\bigcup_{\boldsymbol{L}\in\mathcal{A}_0}B_{\frac{2R+1}{2n}}(\boldsymbol{L})\right) \label{ARSIcVol} \\
                                                               &\leq (2R+1)^{d'}\text{Vol} \left(\bigcup_{\boldsymbol{L}\in\mathcal{A}_0}B_{\frac{1}{2n}}(\boldsymbol{L})\right) \label{multPFac}
\end{align}
where (\ref{ARSIcVol}) is from (\ref{aaaDef}) and a geometrical observation (triangle inequality) that if a point is within a distance $\frac{1}{2n}$ of a point in $\mathcal{A}_R$, it is certainly within a distance $\frac{R}{n}+\frac{1}{2n}$ of a point in $\mathcal{A}_0$, (\ref{multPFac}) follows since scaling the radius of an sphere by a constant, changes its volume by a constant multiplicative factor.

Given (\ref{multPFac}), to prove the lemma it is enough to show that for some constant $C>0$,
\begin{equation}
\frac{\text{Vol}\left(\bigcup_{\ell\in\tilde{\mathcal{A}}_0}B_{\frac{1}{2n}}(\ell)\right)}{\text{Vol}\left(\bigcup_{\ell\in\mathcal{A}_R}B_{\frac{1}{2n}}(\ell)\right)} \leq C. \label{thispneRep}
\end{equation}

Observe the following
\begin{align}
&\text{Vol}\left(\bigcup_{\ell\in\tilde{\mathcal{A}}_0}B_{\frac{1}{2n}}(\ell)\right)- \text{Vol}\left(\bigcup_{\ell\in\mathcal{A}_R}B_{\frac{1}{2n}}(\ell)\right)  \label{firstEq} \\
&\leq\text{Vol}\left(\ell: f_0(\ell)\in \left(\gamma'_0-\Delta-\frac{K_5}{2n},\gamma'_0+\frac{K_5}{2n}\right]\right)\label{fnotConv}\\
&\hspace{0.25in}-\text{Vol}\left(\ell: f_0(\ell)\in \left(\gamma'_0-\Delta+\frac{K_5R}{2n},\gamma'_0-\frac{K_5R}{2n}\right]\right) \label{fNotAR} \\
&=\rho_0\left(\gamma'_0+\frac{K_5}{2n}\right)-\rho_0\left(\gamma'_0-\Delta-\frac{K_5}{2n}\right)+\rho_0\left(\gamma'_0-\Delta+\frac{K_5R}{2n}\right)-\rho_0\left(\gamma'_0-\frac{K_5R}{2n}\right) \label{rhonitDef}\\
&\leq \frac{C}{n} \label{finaleEw}
\end{align}
where (\ref{fnotConv}) is an upper bound for the first term in (\ref{firstEq}) noticing the definition of $\tilde{\mathcal{A}}_0$ and Lipschitzness of $f_0(\cdot)$, (\ref{fNotAR}) is from lower bounding the volume of the ball-covering of $\mathcal{A}_R$ (second term in (\ref{firstEq})) by the volume of $\mathcal{A}_R$ itself, (\ref{rhonitDef}) is from the definition of $\rho_0(\cdot)$, and (\ref{finaleEw}) is from Lipschitzness of $\rho_0(\cdot)$ and recalling the choice of $\Delta=\frac{1}{n}$.
Therefore
\begin{align}
\frac{\text{Vol}\left(\bigcup_{\ell\in\tilde{\mathcal{A}}_0}B_{\frac{1}{2n}(\ell)}\right)}{\text{Vol}\left(\bigcup_{\ell\in\mathcal{A}_R}B_{\frac{1}{2n}}(\ell)\right)}
&\leq \frac{\text{Vol}\left(\bigcup_{\ell\in\mathcal{A}_R}B_{\frac{1}{2n}}(\ell)\right)+\frac{C}{n}}{\text{Vol}\left(\bigcup_{\ell\in\mathcal{A}_R}B_{\frac{1}{2n}}(\ell)\right)} \nonumber \\
&=1+\frac{C}{n\text{Vol}\left(\bigcup_{\ell\in\mathcal{A}_R}B_{\frac{1}{2n}}(\ell)\right)} \nonumber \\
&\leq 1+\frac{C}{n\left(\rho_0\left(\gamma'_0-\frac{K_5R}{2n}\right)-\rho_0\left(\gamma'_0-\Delta+\frac{K_5R}{2n}\right)\right)} \label{bothCovIt} \\
&\leq 1+\frac{C}{K_4(K_5R+1)} \label{inAkhar}
\end{align}
where (\ref{bothCovIt}) is by lower bounding the volume of the ball-covering of $\mathcal{A}_R$ by the volume of $\mathcal{A}_R$ itself, along with the definition of $\rho_0(\cdot)$, and (\ref{inAkhar}) is an application of Lemma \ref{rho0LowerDrive} as well as recalling the choice of $\Delta=\frac{1}{n}$. This proves (\ref{thispneRep}), and the lemma follows.

\section{Proof of Lemma \ref{rho0LowerDrive}: Lower bound on $|\frac{d}{d\lambda}\rho_0(\lambda)|$}
\label{sec::rhoZeroAppx}
Denote $\mathcal{K}_0=\{\ell\in\mathfrak{L}: f_0(\ell)\leq \lambda\}$ and $\mathcal{K}_0^c=\mathfrak{L}\backslash\mathcal{K}_0$. Furthermore, let us denote $\mathcal{K}_{0,\epsilon}=\{\ell\in\mathfrak{L}:f_0(\ell)\leq \lambda+\epsilon\}$. We have
\begin{align}
\frac{d}{d\lambda}\rho_0(\lambda) &= \lim_{\epsilon\rightarrow 0}\frac{\rho_0(\lambda+\epsilon)-\rho_0(\lambda)}{\epsilon} \nonumber \\
                                  &=\lim_{\epsilon\rightarrow0}\frac{\text{Vol}(\mathcal{K}_{0,\epsilon})-\text{Vol}(\mathcal{K}_0)}{\epsilon}.\label{rhonotPrime}
\end{align}
Let $\ell_1$ be an arbitrary point in $\mathcal{K}_0$. Let
\begin{equation}
\label{secondVicinityEq}
\ell_2=\ell_1+\frac{\epsilon}{2}\frac{\nabla f_0(\ell_1)}{\|\nabla f_0(\ell_1)\|^2}.
\end{equation}
From Taylor series expansion, we have
\begin{equation}
\label{ellTaylor}
f_0(\ell_2)=f_0(\ell_1)+\left\langle\nabla f_0(\ell_1),\ell_2-\ell_1\right\rangle + \Delta_0
\end{equation}
where $|\Delta_0|\leq C_{f_0}\|\ell_1-\ell_2\|^2$, for a constant $C_{f_0}$ which is independent of $n$.
First observe from (\ref{secondVicinityEq}) that, since $\epsilon\rightarrow0$ is infinitesimal, $\ell_2$ resides in the vicinity of $\ell_1$ with distance at most
\begin{equation}
\label{vicinityEq}
\|\ell_2-\ell_1\|< \sqrt{\frac{\epsilon}{2C_{f_0}}}.
\end{equation}
With the choice of $\ell_2$ in (\ref{secondVicinityEq}), we have
\begin{align}
f_0(\ell_2)&<f_0(\ell_1)+\frac{\epsilon}{2}+\frac{\epsilon}{2} \label{firtELL} \\
                   &\leq \lambda+\epsilon \label{secondEll}
\end{align}
where (\ref{firtELL}) follows from (\ref{secondVicinityEq},\ref{ellTaylor},\ref{vicinityEq}), and (\ref{secondEll}) is a consequence of $\ell_1$ being a point in $\mathcal{K}_0$.
Therefore $\ell_2\in\mathcal{K}_{0,\epsilon}$. As a conclusion for all $\ell_1\in\mathcal{K}_0$, $\ell_1+\frac{\epsilon}{2}\frac{\nabla f_0(\ell_1)}{\|f_0(\ell_1)\|^2}\in \mathcal{K}_{0,\epsilon}$. That translates into the following subset relationship
\begin{equation}
\label{seubsetRelEq}
\mathcal{K}_0+B\left(\frac{\epsilon}{2}\frac{\nabla f_0(\ell_1)}{\|f_0(\ell_1)\|^2}\right)\subset \mathcal{K}_{0,\epsilon}.
\end{equation}
Continuing from (\ref{rhonotPrime}) we have
\begin{align}
\left|\frac{d}{d\lambda}\rho_0(\lambda)\right|&\geq  \lim_{\epsilon\rightarrow 0}\frac{\text{Vol}\left(\mathcal{K}_0+B\left(\frac{\epsilon}{2}\frac{\nabla f_0(\ell_1)}{\|\nabla f_0(\ell_1)\|^2}\right)\right)-\text{Vol}(\mathcal{K}_0)}{\epsilon} \label{yek} \\
              &\geq \frac{S(\mathcal{K}_0)}{2\|\nabla f_0(\ell_1)\|}  \label{do}\\
              &= \frac{S(\mathcal{K}_0)}{2\|\hat{\theta}(\ell_1)\|} \label{se} \\
              &\geq \frac{S(\mathcal{K}_0)}{\wp} \label{char}
\end{align}
where (\ref{yek}) is a consequence of (\ref{seubsetRelEq}), (\ref{do}) is due to the definition of the surface area in (\ref{surfaceDef}), (\ref{se}) is derived similar to (\ref{ssstackRellb}), and finally (\ref{char}) is from the fact that for all $\theta\in\Theta$, we have $\|\theta\|\leq \wp$.

It remains to provide a positive constant lower bound for $S(\mathcal{K}_0)$ independent of $n$. We first show that in the range $\gamma'_0-\Delta\leq\lambda\leq\gamma'_0$, there exists a positive constant lower bound for $\text{Vol}(\mathcal{K}_0)$. Since
 \begin{align}
 \Upsilon(\ell)&:= -\frac{1}{n}\left(\left\langle\hat{\theta}(\boldsymbol{\tau}(\ell)),\boldsymbol{\tau}(\ell)\right\rangle-\psi\left(\hat{\theta}\left(\boldsymbol{\tau}(\ell)\right)\right)\right) \label{fNotContEqI} \\
 &=-\frac{1}{n} \left(\left\langle \hat{\theta}(\boldsymbol{\mathbf{b}}+\boldsymbol{\mathbb{A}} \boldsymbol{R} \ell),\boldsymbol{\mathbf{b}}+\boldsymbol{\mathbb{A}}\boldsymbol{R}\ell \right\rangle - \psi\left(\hat{\theta}(\boldsymbol{\mathbf{b}}+\boldsymbol{\mathbb{A}}\boldsymbol{R}\ell)\right)\right) \label{fNotContEqII}
 \end{align}
 is a continuous function of $\ell$ over a compact domain $\mathfrak{L}$, it attains a minimum at a point, say $\ell^*\in \mathfrak{L}$. This minimum is certainly less than or equal to the minimum of $\Upsilon(\ell)$ over $\mathcal{L}$, which is attained at a point say $\boldsymbol{L}^*$. For any ${\theta}\in\Theta$, we have
\begin{align}
\Upsilon(\ell^*) &\leq \Upsilon(\boldsymbol{L}^*) \nonumber \\
                    &\leq \sum_{x^n}{p_{\theta}(x^n)}\left(-\frac{1}{n}\log{p_{\hat{\theta}(x^n)}(x^n)}\right) \label{minExp}\\
                                                   &\leq \sum_{x^n}{p_{\theta}(x^n)}\left(-\frac{1}{n}\log{p_{\theta}(x^n)}\right) \label{maxTrueRe} \\
                                                   &=H(p_{\theta}) \label{iHaveto}
\end{align}
where (\ref{minExp}) follows since $\Upsilon(\boldsymbol{L}^*)=-\frac{1}{n}\log{p_{\hat{\theta}(y^n)}(y^n)}$ for some $y^n\in\mathcal{X}^n$ with $\boldsymbol{L}(y^n)=\boldsymbol{L}^*$, more precisely
\begin{equation*}
\Upsilon(\boldsymbol{L}^*)=\min_{x^n\in\mathcal{X}^n}{-\frac{1}{n}\log{p_{\hat{\theta}(x^n)}(x^n)}}
\end{equation*}
 and the minimum value of a function is less than or equal to its weighted average with respect to any weighting, (\ref{maxTrueRe}) is from $p_{\hat{\theta}(x^n)}(x^n)\geq p_{\theta}(x^n)$.

 Recall $H:=H(p_{\theta^*})$ as the entropy of the underlying model. We provide a positive lower bound, independent of $n$ for $\delta$ defined as follows:
\begin{equation}
\label{deltaAppros}
\delta :=H-\Upsilon(\ell^*).
\end{equation}
We assume that the underlying model is not the lowest entropy model in the class, i.e. $H>\min_{\theta\in\Theta}H(p_{\theta})$. Since $H(p_{\theta})$ is a continuous function of $\theta$ over a compact domain, $\min_{\theta\in\Theta}H(p_{\theta})$ is achieved for a model in the class, say $\theta_{min}\in\Theta$. We then have
\begin{align}
\delta &\geq H-H(p_{\theta_{min}}) \label{minHandtheta} \\
       &>0 \label{trueHandminH}
\end{align}
where (\ref{minHandtheta}) follows from (\ref{deltaAppros}) and since (\ref{iHaveto}) is true for any ${\theta}\in\Theta$ including $\theta_{min}$, (\ref{trueHandminH}) is from the assumption that $H>\min_{\theta\in\Theta}H(p_{\theta})$.

Similar to the Appendix \ref{app:fLipschitz}, one can show that $f_0(\ell)$ is a Lipschitz function of $\ell$ with Lipschitz constant $K_5>0$. For any $\ell\in\mathfrak{L}$ with $\|\ell-\ell^*\|\leq \frac{\delta}{2K_5}$, we have

\begin{align}
f_0(\ell)&\leq f_0(\ell^*)+K_5\cdot\frac{\delta}{2K_5} \label{f0Lipschits} \\
         &= \Upsilon(\ell^*)-\frac{d'}{2n}\log{(2\pi n)} +\frac{C}{n} + \frac{\delta}{2} \label{fNotDefinition} \\
         &= H-\delta -\frac{d'}{2n}\log{(2\pi n)} +\frac{C}{n} + \frac{\delta}{2} \label{ConsAbovEq} \\
         &< H-\frac{\delta}{3} \label{largeEnough}\\
         &<\gamma'_0 -\Delta \label{largeEnoughSecond} \\
         &\leq \lambda \label{rangeOfLambda}
\end{align}
where (\ref{f0Lipschits}) follows from the Lipschitzness of $f_0(\cdot)$ with Lipschitz constant $K_5$, (\ref{fNotDefinition}) is from (\ref{fNotContEqI},\ref{subnserf}), (\ref{ConsAbovEq}) is from the definition of $\delta$ in (\ref{deltaAppros}), (\ref{largeEnough}) holds for large enough $n$, and (\ref{largeEnoughSecond}) holds for large enough $n$, recalling the choices of $\gamma'_0$ in (\ref{gammaPrimeZeroEq}) and $\Delta=\frac{1}{n}$, and (\ref{rangeOfLambda}) is due to the range of $\lambda$. Therefore, from the definition of $\mathcal{K}_0$, we obtain the following relation
\begin{equation*}
\left\{\ell\in\mathfrak{L}:\|\ell-\ell^*\|\leq \frac{\delta}{2K_5}\right\}\subset \mathcal{K}_0.
\end{equation*}
Hence
\begin{align}
\text{Vol}(\mathcal{K}_0)&\geq \text{Vol}\left(\left\{\ell\in\mathfrak{L}:\|\ell-\ell^*\|\leq\frac{\delta}{2K_5}\right\}\right) \nonumber \\
                         &=C\left(\frac{\delta}{2K_5}\right)^{d'} \label{sphereSection} \\
                         &\geq C\left(\frac{H-H(p_{\theta_{min}})}{2K_5}\right)^{d'} \label{insertDFelra}
\end{align}
where (\ref{sphereSection}) is from the fact that the intersection of the sphere $\|\ell-\ell^*\|\leq \frac{\delta}{2K_5}$ and $\mathfrak{L}$ is independent of $n$ and only depends on the constellation of $\mathcal{L}$, and (\ref{insertDFelra}) is from (\ref{minHandtheta}).

Finally, since sphere has the smallest surface area among all shapes of a given volume, therefore a positive constant lower bound on $\text{Vol}(\mathcal{K}_0)$, implies a positive constant lower bound on $S(\mathcal{K}_0)$. More precisely, recalling the equations for the volume and the surface area of a $d'$-dimensional sphere \cite[Eq. 1.5.1]{ren}, we have
\begin{equation*}
S(\mathcal{K}_0)\geq C\left(\frac{H-H(p_{\theta_{min}})}{2K_5}\right)^{d'} \frac{2\sqrt{\pi}\Gamma\left(\frac{d'}{2}+1\right)}{\Gamma\left(\frac{d'+1}{2}\right)}.
\end{equation*}

\end{appendices}

{\bibliographystyle{IEEEtran}
\bibliography{reputation}}
\end{document}